\documentclass[prx,twocolumn,showpacs,amsfonts,amsmath,floatfix,superscriptaddress]{revtex4-2}

\usepackage{natbib}
\usepackage{amsthm, amsmath,amssymb}
\usepackage{graphicx}
\usepackage{dcolumn}
\usepackage{hyperref}
\usepackage{relsize}
\usepackage{soul}
\usepackage{array}
\usepackage{tabularx}
\hypersetup{
    colorlinks=true,
    linkcolor=blue,
    filecolor=blue, 
    citecolor=blue,
    urlcolor=blue,
}

\usepackage{bm}
\usepackage{xcolor}
\usepackage{booktabs}
\usepackage{multirow}
\usepackage{graphicx}
\graphicspath{%
  {figs/}%
  {plots/}%
  {plots1/}%
  {plots2/}%
}
\usepackage{scrextend}

\usepackage{caption}
\usepackage{subcaption}

\newtheorem{lemma}{Lemma}

\DeclareMathOperator{\csch}{csch}
\newcommand{\tr}{\mathrm{Tr}}

\newcommand{\cC}{\mathcal{C}}
\newcommand{\cD}{\mathcal{D}}
\newcommand{\cE}{\mathcal{E}}

\newcommand{\cR}{\mathcal{R}}
\newcommand{\cS}{\mathcal{S}}

\newcommand{\BS}{{\rm BS}}

\newcommand{\dagg}{\dagger}

\begin{document}
\title{Bias-Preserving Gates and Quantum Error Correction With Dual-Rail Cat Codes}

\author{Debjyoti Biswas}

\thanks{These authors have contributed equally}

\affiliation{ Department of Microtechnology and Nanoscience (MC2),Chalmers University of Technology, SE-412 96 Göteborg, Sweden}

\author{Nikhil Sharma}

\thanks{These authors have contributed equally}
\affiliation{ Department of Microtechnology and Nanoscience (MC2),Chalmers University of Technology, SE-412 96 Göteborg, Sweden}
\affiliation{Department of Physics, Indian Institute of Science Education and Research, Pune 411008, India}

\author{Alberto Salvador}
\affiliation{ Department of Microtechnology and Nanoscience (MC2),Chalmers University of Technology, SE-412 96 Göteborg, Sweden}

\author{Rui Wang}
\affiliation{ Department of Microtechnology and Nanoscience (MC2),Chalmers University of Technology, SE-412 96 Göteborg, Sweden}

\author{Mats Granath}
\affiliation{Department of Physics, University of Gothenburg, Sweden}

\author{Adithi Udupa}
\affiliation{ Department of Microtechnology and Nanoscience (MC2),Chalmers University of Technology, SE-412 96 Göteborg, Sweden}

\author{Giulia Ferrini}
\affiliation{ Department of Microtechnology and Nanoscience (MC2),Chalmers University of Technology, SE-412 96 Göteborg, Sweden}
\begin{abstract}
 
Scalable fault-tolerant quantum computation requires quantum error-correcting codes that simultaneously support universal logical operations, suppress hardware-specific noise, and enable efficient handling of photon-loss error. Bosonic encodings such as the dual-rail and cat codes each offer attractive features but also exhibit important limitations when used in isolation. The dual-rail code enables efficient single-photon loss detection by converting leakage from the computational subspace induced by the loss error into an erasure error. In contrast, the cat code provides a resource-efficient, bias-tailored error-correction scheme with bias-preserving logical gate operations.
Here, we introduce the dual-rail cat code (DRCC), a concatenated bosonic encoding
that combines an inner cat code with an outer dual-rail structure, thereby inheriting and enhancing the advantages of both constituent codes.
We analyse the error-correction properties of the DRCC and propose a deterministic single-photon loss correction protocol by concatenating it with an outer repetition code. 
Exploiting the code’s intrinsic noise bias, we construct arbitrary single-qubit logical rotations using only beam-splitter interactions and complement them with entangling two-qubit gates, thereby obtaining a universal logical gate set. We demonstrate that the proposed logical gates preserve the erasure-biased noise structure.
The DRCC offers several distinctive advantages, including the absence of relative geometric phases during gate operations, deterministic erasure correction with an outer repetition code, and simultaneous syndrome extraction without interrupting stabilisation. These features make the DRCC a promising bosonic code for hardware-efficient, bias-preserving, and erasure-resilient fault-tolerant quantum computation.

\end{abstract}
\maketitle

\section{Introduction}

Quantum error correction (QEC) is essential for protecting quantum information from decoherence and forms the foundation of fault-tolerant quantum computation (FT-QC). Conventional architectures based on two-level systems typically require substantial hardware overhead, as many physical qubits are needed to encode a single logical qubit. Bosonic quantum error-correcting codes address this challenge by encoding logical information in the infinite-dimensional Hilbert space of a harmonic oscillator, thereby exploiting the intrinsic redundancy of a single physical mode~\cite{GKP_2001, Arne_GKP_Puri}.

An important feature of many bosonic encodings is that the dominant physical noise processes induce highly biased logical noise, whereby one class of logical errors occurs much more frequently than others~\cite{Puri_2020}. Such bias can be exploited to design tailored fault-tolerant protocols with significantly reduced overhead~\cite{Tuckett2018UltrahighNoise, Ataides2021XZZX,Srivastava2022xyzhexagonal, dua2022clifford}. In particular, syndrome-extraction and gate operations can be engineered to preserve the underlying noise asymmetry, thereby preventing the conversion of dominant errors into depolarising noise.

The two-component cat code provides a prominent example of this paradigm. Owing to its strong bias towards a single type of Pauli error induced by single-photon loss, several bias-preserving logical gate constructions have been proposed, including bias-preserving entangling gates~\cite{A_Laverrier_prl,hillmann2023quantum, Puri_2020}. To suppress leakage outside the computational manifold induced by gate operation or single-photon gains, cat qubits are commonly stabilised using two-photon dissipation~\cite{xu2023autonomous, Puri_2020,vv_albert_njp}. {Recent work has shown that logical bit-flip errors in dissipatively stabilised cat qubits can be converted into detectable erasures}~\cite{Ferrari2026}.
Bias-preserving gate protocols have been developed in Kerr-cat, dispersive-cat, and squeezed-cat architectures~\cite{Puri_2020, PRX_mazyar_mirrahimi, Putterman_2025_aws_cat,hillmann2023quantum}. A common drawback of these approaches, however, is that logical gate operations generally accumulate relative geometric phases that require additional compensation~\cite{Puri_2020,bp_gates_zoller, PRX_mazyar_mirrahimi}.


A complementary approach is provided by dual-rail encodings, originally introduced in the context of optical quantum computation~\cite{Chuang_1995_dual_rail, Chuang_dual_prl, KLM,knill2003bounds} { and subsequently implemented in photonic~\cite{OBrien2003} and microwave~\cite{Yale_dual, Kubica_erasure_threshold, S_puri_erasure_threshold} quantum-information-processing experiments}. Dual-rail qubits allow conversion of single-photon loss (or amplitude-damping) events into detectable leakage or erasure errors~\cite{Kubica_2023}, thereby transforming one of the dominant error mechanisms in several experimental platforms into an error model that is substantially easier to decode. Since the locations of erasures are known, a distance-$d$ code can correct up to $d-1$ erasure errors~\cite{Kubica_2023}. Leakage-to-erasure conversion has recently been demonstrated  experimentally~\cite{Yale_dual}. Nevertheless, dual-rail protocols rely on accurate erasure identification and may suffer from false syndrome flagging~\cite{Yale_dual, Kubica_erasure_threshold, S_puri_erasure_threshold}. Furthermore, the bias-preserving properties of the logical controlled-$Z$-gate can be compromised when leakage occurs during, rather than before, the gate operation~\cite{mehta2025bias}.

In this work, we combine the advantages of both approaches by introducing the dual-rail cat code (DRCC), obtained by concatenating an inner cat code with an outer dual-rail encoding. DRCCs possess a rotational symmetry distributed across two bosonic modes and can therefore be viewed as a class of multimode rotationally symmetric bosonic (RSB) codes~\cite{multimode_RSB}. The dual-rail structure allows for converting single-photon loss errors into detectable leakage events, while the cat-code encoding retains the favourable biased-noise properties that enable efficient fault-tolerant operations.

The constituent even- and odd-parity cat states have been deterministically prepared and characterised in superconducting cavities since early circuit-QED experiments~\cite{vlastakis2013deterministically,Sun2014}. Subsequently, a two-mode cat state corresponding to a particular logical superposition within the DRCC-1 codespace was deterministically prepared and coherently characterised~\cite{Wang2016}. More recently, a closely related two-cavity cat Bell state was employed in a quantum-communication protocol on a superconducting-cavity platform~\cite{dural_Cat_bs_yale}. Although these experiments demonstrate the physical accessibility of states relevant to the DRCC-1 encoding, they do not constitute an implementation of DRCC-1 as a quantum error-correcting code, and its error-correction capabilities and fault-tolerant properties have not yet been systematically investigated. Here, we analyse the QEC properties of DRCCs using two-component cat codes, examining the Knill--Laflamme conditions and the resulting logical noise bias. Building on techniques developed for dual-rail encodings~\cite{Yale_dual},
we construct arbitrary single-qubit rotations using beam-splitter interactions
and complement them with entangling gates and leakage-detection protocols. We show that all logical gates preserve the erasure-biased noise structure and therefore remain compatible with bias-tailored fault-tolerant protocols. Unlike previous gate constructions for multimode RSB codes~\cite{multimode_RSB}, our approach does not require gate teleportation. We then provide a scalable QEC protocol by concatenating DRCCs with an outer repetition code and introduce a deterministic leakage-detection and correction scheme.

The DRCC framework possesses several distinctive features. First, syndrome measurements can be performed without interrupting the dissipative stabilisation of the cat-code manifold. Second, logical gate operations do not accumulate relative geometric phases, in contrast to existing single-mode and pair-cat implementations. Third, leakage detection and correction can be implemented deterministically while maintaining the favourable erasure-biased error model. Together, these properties establish DRCCs as a promising bosonic code for hardware-efficient, bias-preserving, and leakage-resilient fault-tolerant quantum computation.

Our protocols employ auxiliary two-level systems (i.e., qubits) for measurement and control,  compatibly with a broad range of experimental platforms, including superconducting circuits~\cite{Putterman_2025_aws_cat}, trapped ions~\cite{trapped_ion_bosonic_qubit}, and neutral-atom architectures~\cite{neutral_atm_read, Bohnmann_2025}. We also discuss the relationship between DRCCs and other bosonic code constructions, including dual-rail, but also multimode cat-codes such as pair-cat codes ~\cite{Albert_pair_cat} and repetition cat codes~\cite{PRX_mazyar_mirrahimi}.
Multimode cat codes, distinct from DRCC, were also considered in the context of all-optical schemes~\cite{optical_multi_Cat} and in connection to a group-theoretic construction in Ref.~\cite{A_Laverrier_prl}. Finally, we show that a DRCC based on four-component cat states can simultaneously correct single-photon loss and dephasing errors, further extending the error-correction capabilities of the framework.

The paper is organised as follows. In Sec.~\ref{sec:drcc}, we review multimode RSB codes and introduce the DRCC.  We then investigate the QEC properties of the two-component dual-rail cat code and demonstrate how photon-loss-induced Pauli noise and dephasing noise acting on the single-mode cat code are converted into detectable leakage, i.e., erasure errors, in Sec.~\ref{sec:QEC_properties}. Sec.~\ref{sec:QEC_properties} also discusses the noise bias of the dual-rail two-component cat code. We then outline possible dissipative mechanisms for stabilising the dual-rail cat code with inner two-component cat states in Sec.~\ref{sec:logical gates}, and using these results, we construct logical gates. 
In Sec.~\ref{sec:conc_code}, we consider the concatenation of the dual-rail two-component cat code and outer repetition codes. In Sec.\ref{se:general-comparison-with-DR}, we present a comparison between the properties of DRCCs and the standard dual-rail code, with respect to error correctability and noise-bias.  We present the numerical analysis of the concatenated DRCC in Sec.~\ref{sec:numerics}, and also compare it with cat repetition codes. In Sec.~\ref{sec:drcc2}, we analyse the Knill--Laflamme conditions and QEC properties of the DRCC with a four-component inner cat code. Finally, we conclude with a discussion and outlook in Sec.~\ref{sec:discussion}. 
We summarise the abbreviations used throughout this work in Table \ref{tab:ab_tab}.

\begin{table}[t!]
\centering
\caption{Table of abbreviation}
\begin{tabular}{c c c}
\midrule
Abbreviation& &  Definition \\
\midrule
QEC  & & Quantum Error Correction \\
FT-QC & &  Fault-Tolerant Quantum Computation\\
DRC & & dual-rail code\\
DRCC & & dual-rail cat code\\

    RSB code & & Rotation-symmetric bosonic code\\
    KL condition & & Knill-Laflamme condition \\
    BS & & Beam splitter \\
    BP && Bias-preserving \\
    GP & &  Geometric phase\\
    TMS && Two-mode squeezing\\
    CSS &&  Calderbank–Shor–Steane\\
    LDPC && Low density parity check \\
    \midrule
\end{tabular}

\label{tab:ab_tab}

\end{table}

\section{Dual-rail cat codes}\label{sec:drcc}

Rotation-symmetric bosonic (RSB) codes constitute a broad class of quantum error-correction codes characterised by discrete rotational symmetries in phase space~\cite{RSB_grimsmo_PRX}. Prominent examples include cat codes, binomial codes, number-phase codes, and squeezed Fock-state codes~\cite{sq_fock_state_QEC}. Among these, cat codes with an $N$-fold rotational symmetry and $2N$ coherent-state components have attracted significant attention because of their noise-bias properties and compatibility with bosonic hardware platforms~\cite{Putterman_2025_aws_cat,bp_gates_zoller}.

Recently, Ref.~\cite{multimode_RSB} extended a group-theoretic framework~\cite{A_Laverrier_prl} for constructing a broad class of multimode RSB codes. 
Following this framework, one can express the generic logical states of a two-mode RSB code as
\begin{align}\label{eq:m-RSB}
    |0_L\rangle
    &=
    \hat{U}_{\rm BS}(\delta,\phi)
    \sum_{mn}
    f_{mn}
    |2mN\rangle|(2n+1)N\rangle
   \nonumber\\
    &=
    \hat{U}_{\rm BS}(\delta,\phi)|\bar{0}_N\rangle,
    \\
    |1_L\rangle
    &=
    \hat{U}_{\rm BS}(\delta,\phi)
    \sum_{mn}
    f_{mn}
    |(2n+1)N\rangle|2mN\rangle \nonumber\\
   & =
    \hat{U}_{\rm BS}(\delta,\phi)|\bar{1}_N\rangle,
\end{align}
where $\hat U_{\rm BS}(\delta,\phi)$ denotes a beam-splitter (BS) unitary,
\begin{align}\label{eq:bs}
    \hat U_{\rm BS}(\delta,\phi)
    =
    \exp\!\left[
    -j\delta
    \left(
    \hat G_{12}^{+}\sin\phi
    +
    \hat G_{12}^{-}\cos\phi
    \right)
    \right],
\end{align}
where $j= \sqrt{-1}$ and operators $\hat G^{\pm}_{12}$ are defined as 
\begin{align}
    \hat{G}_{12}^{+}
    &=
    \hat{a}_1\hat{a}_2^{\dagger}
    +
    \hat{a}_1^{\dagger}\hat{a}_2,
    \\
    \hat{G}_{12}^{-}
    &=
    j(\hat{a}_1\hat{a}_2^{\dagger}
    -
    \hat{a}_1^{\dagger}\hat{a}_2),
\end{align}
which transforms the mode operators $(\hat{a}_1, \hat{a}_2)$ as
\begin{align}\label{eq:BS_transformation}
    \begin{pmatrix}
        \hat{b}_1\\
        \hat{b}_2
    \end{pmatrix}
  &  =\begin{pmatrix}
        \cos\delta & e^{-j\phi}\sin\delta\\
        -e^{j\phi}\sin\delta & \cos\delta
    \end{pmatrix}  \begin{pmatrix}
        \hat{a}_1\\
        \hat{a}_2
    \end{pmatrix}.
\end{align}
Here, $(\hat{a}_i,\hat{a}_i^{\dagger})$ with $i\in\{1,2\}$ denote the annihilation and creation operators for the two bosonic modes.

The code vectors $\{|\bar{0}_N\rangle,|\bar{1}_N\rangle\}$ in Eq.\eqref{eq:m-RSB} can be understood as the concatenation of an arbitrary single-mode RSB code  with a dual-rail code, defined as
\begin{align}\label{eq:dual_rail_stdrd}
    |\bar 0_{\rm dr}\rangle
    &=
    |01\rangle,
    \\
    |\bar 1_{\rm dr}\rangle
    &=
    |10\rangle.
\end{align}
Choosing
\begin{align}\label{eq:f_mn_cat}
    f_{mn}
    =
    e^{-2|\alpha|^2}
    \frac{
    \alpha^{2mN}\alpha^{(2n+1)N}
    }{
    \sqrt{((2n+1)N)!}\sqrt{(2mN)!}
    }
\end{align}
in Eq.~\eqref{eq:m-RSB} yields the dual-rail cat code with rotation symmetry order $N$, with $2N$ cat components in each mode (DRCC-$N$).

Although the beam-splitter operation $\hat{U}_{\text{BS}}$ mixing the two modes as in Eq.\eqref{eq:m-RSB} does not play any role in protecting against photon loss error, as expected in view of the properties of the loss channel \cite{Brod_2020}, it can in some cases improve correctability and QEC performance against dephasing noise~\cite{multimode_RSB}, which we will corroborate in Sec.\ref{sec:drcc2} for the dual-rail cat code with inner four components.

\subsection{Code vectors for DRCC-1}

The main focus of this work is to study the error-correction properties of the dual-rail cat code with $N=1$, hereafter referred to as the DRCC-1 (Fig.~\ref{fig:drcc_1}). Its logical basis states are defined as
\begin{align}\label{eq:DRCC-10}
    |\bar{0}\rangle
    &=
    |C^{+}\rangle|C^{-}\rangle,
    \\
\label{eq:DRCC-11}
    |\bar{1}\rangle
    &=
    |C^{-}\rangle|C^{+}\rangle,
\end{align}
where
\begin{align}\label{eq:cat_even_odd}
    |C^{\pm}\rangle
    =
    \frac{1}{N_{\pm}}
    (|\alpha\rangle \pm |-\alpha\rangle),
\end{align}
with normalisation constants
\begin{align}\label{eq:N_pm}
    N_{\pm}
    =
    \sqrt{
    2(1\pm e^{-2|\alpha|^2})
    }.
\end{align}
In the large-$\alpha$ limit,
\begin{align}\label{eq:Np-Nm}
    \frac{N_+}{N_-}
    \approx
    \frac{N_-}{N_+}
    \approx
    1.
\end{align}
In the DRCC-1, the code vector consists, therefore, of a superposition of only two coherent states of opposite amplitude (Eq.\eqref{eq:cat_even_odd}). Mixing them on a 50-50 beam splitter would result in a coherent component twice as large in one mode and a vacuum in the other, which is not useful for QEC, as we also explicitly show in Appendix \ref{sec:DRCC-bs} for dephasing. Hence, unlike the construction in Eq.~\eqref{eq:m-RSB}, we do not explicitly include the BS in the definition of DRCC-1.
\begin{figure}
    \centering
    \includegraphics[width=1\columnwidth]{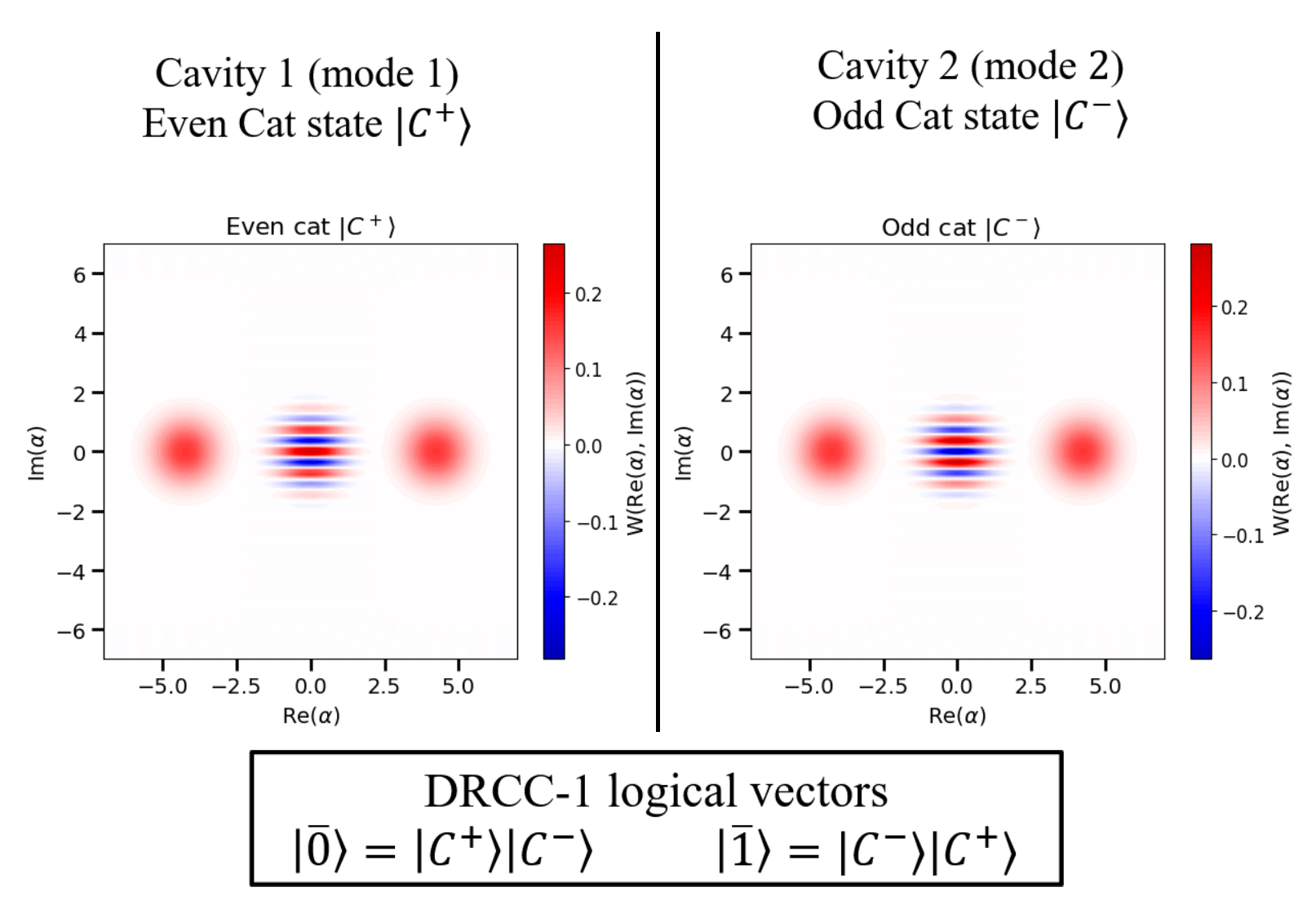}
   \caption{Wigner function of the even and odd cat states yielding the dual-rail cat code with $N=1$ (DRCC-1). The logical basis states are $|\bar{0}\rangle=|C^{+}\rangle|C^{-}\rangle$ (Eq.\eqref{eq:DRCC-10}) and $|\bar{1}\rangle=|C^{-}\rangle|C^{+}\rangle$ (Eq.\eqref{eq:DRCC-11}).}
    \label{fig:drcc_1}
\end{figure}

\subsection{Code vectors for DRCC-2}

Similarly, one can define the four-component dual-rail cat code corresponding to $N=2$ (DRCC-2). The code vectors for the DRCC-2 are given by 
\begin{align}\label{eq:DRCC-20}
    |0_4\rangle &= \hat{U}_{\rm BS}|C^{+}_4\rangle|C^{-}_4\rangle = U_{\rm BS}|\bar{0}_4\rangle,\\
\label{eq:DRCC-21}
    |1_4\rangle &= \hat{U}_{\rm BS}|C^{-}_4\rangle|C^{+}_4\rangle = U_{\rm BS}|\bar{1}_4\rangle,
\end{align}
where
\begin{align}
    |C^{\pm}_4\rangle
    &= \frac{1}{N_{4\pm}}
    \left(
    |\alpha\rangle + |-\alpha\rangle
    \pm (|j\alpha\rangle + |-j\alpha\rangle)
    \right),
\end{align}
with normalisation constants
\begin{align}
N_{4\pm }
= {\sqrt{\,4 + 4e^{-2|\alpha|^2} \pm 8e^{-|\alpha|^2}\cos(|\alpha|^2)\,}},
\end{align}
and
\begin{align}
|\bar{0}_4/\bar{1}_4\rangle
=
|C^{\pm}_4\rangle|C^{\mp}_4\rangle.
\end{align}
For DRCC-2, the presence of a beam splitter mixing the two modes can result in a QEC performance increase, as we will show in Section \ref{sec:drcc2} and in agreement with Ref.~\cite{multimode_RSB}.

\section{Knill--Laflamme Conditions and Noise Bias in the DRCC-1}
\label{sec:QEC_properties}

In this Section, we analyse the error-correction properties of DRCC-1 under photon-loss and dephasing noise. Photon loss is the dominant error channel in several bosonic hardware platforms, including superconducting cavities and microwave architectures~\cite{sc_qubit_adnoise}. The Kraus operator corresponding to $\ell$-photon loss in an $m$-mode system is
\begin{align}
    \hat{L}_{\ell}^{i}
    =
    \sqrt{
    \frac{(1-e^{-\kappa_i t})^{\ell}}{\ell!}
    }
    e^{-\kappa_i t \hat{n}_i/2}
    \hat{a}_i^{\ell},
\end{align}
where $\kappa_i t$ characterises the loss strength in the $i^{\rm th}$ mode. We also consider dephasing noise, whose Kraus operators are
\begin{align}
    \hat{D}_r^{i}
    =
    \sqrt{
    \frac{(\gamma_i t)^r}{r!}
    }
    e^{-\gamma_i t \hat{n}_i/2}
    \hat{n}_i^{r},
\end{align}
where $\gamma_i t$ denotes the dephasing strength in the $i^{\rm th}$ mode. As noted in Ref.~\cite{Albert_pair_cat}, $\hat{L}^i_0$ and the error operators $\{\hat{D}^i_r\}_{r=0}^{\infty}$ give rise to purely dephasing-type errors $\{\hat{n}^i_p\}_{p=0}^{\infty}$. In contrast, the operators $\{\hat{L}^i_{\ell}\}_{\ell \geq 1}$ describe $\ell$-photon loss from the $i$-th mode and are therefore referred to as loss errors.

To determine whether DRCC-1 corrects these errors, we check the Knill--Laflamme (KL) condition~\cite{KL-cond}. A code $\mathcal{C}$ with projector $\hat{P}$ corrects a set of errors $\{E_k\}$ if and only if
\begin{align}\label{eq:kl_cond}
    \hat{P}\hat{E}_k^{\dagger}\hat{E}_l\hat{P}
    =
    c_{kl}\hat{P},
\end{align}
for all $k,l$, where $c_{kl}$ forms a Hermitian matrix~\cite{chuang_nielsen}. In the following, we evaluate the projected operators $\hat{P}E_k^{\dagger}E_l\hat{P}$ for leading-order photon-loss and dephasing errors. If $\hat{P}\hat{E}_k^{\dagger}\hat{E}_l\hat{P}$ is  not proportional to $P$, it signifies a violation of the KL condition and  Eq.\eqref{eq:kl_cond} becomes 
\begin{align}\label{eq:non_KL}
   \hat{P}\hat{E}_k^{\dagger}\hat{E}_l\hat{P} & = c_{kl}\hat{P} + A_{X} \hat{X}_L+A_{Y} \hat{Y}_L+A_{Z} \hat{Z}_L,
\end{align}
where $\{\hat{X}_L,\hat{Y}_L,\hat{Z}_L\}$ are the logical Pauli operators, given for the DRCC-1 by
\begin{align}
    \hat{X}_L & = |\bar{0}\rangle\langle\bar{1}| + |\bar{1}\rangle\langle\bar{0}|\\
    \hat{Y}_L &=j|\bar{0}\rangle\langle\bar{1}| -j |\bar{1}\rangle\langle\bar{0}|\\
    \hat{Z}_L & = |\bar{0}\rangle\langle\bar{0}|- |\bar{1}\rangle\langle\bar{1}|.
\end{align}
 Furthermore, from Eq.\eqref{eq:non_KL}, we can infer the types of errors that are induced on the codespace by the physical noise process. For example, if $A_{X},A_{Y}$ are zero for some error operators $\{\hat{E}_k =\hat{I} , \hat{E}_l = \hat{O}\}$, we will conclude that the physical error $\hat{O}$ induces a logical $\hat Z$ error onto the logical qubit. If instead for some $\{\hat{E}_k =\hat I, \hat{E}_l = \hat{O}\}$, $A_{X}, A_{Y}, A_{Z}$ and $c_{kl}$ are zero, then we can conclude that the error $\hat{O}$ induces a detectable leakage which is also known as erasure~\cite{Yale_dual}. In the case where the dominant noise processes induce erasure errors,  we say that the code is biased towards the erasure error.

\subsection{Photon-loss errors}

We first analyse DRCC-1 under photon loss. 
We consider, in particular, single-photon loss errors. To leading order in the noise strengths $(\kappa_it)$, the relevant Kraus operators are

\begin{align}
   \hat{E}_0^L & = \hat{L}_0^1\hat{L}_0^2 \approx \hat I - \frac{1}{2}(\kappa_1t \hat{n}_1+\kappa_2t \hat{n}_2)\\
    \hat{E}_1^{L}
    &= \hat{L}_1^1\hat{L}_0^2\approx\sqrt{\kappa_1 t}\,\hat{a}_1,
    \\
    \hat{E}_2^{L}
    &= \hat{L}_0^1\hat{L}_1^2\approx\sqrt{\kappa_2 t}\,\hat{a}_2.
\end{align}

A single-mode two-component cat code cannot correct such first-order photon-loss errors because
\begin{align}
    \hat{a}|C^{\pm}\rangle
    =
    \alpha\frac{N_{\mp}}{N_{\pm}}
    |C^{\mp}\rangle.
\end{align}
Thus, a single-photon loss induces a logical bit-flip error in the single-mode cat-code basis. 

For DRCC-1, with projector
\begin{align}
   \hat{ \bar{P}}
    =
    |\bar{0}\rangle\langle\bar{0}|
    +
    |\bar{1}\rangle\langle\bar{1}|,
\end{align}
looking at $\hat E_k^{\dagger}\hat E_l$ gives us the following relations, to first order in $\kappa_it$: 
\begin{align}\label{eq:Pa_iP}
\hat{\bar{P}}\hat{a}_i
\hat{\bar{P}}&= 0\\
\label{eq:a1a2_w/obs}
    \hat{\bar{P}}\hat{a}_2^{\dagger}\hat{a}_1\hat{\bar{P}}
    &=
    A_X \hat{X}_L
    +
    jA_Y \hat{Y}_L,\\
    \label{eq:n1_w/obs}
    \hat{\bar{P}}\hat{a}_1^{\dagger}\hat{a}_1\hat{\bar{P}}
    &=
    A_0\hat{\bar{P}}
    +
    A_Z \hat{Z}_L,
    \\
\label{eq:n2_w/obs}
    \hat{\bar{P}}\hat{a}_2^{\dagger}\hat{a}_2\hat{\bar{P}}
    &=
    A_0\hat{\bar{P}}
    -
    A_Z \hat{Z_L},
\end{align}
where $A_{0},A_X,A_Y$ and $A_Z$ are given by
\begin{align}\label{eq:A_X}
    A_{0}= A_{X}
    &=
    \frac{|\alpha|^2}{2}
    \left(
    \frac{N_-^2}{N_+^2}
    +
    \frac{N_+^2}{N_-^2}
    \right) ,
    \\
\label{eq:A_Z}
    A_{Z}=A_{Y}
    &=
    \frac{|\alpha|^2}{2}
    \left(
    \frac{N_-^2}{N_+^2}
    -
    \frac{N_+^2}{N_-^2}
    \right).
\end{align}
In the large-$\alpha$ limit, $N_+/N_-\approx N_-/N_+\approx 1$
[see Eqs.~\eqref{eq:N_pm}--\eqref{eq:Np-Nm}]. Consequently,
$A_Z=A_Y\simeq-2|\alpha|^2e^{-2|\alpha|^2}\rightarrow 0$, whereas
$A_X\simeq|\alpha|^2$. Hence, from Eqs.~\eqref{eq:a1a2_w/obs}--\eqref{eq:n1_w/obs} we see that the error operators $\hat{a}_1,\hat{a}_2$ do not satisfy the KL condition even in the large $\alpha$ limit. Therefore, DRCC-1 cannot correct single-photon loss errors. 

 However, such errors are detectable by virtue of Eq.\eqref{eq:Pa_iP}. Indeed, a single-photon loss event induces detectable leakage into the codespace (see Fig.\ref{fig:DRCC_code_pic}, upper panel). In other words, unlike the single-mode cat code, where a single-photon loss error induces a logical $\hat{X}$ error,   with DRCC-1, the single-photon losses  map the logical states out of the DRCC-1 subspace into the even-parity leakage manifold  spanned by
\begin{align}\label{eq:cap_Phi}
    |\Phi^{\pm}\rangle
    =
    |C^{\pm}C^{\pm}\rangle.
\end{align}
This is analogous to the dual-rail code, where single-photon loss maps the code vectors onto a detectable leakage state, see  Fig.~\ref{fig:DRCC_code_pic}, lower panel (see also Sec.~\ref{se:general-comparison-with-DR} for a detailed comparison of the dual-rail code with DRCC-1). 
One can detect these leakage events using a joint parity check (JPC), as discussed in Ref.~\cite{Yale_dual}.
Although the JPC does not deterministically detect the cavity that suffered photon loss, it enables leakage detection and erasure flagging, which we later exploit when constructing concatenated QEC protocols in Sec.~\ref{sec:conc_code}.

For a detailed discussion on the higher-order photon loss, we refer to Appendix \ref{sec:noise_bias}.

\begin{figure}[t!]
    \centering
    \includegraphics[width=0.75\columnwidth]{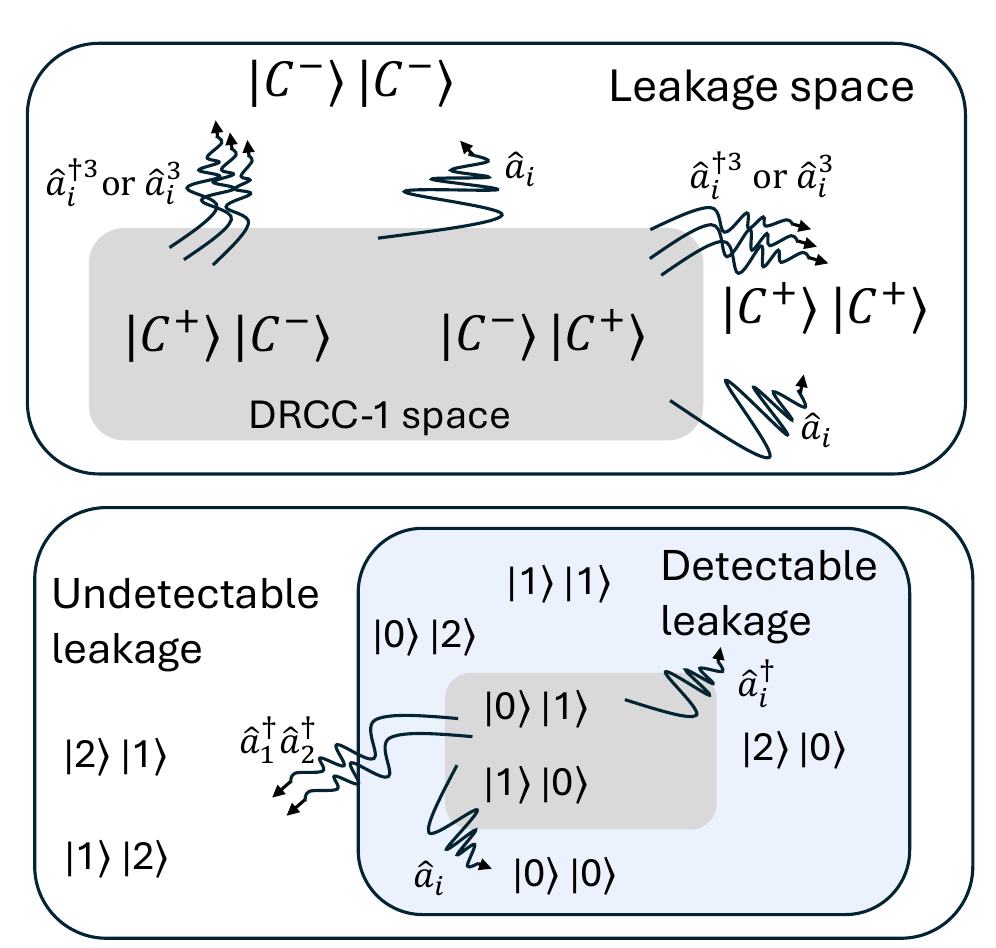}
    \caption{For DRCC-1 (upper panel), any single-photon loss or gain, as well as arbitrary higher-order photon-loss errors $\hat{a}_i^m$ or photon-gain errors ${\hat{a}_{i}^{\dagger m}}$ on a single mode $(m\geq 2)$, either map the logical states to the even-parity leakage subspace or leave them within the odd-parity codespace. For the conventional dual-rail code (lower panel), while single-photon loss or gain on either mode map the logical states to the even-parity leakage states $\{|00\rangle,|11\rangle,|02\rangle,|20\rangle\}$ (detectable as erasures), higher-order photon gain processes ${\hat{a}_{i}^{\dagger m}}$ ($m\geq 2$) can also populate other odd-parity states, such as $\{|12\rangle,|21\rangle\}$, leading to false-negative syndrome outcomes~\cite{Yale_dual}.   }
    \label{fig:DRCC_code_pic}
\end{figure}
\subsection{Dephasing errors}

Next, we analyse the dephasing channel. To first order in the noise strengths $(\gamma_it)$, the dephasing Kraus operators are 

\begin{align}\label{eq:d0}
    \hat{E}_0^{D}
    &= \hat{D}_1^0\hat{D}_2^0\approx
    I
    -
    \frac{t}{2}
    (\gamma_1 \hat{n}_1+\gamma_2 \hat{n}_2),
    \\
\label{eq:d1}    \hat{E}_1^{D}
    &=\hat{D}_1^1\hat{D}_2^0\approx
    \sqrt{\gamma_1 t}\,\hat{n}_1,
    \\
\label{eq:d2}    \hat{E}_2^{D}
    &=\hat{D}_1^0\hat{D}_2^1\approx
    \sqrt{\gamma_2 t}\,\hat{n}_2.
\end{align}

Therefore, analysing the effect of $\hat{E}^{D\dagger}_k\hat{E}_\ell^{D}$ onto the codespace up to the first order of the noise strength  $\gamma_it$, we obtain the following
\begin{align}
   \hat{ \bar{P}}\hat{n}_i\hat{\bar{P}}
    &=
    A_0\hat{\bar{P}}
    +
    (-1)^{1 \oplus i}A_Z\hat{Z}_L,\\
    \hat{\bar{P}}\hat{n}^2_i\hat{\bar{P}}
    &=
    (|\alpha|^4+A_0)\hat{\bar{P}}
    +
    (-1)^{1 \oplus i}A_Z\hat{Z}_L,\\
     \hat{\bar{P}}\hat{n}_2\hat{n}_1\hat{\bar{P}}& = |\alpha|^4\hat{\bar{P}}.
\end{align}
All the above equations show that the first-order dephasing errors do not satisfy the KL condition in the low-$\alpha$ limit.
We also note that first-order dephasing error on either cavity induces a logical $\hat{Z}$ error to the DRCC-1 qubits in the low $\alpha$ limit. However, as $
\alpha$ increases, the logical errors $\hat{Z}$ are being suppressed exponentially due to Eq.\eqref{eq:A_Z}.  

For the QEC properties under higher-order dephasing, see Appendix~\ref{sec:noise_bias}. 

\section{Universal quantum operations for the DRCC-1}
\label{sec:logical gates}

Achieving a universal set of quantum operations requires arbitrary single-qubit rotations implemented alongside at least one entangling two-qubit gate~\cite{DiVincenzo_2000}. Having previously characterised the noise properties and error-correction capabilities of the DRCC-1 under photon-loss and dephasing channels, we now turn to the construction of logical operations within the codespace defined by Eqs.~\eqref{eq:DRCC-10}--\eqref{eq:DRCC-11}. In this Section, we present the implementation of logical gates, analyse their noise-bias properties, and establish measurement protocols tailored for the DRCC-1. These implementations rely on fundamental circuit elements, with particular emphasis on beam-splitter interactions, number-operator couplings, and parity-sensitive measurements.

To construct the logical gates, we adopt the framework of Zeno dynamics restricted to the codespace, as outlined in Ref.~\cite{vv_albert_njp} for single-mode cat codes and similarly formalised in Refs.~\cite{Albert_pair_cat, PRX_mazyar_mirrahimi}. The central principle of Zeno-dynamics-assisted gate construction relies on evaluating whether the perturbative projection of an interaction Hamiltonian $\epsilon \hat H_I$, with $\epsilon$ being the strength of the perturbation,  onto the codespace, given by $\hat {\bar P} \hat H_I \hat {\bar P}$, generates a non-trivial logical operation of the form
\begin{align}
    \hat {\bar P} \hat H_I \hat {\bar P} = A_0 \hat {\bar P} + \sum_i A_i \hat {\Sigma}_i,  
\end{align}
where $\hat \Sigma_i \in \{\hat X_L,\hat Y_L,\hat Z_L\}$ and $A_i \in \{A_X, A_Y,A_Z\}$.  For example, consider the interaction Hamiltonian that comprises terms such as $\hat H_I \sim (\hat O_i^{\dagger}\hat O_j + \text{h.c.})$. The constituent operators $\hat O_i\hat O_j$ must violate the Knill-Laflamme condition; otherwise, $\hat H_I$ acts trivially as an identity operation on the codespace, failing to induce logical rotations.

However, the projected Hamiltonian $\hat{\bar P}\hat H_I\hat{\bar P}$ only describes the effective logical action within the codespace. The full evolution generated by $\hat{H}_I$, namely $\hat{U}_I=\exp(-i\hat{H}_I t)$, generally maps the logical vectors outside the codespace, leading to leakage during the gate evolution~\cite{vv_albert_njp}. 
Therefore, an additional mechanism is required to suppress these unwanted transitions and confine the dynamics to the codespace. Following the dissipative gate-construction framework of Refs.~\cite{Albert_pair_cat, PRX_mazyar_mirrahimi}, we achieve this confinement through continuous engineered dissipation, resulting in the Lindblad dynamics
\begin{align}
\hat {\dot{\rho}}
=-j \epsilon [\hat H_I,\hat \rho]+d_s \cD(\hat \rho),
\end{align}
where $d_s$ denotes the stabilisation rate and is assumed to satisfy $d_s \gg \epsilon$. In this regime, the dissipative dynamics continuously project the system back onto the codespace, thereby suppressing leakage through the quantum Zeno effect. For a more detailed discussion of dissipative Zeno dynamics, see Refs.~\cite{zeno_zanardi, Zeno_facci, symm_albert, uni_cont_noise, Albert_pair_cat, PRX_mazyar_mirrahimi}. We now discuss the dissipative stabilisation of the DRCC-1. After that, we will show how to engineer the logical gates utilising the Zeno dynamics discussed above, and finally discuss the bias-preserving properties of the gates.

\subsection{Dissipative stabilisation of DRCC-1}
\label{sec:dissipator}

 The dissipative stabilisation of bosonic codes has been extensively studied in the context of two-component cat codes~\cite{xu2023autonomous, BP_gates_liang_jiang_pair_cat, Puri_2020,vv_albert_njp}. Related developments include bias-preserving gates and autonomous stabilisation of squeezed cat codes~\cite{xu2023autonomous,hillmann2023quantum}, as well as dynamically protected logical operations in cat manifolds~\cite{PRX_mazyar_mirrahimi}. Here, we introduce a dissipative-stabilisation scheme for the DRCC-1.

The logical basis states of the DRCC-1 are simultaneous eigenstates of the operators $\hat a_i^2$ ($i\in\{1,2\}$), satisfying
\begin{align}\label{eq:cat_eig_state}
\begin{split}
     \hat{a}_i^{2}|C^+C^-\rangle
    &=
    \alpha^2 |C^+C^-\rangle,\\
    \hat{a}_i^{2}|C^-C^+\rangle
    &=
    \alpha^2 |C^-C^+\rangle.
\end{split}
\end{align}
Based on the above observation, we note that  the dissipator
\begin{align}\label{eq:disp_dyn_2}
    \mathcal{\hat{D}}[\hat{a}_1^2\hat{a}_2^2-\alpha^4],
\end{align}
stabilises the manifold 
\begin{align}\label{eq:cat_manifold}
\{
|C^{+}_{\alpha_1}C^{+}_{\alpha_2}\rangle,
|C^{-}_{\alpha_1}C^{-}_{\alpha_2}\rangle,
|C^{+}_{\alpha_1}C^{-}_{\alpha_2}\rangle,
|C^{-}_{\alpha_1}C^{+}_{\alpha_2}\rangle
\},
\end{align}
provided that $\alpha_1^2\alpha_2^2=\alpha^4$. Here, $\hat{\cD}[\hat A]$ denotes the Lindblad dissipator acting on a density matrix $\hat \rho$ as
\[
\hat \cD[\hat A](\hat \rho)
=
\hat A \hat \rho \hat A^{\dagger}
-\frac{1}{2}
\{\hat A^{\dagger}\hat A,\hat \rho\},
\]
with $\hat A=\hat a_1^2\hat a_2^2-\alpha^4$.

The same dissipator also stabilises the pair-cat code, as shown in Ref.~\cite{Albert_pair_cat}. To understand how it can stabilise the DRCC-1, we analyse the kernel of the corresponding jump operator. Any state $|\xi\rangle$ stabilized by the dissipation, also called a  dark state \cite{hillmann2023quantum}, must satisfy
\begin{align}\label{eq:diss_2_op}
    (\hat a_1^2 \hat a_2^2-\alpha^4)|\xi\rangle=0.
\end{align}
Expanding the state in the two-mode Fock basis,
\begin{align}
    |\xi\rangle
    =
    \sum_{m,n}
    C_{mn}|m\rangle|n\rangle,
\end{align}
and projecting Eq.~\eqref{eq:diss_2_op} onto the Fock basis yields the recurrence relation
\begin{align}\label{eq:reccurence}
    C_{m+2,n+2}
    =
    \frac{\alpha^4}
    {\sqrt{(m+1)(m+2)(n+1)(n+2)}}
    C_{m,n}.
\end{align}

Equation~\eqref{eq:reccurence} admits multiple classes of solutions. One family corresponds to the pair-cat code~\cite{Albert_pair_cat}. Another class of solutions  results in entangled states \footnote{\label{foot_1} Another class of solutions is obtained by choosing $C_{00}=C_{11}=0$, which leads to
\begin{align}\label{eq:new_dual}
    |\xi\rangle
    =
    \begin{cases}
        \displaystyle
        \sum_n
       h_{kn}
        |2n,2n+1\rangle,
        & C_{01}\neq 0,\; C_{10}=0,\\[2ex]
        \displaystyle
        \sum_n
        h_{kn}
        |2n+1,2n\rangle,
        & C_{01}=0,\; C_{10}\neq 0,
    \end{cases}
\end{align}
where $\hat H_{kn} =  \frac{\alpha^{4k}}
        {\sqrt{(2n)!(2n+1)!}}$. These states are distinct from the logical states of DRCC-1. Apart from one cavity containing even numbers, the other has odd numbers of photons; the DRCC-1, or, in general, multi-mode RSB code (Eq.~\eqref{eq:m-RSB}), exhibits no correlation between photon numbers in different modes, such that $n$ and $m$ in Eqs.~\eqref{eq:m-RSB} and \eqref{eq:DRCC-10} are independent. In contrast, the states in Eq.~\eqref{eq:new_dual} satisfy the constraint $m=2n+1$.}.
In order to obtain the logical state $|\bar 0\rangle$ of DRCC-1, we impose the following conditions to the coefficients of  Eq.~\eqref{eq:reccurence}:
\begin{itemize}
    \item[(i)] The coefficients satisfy
    \[
    C_{00}=C_{11}=0,
    \qquad
    C_{10}\neq0,
    \qquad
    C_{01}\neq0.
    \]

    \item[(ii)] Restricting the first mode to the even-parity sector and the second mode to the odd-parity sector reduces Eq.~\eqref{eq:reccurence} to
    \begin{align}\label{eq:mod_recc}
        C_{2m+2,2n+3}
        =
        \frac{\alpha^4\,C_{2m,2n+1}}
        {\sqrt{(2m+1)(2m+2)(2n+2)(2n+3)}}.
    \end{align}

    \item[(iii)] Together with the boundary conditions
    \begin{align*}
    \begin{split}
         C_{0,m}
        &\propto
        \frac{\alpha^m}{m!},\\
         C_{n,1}
        &\propto
        \frac{\alpha^{n+1}}{(n+1)!},
    \end{split}
    \end{align*}
    Eqs.~\eqref{eq:reccurence} and \eqref{eq:mod_recc} uniquely generate the logical state $|\bar 0\rangle$ defined in Eq.~\eqref{eq:DRCC-10}.
\end{itemize}

Similarly, the logical state $|\bar 1\rangle$ is obtained by exchanging the parity sectors of the two modes, such that the first mode contains only odd photon numbers and the second mode only even photon numbers. The corresponding boundary conditions are
\begin{align*}
\begin{split}
     C_{0,m}
    &\propto
    \frac{\alpha^{m+1}}{(m+1)!},\\
     C_{n,1}
    &\propto
    \frac{\alpha^n}{n!}.
\end{split}
\end{align*}

We therefore conclude that the DRCC-1 logical manifold can be realised as a particular dark subspace of the dissipator in Eq.~\eqref{eq:disp_dyn_2}, which will be adopted throughout the logical-gate constructions presented in the following Sections. Physical implementations of this dissipator can be realised using asymmetrically threaded SQUID architectures~\cite{maryam_diss}. In Appendix~\ref{sec:alt_diss}  we discuss alternative dissipative-stabilisation schemes for the DRCC-1. 

Having identified relevant stabilisation mechanisms, we can now discuss gate implementation.

\subsection{Single-qubit gates}\label{sec:1q-gates}

Here, we show that an arbitrary single-qubit rotation can be engineered using beam-splitter transformations, as for the standard dual-rail qubit Eq.~\eqref{eq:dual_rail_stdrd} discussed in Ref.~\cite{Yale_dual}. We first consider the Hamiltonian
\begin{align}\label{eq:gen_H}
     \hat{H}= K_{\rm BS}(\hat{a}_2\hat{a}_1^{\dagger}
    +
    \hat{a}_2^{\dagger}\hat{a}_1),
\end{align}
where $K_{\rm BS}$ is the strength of the beam-splitter interaction. The beam-splitter unitary $\hat U_{\BS}(\delta,\phi)$ in Eq.~\eqref{eq:bs} transforms the Hamiltonian in Eq.~\eqref{eq:gen_H} as
\begin{align}
\label{eq:BS-hamiltonian}
     \hat{ U}_{\BS}(\delta,\phi) \hat{H} \hat{U}_{\BS}(\delta,\phi)^{\dagger}& \equiv  \hat{\tilde{H}}(\delta,\phi)\\
    &=
   K_{\rm BS} \bigg[(\hat{a}_2^{\dagger}\hat{a}_1A_{\delta,\phi}+h.c.)
    \nonumber\\
    &\quad+
    (\hat{n}_2-\hat{n}_1)\sin2\delta\cos\phi\bigg],
\end{align}
where the coefficients $A_{\delta,\phi}$ depend on the beam-splitter parameters $(\delta,\phi)$ as follows: 
\begin{align}
    A_{\delta,\phi}&  = \cos^2\delta - e^{- 2j \phi}\sin^2\delta.
\end{align}

Projecting this Hamiltonian onto the DRCC-1 subspace yields
\begin{align}\label{eq:gen_gate}
    \hat{\bar{P}}\hat{\tilde{H}}(\delta,\phi)\hat{\bar{P}}
    &=
    2K_{\rm BS}(A_X{\rm Re}(A_{\delta,\phi})\hat{X}_L
   +
   A_Y {\rm Im}(A_{\delta,\phi})\hat{Y}_L
    \nonumber\\
    &\quad-
    A_Z\sin2\delta\cos\phi\, \hat{Z}_L).
\end{align}
Therefore, by tuning the beam-splitter parameters $(\delta,\phi)$, one can generate rotations about different axes on the logical Bloch sphere, as summarised in Table~\ref{tab:rot_gate_tab}.

\begin{table}[t!]
    \centering
    \begin{tabular}{c|c|c}
    \midrule
       Rotation gate  &  $\delta$ & $\phi$\\
       \midrule
       $\hat{R}_X$  & $\delta = 0$  & $\phi = \pi/2$\\
       $\hat{R}_Y$  & $\delta = \pi/2$  & $\phi = \pi/4$\\
       $\hat{R}_Z$  & $\delta = \pi/4$  & $\phi = 0$\\
       \midrule
    \end{tabular}
    \caption{Different beam-splitter angles of the unitary $\hat{U}_{\text{BS}}(\delta,\phi)$ generate different logical gates, corresponding to rotations about different axes on the DRCC-1 Bloch sphere as per Eq.(\ref{eq:gen_gate}). Here, $\hat{R}_X,\hat{R}_Y,\hat{R}_Z$ are the rotations around the logical $\hat{X},\hat{Y}$, and $\hat{Z}$ axes, respectively. }
     \label{tab:rot_gate_tab}
\end{table}
\subsubsection{Logical $X$ and $Z$-rotations}

For $\delta=0$ and $\phi= \pi/2$, the beam-splitter interaction acts as the identity and we retrieve Eq.(\ref{eq:gen_H}),
\begin{align}\label{eq:HX}
 \hat{H}_X
 &   =
    K_{\BS}
    \left(
    \hat{a}_2\hat{a}_1^{\dagger}
    +
    \hat{a}_2^{\dagger}\hat{a}_1
    \right).
\end{align} 
 Projecting Eq.~\eqref{eq:HX} onto the DRCC-1 subspace gives
\begin{align}
\label{eq:BS-Rx}
 \hat{\bar{H}}_X \equiv   \hat{\bar{P}}\hat{H}_X\hat{\bar{P}}
    =
    2K_{\BS}A_X\hat{X}_L.
\end{align}
Thus, this Hamiltonian generates a logical $\hat{X}$-axis rotation gate $\hat{R}_X(\theta_X) = \exp({-j \hat{\bar{H}}_X t}) = \exp({-j \theta \hat{X}_L})$, with 
\begin{align*}
    \theta_X = 2K_{\rm BS}A_Xt,
\end{align*}
and $t$ the interaction time.

A similar interaction has previously been employed to generate logical entangling operations for single-mode cat codes~\cite{Puri_2020,vv_albert_njp}.

Analogously, fixing $(\delta,\phi)$ in Eq.~\eqref{eq:gen_gate} to $(\pi/4,0)$ gives the Hamiltonian
\begin{align}\label{eq:H_Z}
   \hat{ H}_Z & = K_{\BS}(\hat{n}_1-\hat{n}_2).
\end{align} 
Using Eqs.~\eqref{eq:n1_w/obs}--\eqref{eq:n2_w/obs}, the projection of $\hat H_Z$ onto the DRCC space gives
\begin{align}\label{eq:BS_RZ}
  \hat{ \bar{H}}_Z \equiv  \hat{\bar{P}}\hat{H}_Z\hat{\bar{P}}
    &=
    K_{\BS}\hat{\bar{P}}(\hat{n}_1-\hat{n}_2)\hat{\bar{P}}
    \nonumber\\
    &=
    2K_{\BS}A_Z\hat{Z}_L.
\end{align}
Therefore, $\hat{H}_Z$ generates a logical $\hat{Z}$ rotation  gate $\hat R_Z(\theta_Z) = \exp({- j \hat{\bar{H}}_Z}t)= \exp(- j \theta_Z \hat{Z}_L)$ on the DRCC-1 Bloch sphere, with
\begin{align}
    \theta_Z & = 2A_ZK_{\rm BS}t .
\end{align}
We will later use this interaction for designing the joint parity measurement.
Ref.~\cite{Albert_pair_cat} uses a similar Hamiltonian in the context of syndrome measurement for pair-cat codes.  

The BS-based implementation, however, is not the only way to realise the logical $\hat R_X$ gate for the DRCC-1. An alternative approach is to engineer the logical $\hat{X}$ gate using the following two-mode squeezing (TMS) Hamiltonian, analogous to that used for the pair-cat code~\cite{Albert_pair_cat}:
\begin{align*}
\hat{H}_X^{\rm alt}
&=
J(\hat{a}_1\hat{a}_2 + h.c.),
\end{align*}
whose projection onto the DRCC-1 subspace yields
\begin{align*}
\hat{\bar{P}}\hat{H}_X^{\rm alt}\hat{\bar{P}}
=
2{\rm Re}(\alpha^2)J\hat{X}_L.
\end{align*}

Despite providing an alternative implementation of $\hat R_X(\theta)$, the TMS-based approach requires higher-order non-linear interactions. Consequently, it is generally more challenging to engineer than the BS-based $\hat R_X$ gate generated by the Hamiltonian in Eq.~(\ref{eq:BS-Rx}), which can be implemented using passive linear optical elements (see, e.g., Chapter~7 of Ref.~\cite{Gerry_Knight_2004}). We note that the $\hat R_X$ gate is trivially bias-preserving.

Finally, we note that although the order of rotational symmetry for DRCC-1, $N=1$, is odd, some of the gate constructions developed for even-$N$ two-mode RSB codes in Ref.~\cite{multimode_RSB} still apply to DRCC-1. In particular, the construction of the logical operator $\hat{Z}_L$ remains valid, as shown in Appendix~\ref{sec:proof_gate} (Lemma~\ref{lem:ZL_alt}). However, whereas the method in Ref.~\cite{multimode_RSB} requires gate teleportation to implement non-Clifford gates, our gate constructions are teleportation-free. 

{ In Appendix~\ref{sec:leak_app}, we quantify the off-codespace contribution for the logical gate Hamiltonians considered using the Hilbert--Schmidt norm. A high off-codespace contribution is associated with a slow gate time, and this is the case for the $\hat Z_L$ gate that we have introduced in Eq.\eqref{eq:H_Z}. A faster implementation of the $\hat Z_L$ gate (albeit not general $\hat R_Z(\theta)$ rotations) can be achieved by using the methods in Appendix \ref{sec:proof_gate}. A further fast implementation of $\hat R_Z(\theta)$ gates can also be achieved by using the parity check operation from  Ref.\cite{Puri_2020} together with an auxiliary qubit, as we discuss in Appendix \ref{sec:fast_Rz}. See also the extended discussion in the concluding section.}


\subsubsection{Erasure-bias preserving properties of single qubit gates}\label{sec:bpg_1q}

Here we discuss the effect of photon-loss errors during interaction $\hat{\tilde H}$ in Eq.\eqref{eq:BS-hamiltonian}. If a single-photon loss error occurs in one of the cavities during or before the beam-splitter operation, the beam splitter transforms the mode operators according to Eq.~\eqref{eq:BS_transformation}. After this transformation, the error remains a detectable leakage error since
\begin{align}
    \hat{\bar{P}} \hat{b}_i\hat{\bar{P}}& = 0. 
\end{align}
In other words, all detectable leakages, i.e., erasures, before the action of the BS, persist as erasures. 
Therefore, gates generated through BS operations preserve the erasure bias. 

\subsection{Two-qubit gates}\label{sec:two_qubit_gate}

To achieve universal quantum computation, along with the single-qubit gate, we need a two-qubit gate~\cite{DiVincenzo_2000}. Here, we discuss two relevant classes of two-qubit operations for DRCC-1: the controlled-$X$ gate ($CX$) and an $XX$-type entangling interaction.

We first consider the $CX$ gate, implemented via a controlled beam splitter (CBS) interaction, controlled on the cavity mode. The corresponding Hamiltonian is given by
\begin{align}\label{eq:H_cx}
    \hat{H}_{\rm CX}
    \approx
    g\,\hat{n}_2
    (\hat{a}_4\hat{a}_3^{\dagger}+\hat{a}_4^{\dagger}\hat{a}_3).
\end{align}

Therefore the action of $\hat H_{\rm CX}$ on the codespace of DRCC-1 is given by 
\begin{align}
    \hat{\bar{P}} \hat H_{\rm CX} \hat{\bar{P}} & = 2g (A_0\hat{\bar{P}} - A_Z \hat Z_L ) \otimes (A_X \hat X_L).
\end{align}

In the low $\alpha$ limit,  $\exp(- j  \hat{\bar{P}} \hat H_{\rm CX} \hat{\bar{P}} )$ hence generates a controlled-$\hat R_X$ gate up to a single qubit $\hat R_Z$ gate on the controlling mode.

Alternatively, universal computation can be achieved using entangling $XX$ interactions. For DRCC-1, we consider
\begin{align}\label{eq:H_XX}
    \hat{H}_{XX}
    =
    \chi
    (\hat{a}_1\hat{a}_2^{\dagger}+h.c.)
    \otimes
    (\hat{a}_3\hat{a}_4^{\dagger}+h.c.).
\end{align}
 Projecting onto the logical subspace gives 

\begin{align}
   \hat{\bar{H}}_{XX}= \hat{\bar{P}}^{\otimes2}\hat{H}_{XX}\hat{\bar{P}}^{\otimes2}&=4\chi\,A_X^2
    (\hat{X}_L\otimes \hat{X}_L).
\end{align}
Choosing $ t_{XX} = \frac{\pi}{16A_X^2\chi}$, the gate
\begin{align}\label{eq:XX_gate}
     X X (\theta)=e^{- j \hat{\bar{H}}_{XX}t_{XX}}
\end{align}
transforms the product state $|\bar{0}\bar{0}\rangle$ into the maximally entangled state
\begin{align*}
    \frac{|\bar{0}\bar{0}\rangle-j|\bar{1}\bar{1}\rangle
    }{\sqrt{2}}.
\end{align*}
 Similar two-qubit gates have been studied previously for pair-cat codes, albeit with distinct interactions.~\cite{Albert_pair_cat}.

The entangling interaction $\hat H_{XX}$ can be generated using the cross-Kerr interaction as follows 
\begin{align}
\hat{ \tilde H}_{XX}(\delta,\phi)
    & = \chi\hat{U}_{BS}^{\dagger}(\delta,\phi)
    (\hat{n}_1\otimes \hat{n}_3)
    \hat{U}_{BS}(\delta,\phi).
\end{align}
For $\delta=\pi/4$ and $\phi=0$, the transformed cross-Kerr interaction contains the desired $\hat{X}_L\otimes\hat{X}_L$ coupling after projection onto the DRCC-1 codespace.

Interactions of the form $\hat{H}\sim \hat{n}_1\hat{n}_3$ are instead useful for implementing logical $CZ$ gates in RSB encodings~\cite{RSB_grimsmo_PRX,multimode_RSB}. Similarly, quartic interactions of the form
\begin{align*}
  \hat H^{\rm alt}_{XX}
    \sim
    \hat{a}_1\hat{a}_2\hat{a}_3\hat{a}_4+h.c.
\end{align*}
generate effective $XX$ interactions for DRCC-1 qubits. We note that $\hat{ \tilde H}_{XX}(\delta,\phi)$ also generates the $\hat X\hat X$ interaction with pair-cat codes~\cite{Albert_pair_cat}. 

\subsubsection{Erasure-bias preserving properties}

We now discuss the noise-bias properties of the proposed two-qubit gates with respect to photon loss. We find that both the $CX$ gate generated by $\hat H_{CX}$ in Eq.~\eqref{eq:H_cx} and the $\hat X\hat X(\theta)$ gate generated by $\hat H_{XX}$ in Eq.~\eqref{eq:H_XX} preserve the erasure bias.

We first consider the $CX$ gate. The Hamiltonian in Eq.~\eqref{eq:H_cx} inherently preserves the erasure bias because it consists of a number operator acting on the control mode and a beam-splitter interaction acting on the target modes. Consequently, any leakage into the target modes occurring before or during the gate operation persists afterwards. Similarly, leakage events in the control mode remain leakage events after the gate operation since $[\hat a,\hat n]=\hat a$. We also note from \cite{multimode_RSB} that the $CZ$ gate for the DRCC-1 can be generated via $\hat H_{CZ} \sim \hat n_1 \hat n_3$. {Following the leakage-bias-preserving argument for the $CX$ gate, we conclude that the $CZ$ gate  preserves the leakage bias too, irrespective of whether the leakage occurs on the target or the control logical qubit.}

Next, we consider the gate generated by $\hat H_{XX}$ in Eq.~\eqref{eq:H_XX}. Since $\hat H_{XX}$ is entirely based on beam-splitter interactions, the argument follows directly from the discussion of erasure-bias preservation in Sec.~\ref{sec:bpg_1q}. We therefore conclude that the $\hat X\hat X(\theta)$ gate in Eq.~\eqref{eq:XX_gate} also preserves the erasure bias.

Since erasures occurring before or during either two-qubit gate persist throughout the evolution, they can subsequently be corrected using an erasure-correction protocol such as that of Ref.~\cite{Yale_dual}, or through a standard Pauli-correction procedure implemented with an outer repetition code, as discussed in Sec.~\ref{sec:conc_code}.
\subsection{Geometric phase}\label{sec:geom_phase}

During logical gate operations, cat-code vector states generally evolve along trajectories in phase space and accumulate phases associated with this evolution. One contribution is the geometric phase, which depends only on the path traversed in phase space and not on the rate at which the path is followed. For closed trajectories, this geometric phase reduces to the Berry phase~\cite{berry_phase}. The role of geometric phases in logical gate operations has been extensively studied for single-mode cat qubits, both in dissipatively stabilised and Kerr-cat architectures~\cite{PRX_mazyar_mirrahimi, Puri_2020}. In multimode cat codes, geometric phases have also been exploited to engineer logical gates; for example, Ref.~\cite{Albert_pair_cat} uses this mechanism to realise logical operations in the pair-cat code. In this Section, we first review geometric-phase accumulation in the single-mode cat code and then show that the logical states of any general dual-rail RSB code~\cite{multimode_RSB}, including the DDRC-1 and the DRCC-2, do not acquire a relative geometric phase under the gate operations considered here.

To illustrate the origin of the geometric phase, we consider a logical $Z$-rotation of a single-mode cat qubit generated by the unitary operator
\begin{align}\label{eq:time_ev_U}
\hat U_{Z}(\phi)& =  e^{j\hat n\phi(t)} ,
\end{align}
under which a coherent state evolves as
\[
|\alpha(t)\rangle
=
|\alpha_0 e^{j\phi(t)}\rangle.
\]
When $\phi(t)=\pi$, the evolution in Eq.~\eqref{eq:time_ev_U} implements a logical $Z$-rotation on the cat-qubit Bloch sphere. Starting from the logical state
\[
|\psi_c(0)\rangle
=
c_0|C^+\rangle
+
c_1|C^-\rangle,
\]
the evolution produces
\begin{align*}
|\psi_c(0)\rangle
\rightarrow
|\psi(t)\rangle
=
c_0 e^{j\Phi_s^+}
|C^+_{\alpha(t)}\rangle
+
c_1 e^{j\Phi_s^-}
|C^-_{\alpha(t)}\rangle,
\end{align*}
where $\Phi_s^\pm$ denote the geometric phases accumulated by the states $|C^\pm\rangle$. The geometric-phase $\Phi_s^\pm$ is given by 
\begin{align}\label{eq:GP_def}
\Phi_s^\pm &=j \int \ dt
\left\langle
C^{\pm}_{\alpha_0 e^{j\phi(t)}}
\left|
\frac{d}{dt}
\right|
C^{\pm}_{\alpha_0 e^{j\phi(t)}}
\right\rangle \nonumber \\
&=
j \int \ dt\left\langle
C^{\pm}_{\alpha_0 e^{j\phi(t)}}
\left|
j\dot{\phi}\,\hat n
\right|
C^{\pm}_{\alpha_0 e^{j\phi(t)}}
\right\rangle
\nonumber\\
&=
- \int \ d\phi
\left\langle
C^{\pm}_{\alpha_0 e^{j\phi(t)}}
\right|
\hat n
\left|
C^{\pm}_{\alpha_0 e^{j\phi(t)}}
\right\rangle .
\end{align}

Equation~\eqref{eq:GP_def} shows that the geometric phase accumulated along the trajectory is proportional to the mean photon number of the corresponding cat state. Since the even and odd cat states have different mean photon numbers,
\[
\langle \hat n\rangle_+
\neq
\langle \hat n\rangle_-,
\]
they accumulate different geometric phases during the evolution. Consequently, the logical basis states acquire a relative phase
\begin{align}\label{eq:relt_GP_phase}
\Delta\Phi
=
\Phi_s^+
-
\Phi_s^-
=
-\int d\phi\,
\left(
\langle \hat n\rangle_+
-
\langle \hat n\rangle_-
\right).
\end{align}

This relative geometric phase can be exploited to implement logical $Z$-rotations in single-mode cat qubits. For example, using the relative GP, we can design a universal gate set~\cite{Albert_2016}. However, the same mechanism can also contribute to an unintended relative phase during gate operations that involve motion in phase space, as noted in Refs.~\cite{Puri_2020, PRX_mazyar_mirrahimi}. In such cases, the accumulated geometric phase acts as an additional logical $Z$-rotation and must be accounted for or compensated during gate design~\cite{Puri_2020, PRX_mazyar_mirrahimi}.

The same consideration applies to the repetition-cat code. For the repetition cat-code vectors $\{|C^{+}C^{+}\rangle, |C^{-}C^{-}\rangle\}$, the evolution of the state $c_0 |C^{+}C^{+}\rangle+c_1 |C^{-}C^{-}\rangle$ in phase space accumulates a relative phase since
\begin{align*}
    \Delta \Phi &= \langle C^{+}C^{+}|\bar{n}|C^{+}C^{+}\rangle - \langle C^{-}C^{-}|\bar{n}|C^{-}C^{-}\rangle \nonumber\\
    & = 2|\alpha|^2A_Z.
\end{align*}
Thus, the gate Hamiltonians for the two-mode repetition cat code $\{|C^{+}C^{+}\rangle, |C^{-}C^{-}\rangle\}$ require an additional term to cancel the appearance of $\Delta\Phi$. In this regard, the code $\{|C^{+}C^{+}\rangle, |C^{-}C^{-}\rangle\}$ behaves similarly to the single-mode cat code with respect to GP accumulation.

In contrast, the DRCC cat-code vectors have identical mean photon numbers,
\begin{align}
\langle\bar{0}|\hat{\bar{n}}|\bar{0}\rangle = 2A_0 = \langle\bar{1}|\hat{\bar{n}}|\bar{1}\rangle,
\end{align}
where $\bar{n}$ is $n_1+n_2$ and $A_0$ is given by Eq.~\eqref{eq:A_X}. Consequently, any gate operation that generates a rotation of the logical states $\{|\bar{0}\rangle,|\bar{1}\rangle\}$ in phase space causes both logical vectors to accumulate the same GP. If we use a coherent-drive-induced $X$-axis rotation, it evolves an arbitrary logical state as 
\[
|\psi(t=0)\rangle
=
c_0 |C^{+}C^-\rangle
+
c_1|C^{-}C^+\rangle
\]
to
\begin{align}
    |\psi_t\rangle= c_0 e^{j \Psi_d^+}|C^+_{\alpha(t)}C^-_{\alpha(t)}\rangle + e^{j \Psi_d^+}c_1 |C^-_{\alpha(t)}C^+_{\alpha(t)}\rangle,
\end{align}
where $\Psi_d^+$ is a global phase with $\Psi_d^+ \propto |\alpha|^2A_0$. Thus, the evolved logical state under the $\hat R_X$ gate operation does not acquire any relative phase. Unlike the single-mode and repetition cat codes, the DRCC does not require additional correction terms to compensate for geometric-phase-induced relative phase accumulation.

The absence of relative GP accumulation is not unique to the DRCC-1. From the description of the general dual-rail RSB code in Eq.~\eqref{eq:m-RSB}, we note that the photon-number expectation value is identical for the two code vectors,
\begin{align}
    \langle 0_L|\bar{n}|0_L\rangle=\langle 1_L|\bar{n}|1_L\rangle = \sum_{mn}|f_{mn}|^2(2n+2m+1)N.
\end{align}

Thus, in general, none of the dual-rail RSB codes, including the DRCC-2, is affected by relative geometric phases during logical gate operations.
\subsection{Measurements}
\label{sec:meas}

A universal quantum computing architecture additionally requires state preparation and measurement capabilities~\cite{DiVincenzo_2000}. Having discussed logical gate operations for DRCC-1, we now describe protocols for logical readout and syndrome extraction. Here, we discuss two types of measurements: the joint-parity-check measurement, which we use for syndrome measurement, and the computational-basis measurement.

\subsubsection{Joint parity check}\label{sec:JPC}
For the syndrome detection of leakage errors with the standard dual-rail code, Ref.\cite{Yale_dual} uses a joint parity check scheme with an auxiliary (transmon) qubit and the following unitary operation: 
\begin{align}\label{eq:jpc_yale}
     \hat{U}_{\rm PC}
    &=\hat{I}\otimes|g\rangle\langle g|
    +e^{-j\pi(\hat{n}_1+\hat{n}_2)}
    \otimes |e\rangle\langle e|.
\end{align}
However, using this unitary would require stopping the stabilisation, even if we consider the dissipator $\hat \cD$ in Eq.\eqref{eq:disp_dyn_2}; only then can we perform the leakage detection with $\hat U_{\rm PC}$. To circumvent this issue, we suggest using the following unitary for leakage detection:
\begin{align}\label{eq:U_JPC}
    \hat{U}_{\rm JPC}
    &=
    \hat{I}\otimes|g\rangle\langle g|
    +
    e^{-j\pi(\hat{n}_1-\hat{n}_2)}
    \otimes
    |e\rangle\langle e|
    \nonumber\\
    &=
    \hat{I}\otimes|g\rangle\langle g|
    +
    \hat{\Pi}\otimes|e\rangle\langle e|,
\end{align}
where $\hat{\Pi}=e^{-j\pi(\hat{n}_1-\hat{n}_2)}$. From Table~\ref{tab:rot_gate_tab}, $\hat{Z}_L \propto (\hat{n}_1-\hat{n}_2)$ and therefore naturally commutes with $\hat \Pi$. Therefore, a logical $\hat{Z}$ error on the DRCC qubits commutes through, leaving the JPC interaction unchanged. In contrast, logical $\hat{X}$ errors anti-commute with $\hat{\Pi}$, giving
\begin{align}
    \hat{U}_{\rm JPC}
    (\hat{X}_L\otimes I_t)
    \hat{U}_{\rm JPC}
    &=
    \hat{X}_L\otimes|g\rangle\langle g|
    +
    \hat{\Pi} \hat{X}_L\hat{\Pi}
    \otimes
    |e\rangle\langle e|
    \nonumber\\
    &=
    \hat{X}_L\otimes\hat{\sigma}_Z,
\end{align}
where we used $\{\hat{X}_L,\hat{\Pi}\}=0$.

Thus, logical $\hat{X}$ errors induce phase flips on the auxiliary qubit. By preparing the auxiliary qubit in the $(|g\rangle+|e\rangle)$ basis and subsequently measuring it in the $\{|g\rangle\pm|e\rangle\}$ basis, one can detect effective logical $\hat{X}$ errors arising from correlated photon-loss events.

Similarly, if we have a bit-flip error in the auxiliary qubit, it propagates as
\begin{align}
    \hat{U}_{\rm JPC}
    (\hat{\bar{P}}\otimes\hat{\sigma}_X)
    \hat{U}_{\rm JPC}
    =
    \hat{\Pi}\otimes\hat{\sigma}_X,
\end{align}
where $\hat{\sigma}_X=|e\rangle\langle g|+ |g\rangle\langle e|.$ Since $(|g\rangle+|e\rangle)$ is an eigenstate of $\hat{\sigma}_X$, this type of error on the auxiliary qubit does not affect the measurement protocol. However, a phase-flip error affecting the auxiliary qubit can propagate to the DRCC qubits. Addressing the protection of the auxiliary qubit against phase-flip noise is left as an open direction for future work.

\begin{figure}[t!]
    \centering
    \includegraphics[width=1\columnwidth]{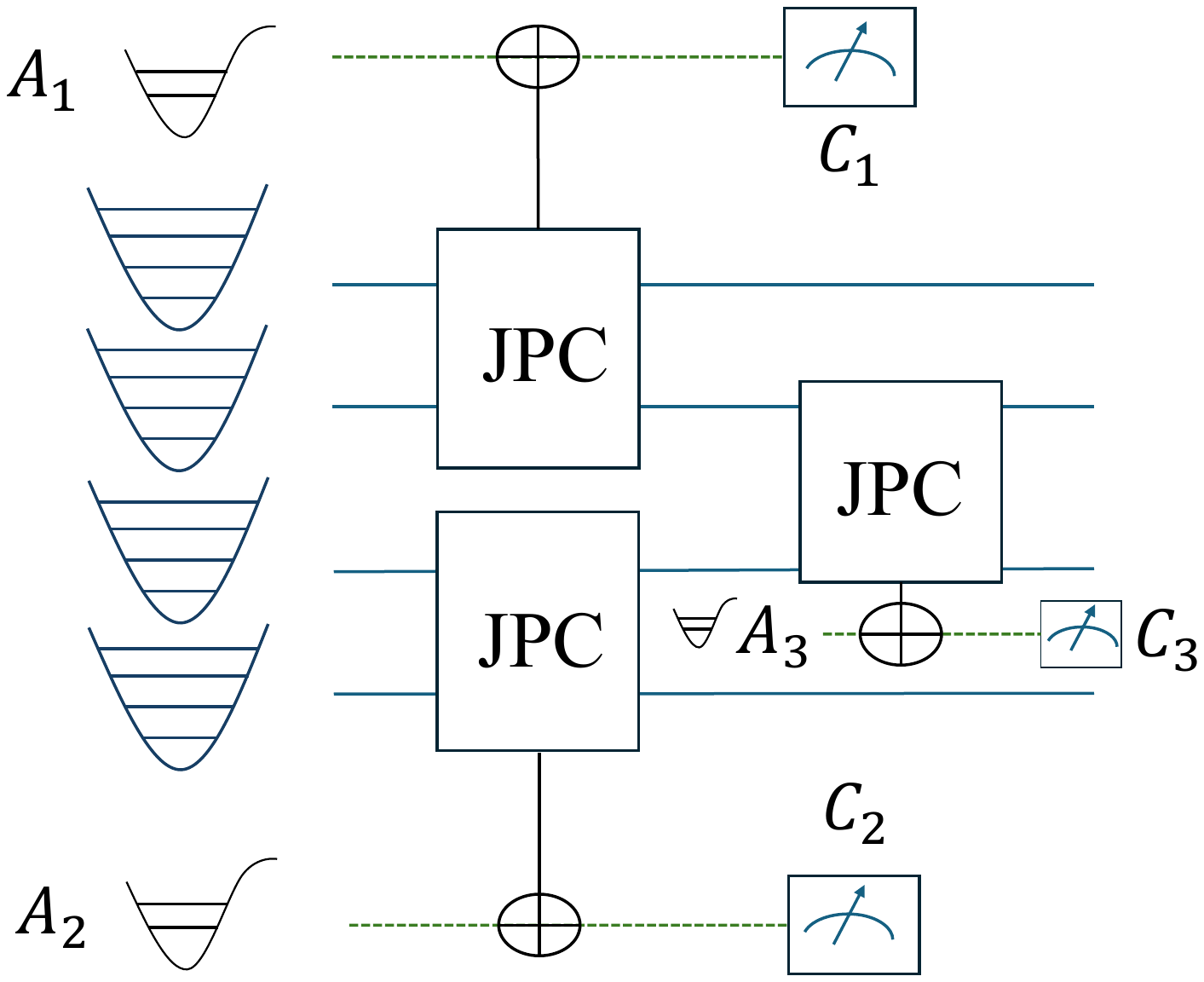}
    \caption{Schematic for syndrome measurement and error correction with two-qubit repetition code with DRCC-1. For the syndrome measurement, we use a two-level system as an auxiliary qubit. Note that the JPC operation does not introduce entanglement between the auxiliary two-level system and the DRCC-1.}
    \label{fig:DRCC_alt}
\end{figure}

\subsubsection{Computational basis measurement}

To complete the discussion on the universal control of DRCC-1 qubits, we discuss measurements in the computational ($Z$) basis, denoted by $\mathcal{M}_Z$.

 Following Ref.~\cite{practical_cat_XZZX_puri}, we first consider the resonator-assisted interaction Hamiltonian 
\begin{align}
   \hat{ H}_{M_Z}^{r}
    =  \frac{J}{2}
    (\hat{n}_1-\hat{n}_2)
    (\hat{a}_r+\hat{a}_r^{\dagger}),
\end{align}
where $\hat{a}_r$ is the annihilation operator of the auxiliary resonator. Using Eqs.~\eqref{eq:n1_w/obs}--\eqref{eq:n2_w/obs}, the projected interaction becomes
\begin{align}
    \hat{\bar{P}}\hat{H}_{M_Z}^{r}\hat{\bar{P}}
    \propto
    \hat{Z}_L(\hat{a}_r+\hat{a}_r^{\dagger}).
\end{align}
Thus, the resonator response depends on the logical state of the DRCC-1 qubit and can be detected via homodyne measurement~\cite{practical_cat_XZZX_puri}.

The structure of $\hat{H}_{M_Z}^{r}$ closely resembles the parity-measurement Hamiltonian used for pair-cat codes~\cite{Albert_pair_cat}. Importantly, this interaction also distinguishes the odd-parity DRCC-1 subspace from the even-parity leakage manifold spanned by $|C^{\pm}C^{\pm}\rangle$. Hence, it naturally enables a joint-parity check, similar to parity measurements in conventional dual-rail encodings~\cite{Yale_dual}.

Instead of an auxiliary resonator, one may also use an auxiliary qubit. The corresponding interaction Hamiltonian in this case is
\begin{align}
    \hat{H}_{M_Z}^{t}
    =
    \frac{J}{2}
    (\hat{n}_1-\hat{n}_2)\hat{\sigma}_Z,
\end{align}
where $\hat{\sigma}_Z$ acts on the auxiliary qubit. Similar oscillator-qubit couplings have previously been proposed for bosonic QEC architectures~\cite{Putterman_2025_aws_cat, Albert_pair_cat}.

Having discussed the required operations, we analyse, in the following Sections, a deterministic QEC protocol for DRCC-1 concatenated with an outer repetition code.

\section{Concatenation with repetition code}\label{sec:conc_code}

In the presence of single-photon losses, the single-mode two-component cat code exhibits a dominant logical noise channel together with weaker residual error components~\cite{Puri_2020}. Similar biased-noise behaviour has been reported for squeezed cat codes~\cite{hillmann2023quantum} and multimode cat codes such as the pair-cat code~\cite{Albert_pair_cat}. To suppress the residual errors, several works have explored concatenating cat codes with outer codes, including repetition, CSS, surface, Golay, and LDPC codes~\cite{Puri_2020, steane_cat, Putterman_2025_aws_cat, practical_cat_XZZX_puri, steane_cat, two_cat_ldpc, ruiz_ldpc_4cat}, as well as employing such concatenated architectures in quantum algorithms~\cite{126_133_cat}.

In this Section, we investigate the concatenation of the DRCC-1 with outer repetition codes. We first consider a two-qubit repetition code and show that, in the large-$\alpha$ limit, it is sufficient to correct the dominant single-mode single-photon loss errors. We then study a three-qubit repetition code and show that it can additionally correct logical $\hat X_L$ errors induced by joint photon-loss events associated with operators such as $\hat a_1\hat a_2$, which arise at second order in $\kappa t$. Finally, we generalise these results to an $n$-qubit repetition code.

\subsection{Error correction with DRCC-1 and outer two-qubit repetition code}\label{sec:two_rep_drcc}

From the QEC conditions in Sec.~\ref{sec:QEC_properties}, we see that with the DRCC-1 alone, we can detect single-photon loss events in the individual cavities, but we cannot distinguish which mode the error occurred in. Previous work on the standard dual-rail code~\cite{Yale_dual, Kubica_2023, Kubica_erasure_threshold} has discussed erasure correction. We show that a simple stabiliser-measurement-based QEC with a repetition code is sufficient to correct erasures caused by single-photon loss. 

 To correct the single-photon loss errors on the individual cavities, we concatenate the DRCC-1 with a two-qubit repetition code, yielding the code vectors
\begin{gather}\label{eq:two_reps}
    \begin{aligned}
     |\bar{0}_2\rangle & = |\bar{0}\bar{0}\rangle= |C^+C^-C^+C^-\rangle\\
    |\bar{1}_2\rangle & = |\bar{1}\bar{1}\rangle= |C^-C^+C^-C^+\rangle
\end{aligned}
\end{gather}
with stabilizer generator $\cS_2= {\hat{Z}_L\hat{Z}_L}$ and projector $\hat{\bar{P}}_2 = |\bar{0}_2\rangle\langle\bar{0}_2|+|\bar{1}_2\rangle\langle \bar{1}_2|$ onto the codespace.

The code ${|\bar{0}_2\rangle,|\bar{1}_2\rangle}$ satisfies the KL conditions for the first-order photon-loss errors $\hat{a}_i$s as follows:
\begin{align}\label{eq:kl1_rep_drcc}
    \hat{\bar{P}}_2\hat{a}_i^{\dagg}\hat{a}_i\hat{\bar{P}}_2 & =\begin{cases}
     (A_0 \hat{\bar{P}} + A_Z \hat{Z}_L) \otimes \hat{\bar{P}} & {\rm for \,} i=1\\
     (A_0 \hat{\bar{P}} - A_Z \hat{Z}_L) \otimes \hat{\bar{P}} & {\rm for \,} i=2
    \end{cases},\\
  \label{eq:kl2_rep_drcc}   \hat{\bar{P}}_2\hat{a}_i^{\dagg}\hat{a}_i\hat{\bar{P}}_2 & =\begin{cases}
   \hat{\bar{P}} \otimes  (A_0 \hat{\bar{P}} + A_Z \hat{Z}_L)  & {\rm for \,} i=3\\
    \hat{\bar{P}} \otimes  (A_0 \hat{\bar{P}} - A_Z \hat{Z}_L)& {\rm for \,} i=4,
    \end{cases}
\end{align}
where $A_0,A_Z$ are given by Eqs.\eqref{eq:A_X} and \eqref{eq:A_Z}. Moreover, for photon loss in different modes, we have
\begin{align}\label{eq:loss_rep_diff}
\begin{split}
     \hat a_1 |\bar{0}_2\rangle & \propto |\Phi_-\rangle|\bar{0}\rangle\\
   \hat a_1 |\bar{1}_2\rangle & \propto |\Phi_+\rangle|\bar{1}\rangle\\
   \hat a_2 |\bar{0}_2\rangle & \propto |\Phi_+\rangle|\bar{0}\rangle\\
   \hat a_2 |\bar{1}_2\rangle & \propto |\Phi_-\rangle|\bar{1}\rangle,
\end{split}
\end{align}
where $|\Phi_{\pm}\rangle$ are given in Eq.\eqref{eq:cap_Phi}. The states $|\Phi_{\pm}\rangle$ are orthogonal to each other and also orthogonal to the code vectors. Therefore, from Eq.\eqref{eq:loss_rep_diff}, we obtain $\hat{\bar{P}}_2 \hat a_1^{\dagger}\hat a_2\hat {\bar{P}}_2= 0 $. Similarly, one can verify that $\hat{\bar{P}}_2\hat{a}_i^{\dagger}\hat a_j\hat{\bar{P}}_2= 0$ for all $i \neq j$. Hence, the code in Eq.\eqref{eq:two_reps} approximately satisfies the KL condition for all first-order photon-loss errors. Consequently, the DRCC-1 concatenated with the two-qubit repetition code corrects all first-order photon-loss errors in the large $\alpha$ regime.

\begin{table}[t]
    \centering
    \caption{Syndrome table for the photon loss errors acting on the individual modes of the two-qubit repetition code with an inner DRCC-1 with the code vectors $|\bar{0}_2\rangle$ and  $|\bar{1}_2\rangle$  defined in Eq.\eqref{eq:two_reps}.}
    \begin{tabular}{c c c | c | c}
    \midrule
       $C_1$  & $C_2$  & $C_3$ & Error & Correction\\
       \midrule
       $e$  & $g$      & $g$ & $\hat a_1$ & $\hat X_c$ on the first mode \\
        &     &   & &\\
       $e$ & $g$    & $e$ & $\hat a_2$ & $\hat X_c$ on the second mode\\
       \midrule
        $g$  & $e$      & $e$ & $\hat a_3$ & $\hat X_c$ on the third mode\\
        &     &   & &\\
       $g$ & $e$  &   $g$ & $\hat a_4$ & $\hat X_c$ on the fourth mode\\
       \midrule
        $g$ & $g$  &  $g$ & No errors & Identity \\
        \midrule
    \end{tabular}
    
    \label{tab:syn_tab_alt}
\end{table}
We now focus on a simplified QEC protocol for syndrome extraction and feedback, shown schematically in Fig.~\ref{fig:DRCC_alt}. 
The QEC protocol proceeds as follows.

\begin{itemize}

\item [S1:] We begin with the state $|\psi_0 \rangle = c_0|\bar{0}\bar{0}\rangle+c_1|\bar{1}\bar{1}\rangle$.
\item[S2:] Suppose a single-photon loss error occurs on the $i{-\rm th}$ mode. After the error, the state becomes
\begin{align}
    |\psi_1\rangle &\sim \begin{cases}
        (c_0|\Phi_-\rangle|\bar{0}\rangle+ \coth |\alpha|^2c_1|\Phi_+\rangle|\bar{1}\rangle) & \mbox{if the error is } \, a_1\\
        (\coth |\alpha|^2c_0|\Phi_+\rangle|\bar{0}\rangle+ c_1|\Phi_-\rangle|\bar{1}\rangle) & \mbox{if the error is } \, a_2\\
        (c_0|\bar{0}\rangle|\Phi_-\rangle+ \coth |\alpha|^2c_1|\bar{1}\rangle|\Phi_+\rangle) & \mbox{if the error is } \, a_3\\
        (\coth |\alpha|^2c_0|\bar{0}\rangle|\Phi_+\rangle+ c_1|\bar{1}\rangle|\Phi_-\rangle) & \mbox{if the error is } \, a_4.
    \end{cases}
\end{align}

\item[S3:] \textit{Joint-parity check:} Since a single-photon loss maps the DRCC-1 states from the odd-photon-number-parity subspace to the even-photon-number-parity subspace, we perform a joint-parity check using an auxiliary qubit, following an approach analogous to that proposed in Ref.~\cite{Yale_dual}. The JPC unitary is defined in Eq.~\eqref{eq:U_JPC}.

Following the discussion in Sec.~\ref{sec:meas} and Ref.~\cite{Yale_dual}, each auxiliary qubit is first prepared in the state $|g\rangle-|e\rangle$ by applying a Hadamard gate to $|e\rangle$. In the absence of photon loss, the controlling oscillators remain within the odd-parity DRCC-1 subspace. The action of $\hat U_{\rm JPC}$ then transforms the auxiliary-qubit state from $|g\rangle-|e\rangle$ to $|g\rangle+|e\rangle$. A second Hadamard gate applied before readout maps this state to $|g\rangle$. Therefore, the measurement outcome $|g\rangle$ indicates that no photon loss has occurred.

In contrast, if a single-photon-loss error from the set ${\hat a_1,\hat a_2,\hat a_3,\hat a_4}$ occurs, the encoded state is transferred to an even-parity subspace. In this case, the auxiliary-qubit state $|g\rangle-|e\rangle$ remains unchanged under the action of $\hat U_{\rm JPC}$. The second Hadamard gate then maps this state to $|e\rangle$. Therefore, the measurement outcome $|e\rangle$ signals the occurrence of a photon-loss error. The corresponding measurement results are recorded in the classical registers $C_1$, $C_2$, and $C_3$.

We emphasise that higher-order photon-loss processes can produce false syndrome flags. Moreover, a phase-flip error on an auxiliary qubit can interchange the outcomes $|g\rangle$ and $|e\rangle$, resulting in either a false-positive or a false-negative syndrome.

\item[S4:] The outcomes $(C_1, C_2, C_3)$ determine which error among  $\{\hat a_1,\hat a_2,\hat a_3,\hat a_4\}$ has occurred, and the corresponding correction is applied according to Table~\ref{tab:syn_tab_alt}. Once we detect the photon loss, we apply the operation
    \begin{align}\label{eq:X_c}
        \hat X_c = |C_+\rangle\langle C_-|+|C_-\rangle\langle C_+|
    \end{align}
to the corresponding cavity.

The operator $\hat X_c$ corresponds to the logical $X$-gate for the single-mode cat code and the Hamiltonian $\hat H_c = J(\hat a+ \hat a^{\dagg})$~\cite{Puri_2020} generates the $\hat X_C$. The action of $\hat H_c$ on the single-mode cat-code manifold spanned by $\{|C^{\pm}\rangle\}$ is
    \begin{align*}
    \begin{split}
         \hat P_c\hat H_c\hat P_c &= J \hat P_c(\hat a+\hat a^{\dagg})\hat P_c\\
         & = 2J \mathbf{Re}(\alpha) \hat X_c, 
    \end{split}
        \end{align*}
where $\hat P_c = |C^{+}\rangle\langle C^{+}|+|C^{-}\rangle\langle C^{-}|$ is the projector onto the single mode cat-codespace. Therefore, the operation $\hat X_c$ is generated by evolving under $\hat H_c$ for a period of time
\begin{align*}
    t_{\hat X_c} \sim \frac{\pi}{4J\mathbf{Re}(\alpha)}.
\end{align*} 
\end{itemize}

We also note that our syndrome extraction scheme in Fig.\ref{fig:DRCC_alt} allows us to distinguish the loss events of two-modes $\{\hat a_i\hat a_j\}$ from the identity error, as summarised in Table~\ref{tab:syn_tab_alt_2}.

\begin{table}[t!]
    \centering
    \caption{Syndrome table for the photon loss errors acting on the two different modes of the two-qubit repetition code with an inner DRCC-1 with the code vectors $|\bar{0}_2\rangle$ and  $|\bar{1}_2\rangle$  defined in Eq.\eqref{eq:two_reps}.}
    \begin{tabular}{c c c | c }
    \midrule
       $C_1$  & $C_2$  & $C_3$ & Error \\
       \midrule
       $g$  & $g$      & $e$ & $\hat a_1\hat a_2$ \\
        $g$ & $g$  &  $e$ & $\hat a_3\hat a_4$  \\
      \midrule
        $e$  & $e$      & $g$ & $\hat a_1\hat a_4$ \\
       $e$ & $e$  &   $g$ & $\hat a_2\hat a_3$ \\
       \midrule
        $e$ & $e$  &  $e$ & $\hat a_2\hat a_4$  \\
        $e$ & $e$    & $e$ & $\hat a_1\hat a_3$ \\
        \midrule
    \end{tabular}
    
    \label{tab:syn_tab_alt_2}
\end{table}

Tables~\ref{tab:syn_tab_alt} and \ref{tab:syn_tab_alt_2} show that the syndromes associated with single-photon loss and two-photon-loss errors are disjoint. However, two-photon-loss events cannot be corrected since, as shown in Table~\ref{tab:syn_tab_alt_2}, the corresponding syndromes do not allow us to distinguish which pair of cavities suffered the photon loss. An alternative approach  for correcting single-photon loss with the DRCC-1 concatenated with an outer two-qubit repetition code is discussed in Appendix~\ref{sec:alt_qec}.

To estimate the logical error probability for the two-qubit repetition code concatenated with the DRCC-1 in Eq.~\eqref{eq:two_reps}, we identify the lowest-order uncorrectable error processes:
\begin{enumerate}

\item[(a)] \textit{Simultaneous leakage in two out of four modes:}
Simultaneous photon-loss events on multiple modes are not distinguishable within the two-qubit repetition code. Although we can detect all these events, the affected pair of modes cannot be uniquely identified, thereby preventing deterministic recovery using the Petz map or other optimal recovery procedures. The probability of such events is
\begin{align}
p_{\text{leak}} = 4\kappa^2A_X^2t^2,
\end{align}
where we assume identical noise strengths across all modes.

\item[(b)] \textit{Logical bit-flip errors:}
Logical bit-flip errors arise from simultaneous photon-loss events on adjacent modes, generated by operators such as $\hat a_1\hat a_2$ or $\hat a_3\hat a_4$. The corresponding probability is
\begin{align}
p_{\rm bit-flip} = 2 A_X^2 \kappa^2t^2.
\end{align}

\item[(c)] \textit{Residual logical $\hat Z$ and $\hat Y$ errors:}
The QEC against photon loss itself generates a dephasing error, which can be understood as follows. The recovery operator of the Petz map \cite{liang_jiang_near_opt} recovering against a single-photon loss is given by 
\begin{align}
    \hat R_k &=\sum\limits_{\mu ,\nu, l} (M^{-1/2})_{[\mu k],[\nu l]}|\mu\rangle\langle\nu| \hat L_k^{\dagger} \nonumber\\
    &\approx\sqrt{\kappa_kt}  \sum\limits_{\mu ,\nu, l} (M^{-1/2})_{[\mu k],[\nu l]}|\mu\rangle\langle\nu| \hat a_k^{\dagger}.
\end{align}
When we apply this recovery to correct the single-photon loss on the $k^{th}$-mode, we obtain the following state:
\begin{align}
\begin{split}
    \hat R_k\hat L_k|\bar{m}\rangle & \approx \kappa_k t  \sum\limits_{\mu,\nu, l} (M^{-1/2})_{[\mu k],[\nu, l]}|\bar \mu\rangle\langle\bar \nu| \hat a_k^{\dagger} a_k |\bar{m}\rangle\\
   &  =\sum\limits_{\mu,\nu, l}(M^{-1/2})_{[\mu k],[\nu l]}M_{[\nu k], [m k]}|\bar \mu\rangle.
\end{split}
\end{align}
Now, from Eq.\eqref{eq:kl1_rep_drcc}-\eqref{eq:kl2_rep_drcc} we have 
\begin{align}
     M_{[\nu k],[m k]} & = \begin{cases}
       \kappa_kt ( A_0\bar   + A_Z )& \nu = m=0 \\
       \kappa_kt ( A_0\bar   - A_Z ) & \nu= m =1\\
        0 & \nu \neq m .
    \end{cases}
\end{align}
\begin{figure}
    \centering
    \includegraphics[width=1\columnwidth]{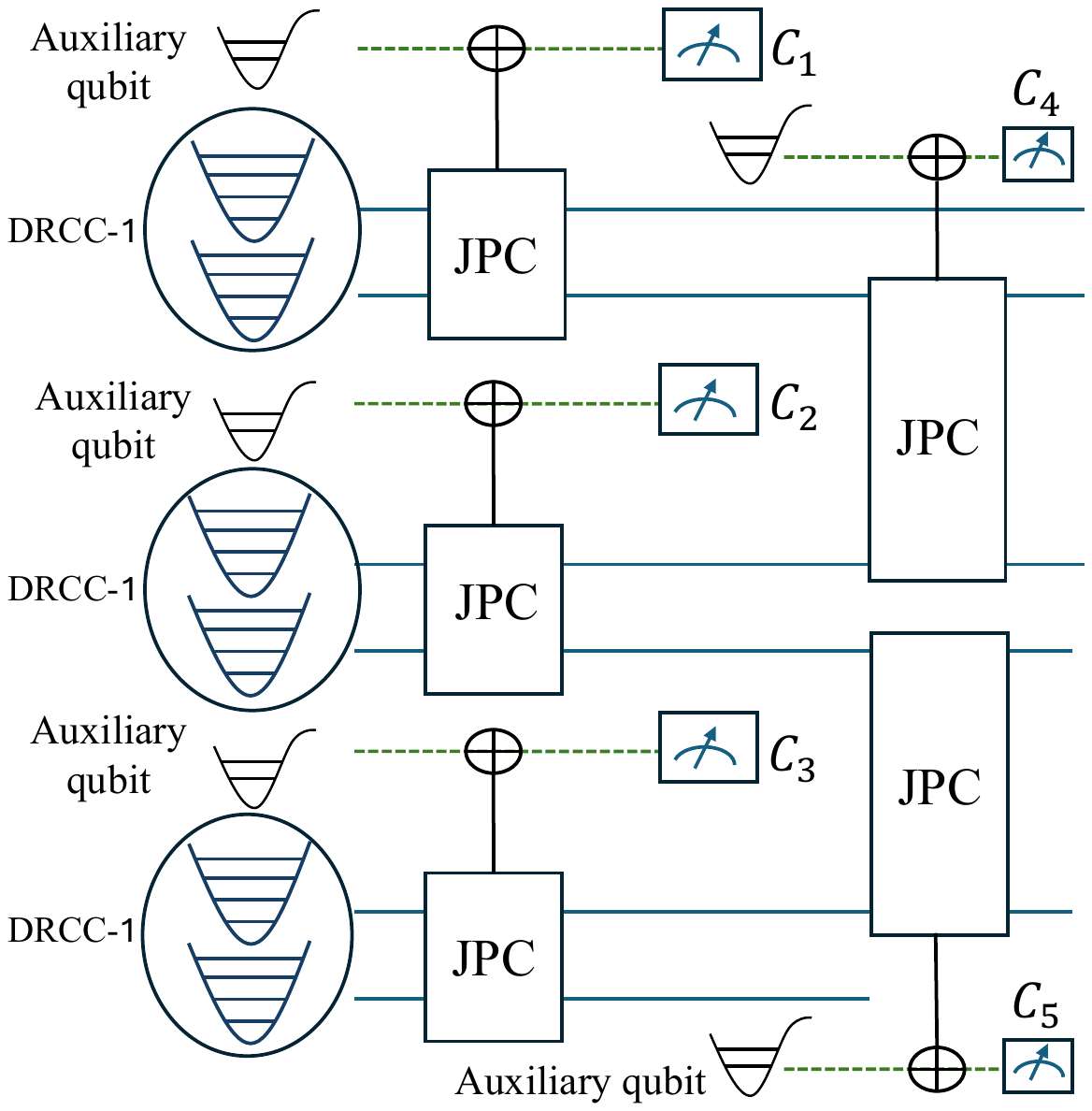}
    \caption{Syndrome measurement scheme with the three-qubit repetition code with an inner DRCC-1. }
    \label{fig:three_reps_syn}
\end{figure}
Using the above equation, we calculate the recovered state as
\begin{align}
    \hat R_k\hat L_k|\bar{m}\rangle 
   & = \sqrt{\kappa_kt} \left(\sum (\sqrt{M})_{[mk],[mk]}|\bar m\rangle +|\psi_k\rangle \right)\\
   &  =\sqrt{ \kappa_kt} \left(\sum (A_0 +(-1)^m A_z)|\bar m\rangle + |\psi_k\rangle\right ),
\end{align}
where $|\psi_k\rangle=  \sum\limits_{k \neq l} M_{[mk],[ml]}|\bar m\rangle$. 
Thus, we see the correction itself introduces a dephasing error. The leading-order contribution of such errors is
\begin{align}
p_{\rm other} = 4 A_Z^2 \kappa t.
\end{align}

\end{enumerate}

An upper bound to the total probability of uncorrectable logical errors, up to second order in photon loss, can therefore be estimated as
\begin{align}
p_{\rm logical} \sim p_{\text{leak}} +p_{\rm bit-flip}+p_{\rm other}.
\end{align}
Fig.~\ref{fig:upper_bound_logical} shows the dependence of $p_{\rm logical}$ on $\alpha$ and the noise strength.

\subsection{Error correction with DRCC-1 and outer three-qubit repetition code}

To correct the logical bit-flip errors arising from two-photon-loss events, we use the three-qubit repetition code
\begin{gather}\label{eq:three-reps}
\begin{aligned}
    |\bar{0}_3\rangle & = |\bar{0}\bar{0}\bar{0}\rangle\\
     |\bar{1}_3\rangle & = |\bar{1}\bar{1}\bar{1}\rangle,
\end{aligned}
\end{gather}
with stabiliser generator $\cS_3 = \langle \hat Z_L\hat Z_L\hat I, \hat I\hat Z_L\hat Z_L\rangle$ and projector
\begin{align}
    \hat P_3=|\bar{0}_3\rangle\langle\bar{0}_3|+|\bar{1}_3\rangle\langle \bar{1}_3|
\end{align}
onto the codespace.

The correction protocol follows steps analogous to those discussed for the two-qubit repetition code. In Appendix~\ref{sec:three-qubit_rep_qec}, we show that the three-qubit repetition code with the inner DRCC-1 corrects all single-photon loss errors as well as the two-photon-loss events. The corresponding syndrome-extraction procedure is shown in Fig.~\ref{fig:three_reps_syn}, and the complete syndrome table is given in Table~\ref{tab:three_reps_syn}.

\begin{table}[t!]

    \centering
    \begin{tabular}{c c c c c |c}
    \midrule
         $C_1$& $C_2$ & $C_3$ & $C_4$ & $C_5$ & Errors \\
         \midrule
        $e$  & $g$  & $g$ & $g$ & $g$ & $\hat a_1$\\
         $e$  & $g$  & $g$ & $e$ & $g$ & $\hat a_2$\\
          $g$  & $e$  & $g$ & $e$ & $g$ & $\hat a_3$\\
           $g$  & $e$  & $g$ & $g$ & $e$ & $\hat a_4$\\
           $g$ & $g$ & $e$ & $g$  & $e$ & $\hat a_5$\\
           $g$ & $g$& $g$& $g$& $e$ &$\hat a_6$ \\
           $g$ & $g$& $g$& $e$& $g$ &  $\hat{a}_1\hat{a}_2$\\
            $e$& $e$& $g$& $e$& $g$ &  $\hat{a}_1\hat{a}_3$\\
           $e$ & $e$& $g$& $g$& $e$ &  $\hat{a}_1\hat{a}_4$\\
           $e$ &$g$ & $e$& $g$& $e$ &  $\hat{a}_1\hat{a}_5$\\
           $e$ & $g$& $e$& $g$&  $e$&  $\hat{a}_1\hat{a}_6$\\
           $e$ & $e$ & $g$& $g$& $g$ &  $\hat{a}_2\hat{a}_3$\\
           $e$& $e$& $g$& $e$& $e$ &  $\hat{a}_2\hat{a}_4$\\
           $e$& $g$& $e$& $e$& $e$ &  $\hat{a}_2\hat{a}_5$\\
           $e$& $g$& $e$& $e$&  $g$&  $\hat{a}_2\hat{a}_6$\\
           $g$& $g$& $g$& $e$& $e$ &  $\hat{a}_3\hat{a}_4$\\
           $g$& $e$& $e$& $e$& $e$ &  $\hat{a}_3\hat{a}_5$\\
           $g$& $e$& $e$& $e$&  $g$&  $\hat{a}_3\hat{a}_6$\\
           $g$& $e$& $e$& $g$&  $g$&  $\hat{a}_4\hat{a}_5$\\
           $g$& $e$& $e$& $g$&  $e$&  $\hat{a}_4\hat{a}_6$\\
           $e$& $e$& $e$&$g$ &  $e$&  $\hat{a}_5\hat{a}_6$\\
           \bottomrule
    \end{tabular}
    \caption{Syndrome table for the three-qubit repetition code with an inner DRCC-1 code (Eq.~\eqref{eq:three-reps}).}
    \label{tab:three_reps_syn}
\end{table}

Unlike the two-qubit repetition code, which corrects only first-order photon-loss events, the three-qubit repetition code can correct loss errors up to second-order. These include logical bit-flip errors within the codespace as well as double-leakage events affecting two different qubits.

Consequently, the lowest-order uncorrectable logical errors consist of the following:
\begin{enumerate}
\item[(i)] three simultaneous leakage events ($p_{\rm leak_3}$) $\sim \hat a_i\hat a_{i+1}\hat a_{i+2}$,

\item[(ii)] four simultaneous leakage events ($p_{\rm leak_4}$) $\sim \hat a_i\hat a_{i+1} \hat a_j\hat a_{j+1}$,
\item[(iii)] Similar to the two-qubit outer code, the QEC introduces a residual logical $\hat Z$-type errors arising from dephasing. 
\end{enumerate}

Following the analysis presented in Sec.~\ref{sec:numerics}, the leading-order upper bound on the logical error probability has the following form
\begin{align}
p_{\rm logical}
&\sim p_{\rm leak_3}+p_{\rm leak_4}+p_{\rm other}.
\end{align}

Substituting the corresponding leading-order contributions, we obtain
\begin{align}
p_{\rm logical}&=15 \left(A_X^2 \kappa^2 t^2\right)^2+20A_X^3 \kappa^3 t^3+6A_Z^2 \kappa t.
\end{align}

Here, the numerical prefactors arise from the different combinatorial ways in which the corresponding error processes can occur. The logical error ($p_{\rm logical}$) as a function of the cat amplitude $\alpha$ is shown in Fig.\ref{fig:upper_bound_logical}.

\begin{widetext}
    \begin{figure*}
        \centering
        \includegraphics[width=\linewidth]{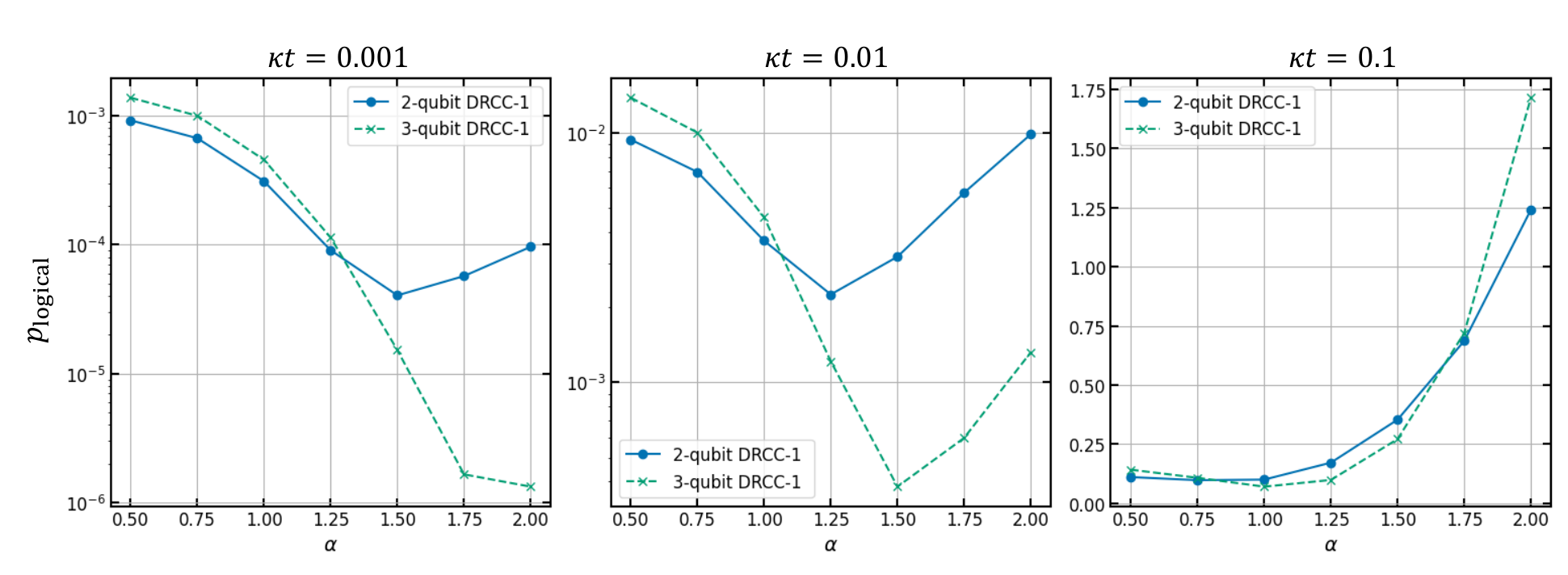}
        \caption{Upper bound of the probability of logical error for two- and three-qubit repetition code with inner DRCC-1, as a function of the coherent state amplitude in different noise strength regimes. The optimal regime, corresponding to the minimum value of $p_{\rm logical}$, occurs in the range $1\leq\alpha\leq 1.75$ for different values of $\kappa t$ considered. Furthermore, the location of the optimal point depends on the loss strength $\kappa t$.}
        \label{fig:upper_bound_logical}
    \end{figure*}
\end{widetext}

\subsection{Quantum error correction with outer $n$-qubit repetition code}

Having analysed the QEC performance of the DRCC-1 concatenated with two- and three-qubit repetition codes under photon-loss, we now consider a general $n$-qubit repetition code as the outer code. Our goal is to determine how the error-correction capabilities scale with $n$ and whether the resulting concatenated code can correct a broader class of errors than the standard dual-rail code.

 As discussed in Sec.~\ref{sec:two_rep_drcc}, the two-qubit repetition code corrects all first-order photon-loss events and detects all second-order photon-loss events through distinct syndromes. However, because the syndrome information does not identify which DRCC-1 block experienced the leakage, errors of the form $\hat a_i\hat a_j$ remain detectable but uncorrectable. Therefore, correcting such second-order loss events requires an additional DRCC-1 block.

This observation generalises to larger repetition codes. An $n$-qubit repetition code, requiring $2n$ cavities, can correct all $(n-1)$-th order photon-loss errors. In other words, correcting $t-1$ erasures with the DRCC-1 requires at least a $t$-qubit outer repetition code, corresponding to a code distance $d=t+1$.

\begin{widetext}
\begin{center}
 \begin{table}[t!]
 \centering
    \begin{tabular}{|c|c|c |c | c| }
    \midrule
         & \multirow{2}{8em }{dual-rail code \{$|01\rangle, |10\rangle$\}} & \multirow{2}{10em }{Repetition cat code $\{|C^{+}C^+\rangle,|C^{-}C^-\rangle\}$} & \multirow{2}{10em}{Pair cat code  ~~~~~~ (Eq.(3.12) in Ref.\cite{Albert_pair_cat})} & \multirow{2}{8em}{DRCC-1 $\{|C^{+}C^-\rangle,|C^{-}C^+\rangle\}$}  \\
         & & & &\\
         \midrule
         Geometric relative phase & In general present & Present  & Present  & Absent\\
         & & & & \\
         \multirow{2}{8em }{Bias preservation of two-qubit gates} & \multirow{3}{12em }{ Erasure bias not preserved: control-qubit erasure $\rightarrow$ target $Z$-error}  & \multirow{3}{12em }{Preserves bias against $\hat Z$ errors from single-photon loss during $CX$.}   &\multirow{2}{8em }{$CX$ preserves the erasure during or before the gates.} &\multirow{3}{10em }{ All two-qubit gates preserve erasure bias before and during the gates.}\\ 
          & & & & \\
          & & & & \\
          & & & & \\
        
         \multirow{2}{10em}{ Erasure correction (with two-qubit repetition code)} & \multirow{2}{10em}{Does not correct (Appendix \ref{sec:qec_drc}).} & \multirow{2}{10em}{Deterministic correction} & \multirow{2}{10em}{Deterministic correction for a particular $\alpha$, without any outer code.} &\multirow{2}{10em}{Deterministic correction}\\
         & & & & \\
          & & & & \\
           & & & & \\
            & & & & \\
          Dissipator and parity check & Not applied  & Non commuting & Commute & Commute \\
         \hline
          \multicolumn{5}{|c|}{Other features}\\
          \hline
           \multirow{2}{8em}{Two-qubit gate implementation} & \multirow{2}{8em}{Extra qubit required} &  \multirow{2}{8em}{Quartic gates (e.g., cross-Kerr)} &\multirow{2}{8em}{ Quartic gates (e.g., cross-Kerr)} & \multirow{2}{10em}{Quartic gates (e.g., cross-Kerr, Eq.~\eqref{eq:H_XX}) }\\
             & & & & \\
        
            & & & & \\
          Parity measurement & $\hat U_{\rm PC} $ Eq.\eqref{eq:jpc_yale}& $e^{- j\chi \hat n_i \otimes \hat \sigma_Z}$ Ref.\cite{Puri_2020} & $e^{-j (\hat n_1-\hat n_2)\hat \sigma_Z}$ Ref.\cite{Albert_pair_cat}& $\hat U_{\rm JPC}$ Eq.\eqref{eq:U_JPC}\\
            & & & & \\
          \multirow{2}{8 em }{Undetectable errors} & \multirow{2}{8 em }{$\hat a^{\dagger2}_i$, $\hat a^{\dagger}_1\hat a^{\dagger}_2$ ,$\hat a_1^{\dagger}\hat a_2$, $\hat a_2^{\dagger}\hat a_1$ }& \multirow{2}{8 em }{$\hat a_1\hat a_2$ and $\hat a^{\dagger}_1\hat a^{\dagger}_2$} & \multirow{2}{8 em }{$\hat{a} _i^2$, $\hat{a} _1\hat{a} _2$  Ref.\cite{Albert_pair_cat} } & \multirow{2}{8 em }{$\hat a_1\hat a_2$ , $\hat a^{\dagger}_1\hat a^{\dagger}_2$, $\hat a_1^{\dagger}\hat a_2$, $\hat a_2^{\dagger}\hat a_1$}\\
            & & & &\\
            & & & &\\
         \midrule
    \end{tabular}
    \caption{Comparison of DRCC-1 with other bosonic codes. 
    For assessing erasure correction, we consider the codes in the table concatenated with a two-qubit repetition code. The concatenated DRCC-1, pair-cat code and repetition-cat code enable deterministic correction of the dominant single-photon-loss errors. For the corresponding DRC, block-level erasure information alone cannot distinguish whether the photon was lost from the first or the second mode of the
erased DRC block. Consequently, the DRC concatenated with this two-qubit
repetition code does not correct the complete single-photon-loss error set
(see Appendix~\ref{sec:qec_drc}). Neither DRCC-1 nor
DRCC-2 accumulates a relative geometric phase. Unlike the DRC $CZ$ gate,
which preserves the leakage bias only when leakage occurs before the gate
operation, the DRCC-1 $CZ$ and $CX$ gates preserve the leakage bias
regardless of whether leakage occurs before or during the gate operation.}
    \label{tab:comp_tab}
\end{table}   
\end{center}
\end{widetext}
\section{General comparison of features of DRCC-1 with other codes}
\label{se:general-comparison-with-DR}

Having discussed the gate implementation and the syndrome measurements via joint parity-check for DRCC-1, we can now compare with the standard dual-rail code.
The DRCC-1 shares several structural similarities with the dual-rail code (DRC); namely, both detect but do not correct single-photon loss, and both support beam-splitter-based single-qubit control. However, they differ in several key aspects.

First, the parity-check mechanism is fundamentally modified. As discussed in  Sec. \ref{sec:meas}, for leakage detection of the DRC, we generally use the unitary $\hat U_{\rm PC}$ in Eq.~\eqref{eq:jpc_yale}, whereas the DRCC employs the unitary $\hat U_{\rm JPC}$. Since we aim to keep the dissipative stabilisation active during the leakage-detection process, the unitary $\hat U_{\rm PC}$ cannot be used directly in the DRCC-1.

The second difference lies in the structure of the leakage space and the resulting error-detection capabilities. In the DRC, the codespace is embedded within a larger odd-parity manifold, so joint-parity measurements cannot detect errors that map the state to other odd-parity subspaces, as noted in Ref.\cite{Yale_dual}. For example, two-photon-gain processes such as $\hat{a}_i^{\dagger 2}$ or $\hat{a}_1^{\dagger}\hat{a}_2^{\dagger}$ preserve the odd parity of the state while taking it outside the codespace (See Fig.\ref{fig:DRCC_code_pic}). Since the joint-parity measurement cannot distinguish such states from the logical subspace, these errors remain undetected~\cite{Yale_dual}. In contrast, the DRCC-1 partitions the Hilbert space into two distinct sectors: an odd-parity codespace and an even-parity leakage space. Consequently, the joint-parity check $\hat U_{\rm JPC}$ in Eq.~\eqref{eq:U_JPC} detects any error that transfers a logical state from the odd-parity codespace to the even-parity leakage space. Errors such as $\hat a_1\hat a_2$, $\hat a_1^{\dagger}\hat a_2^{\dagger}$ or $\hat a_i^{\dagger}\hat a_j$  preserve instead the parity and map the logical states back into the codespace, generating a logical $\hat X$ rotation  (see Sec.\ref{sec:logical gates}). These errors are therefore undetectable by the joint-parity check.

Third, DRCC-1 two-qubit gates preserve erasure bias irrespective of when the error occurs and prevent leakage propagation between blocks. By comparison, in the DRC, erasures during gate operations can induce residual errors or propagate between qubits. For example, an erasure occurring on either the control or target block before or during a $CX$ or $CZ$ gate can propagate to the other block, as discussed in Ref.~\cite{S_puri_erasure_threshold}.

Fourth, with the two-qubit repetition outer code considered here, DRCC-1
enables deterministic correction of the dominant single-photon-loss events
in the large-$\alpha$ regime. By contrast, for the standard DRC concatenated
with the same outer repetition code, block-level erasure information alone
is insufficient to distinguish between photon-loss events generated by
$\hat a_1$ and $\hat a_2$ within the erased block. Consequently, this
concatenation does not correct the complete single-photon-loss error set
when only block-level erasure information is available. For more details on the QEC with dual rail and repetition code, see Appendix \ref{sec:qec_drc}. 


We now turn to the comparison with cat codes. For the repetition cat code, certain two-qubit gates, such as the $CX$ gate, preserve the dominant logical $\hat Z$ errors arising from single-photon loss. However, bias preservation is generally gate dependent, and not all two-qubit gates preserve the $\hat Z$-biased noise channel. In contrast, single-photon loss in the DRCC-1 induces erasures rather than logical $\hat Z$ errors, and all logical gates preserve this erasure bias. Another distinction arises in syndrome extraction. For repetition cat codes, syndrome measurements generally require temporarily turning off the dissipative stabilisation process~\cite{PRX_mazyar_mirrahimi, Albert_pair_cat}. In contrast, the DRCC-1 enables syndrome measurements without interrupting the stabilisation dynamics.

For the pair cat code, there is likewise no entanglement between the data and auxiliary qubit during QEC. The $CX$ operation preserves erasure either during or before the gate, and the code uses a cross-Kerr interaction to implement the two-qubit gate. Its parity-check operations do commute with the dissipator, and the parity measurement is implemented via $e^{-i (\hat n_1-\hat n_2) \hat \sigma_Z}$ as in~\cite{Albert_pair_cat}. Undetectable errors include $\hat a_1\hat a_2, \hat a_1^{\dagger}\hat a_2, \hat a_2^{\dagger}\hat a_1$ and $\hat a_1^{\dagger}\hat a_2^{\dagger}$. 

Finally, DRCC-1 avoids the geometric phase accumulation inherent to gate constructions in other codes, such as the dual-rail code \cite{Tsunoda_PRXQuantum}, pair cat code \cite{Albert_pair_cat} and the repetition cat code \cite{Puri_2020, PRX_mazyar_mirrahimi}. 

A concise comparison of these properties is provided in Table~\ref{tab:comp_tab}.

\section{Numerical analysis and comparison of performance with related cat-code constructions}\label{sec:numerics}

In this Section, we analyse the performance of DRCC-1 concatenated with an outer two-qubit repetition code and compare it with that of a three-qubit repetition code employing inner single-mode cat states.

For this study, we use the Petz recovery map, which provides a near-optimal recovery procedure. We adopt the Petz map because it is computationally less complex than the optimal recovery map, whose evaluation typically requires a numerically intensive optimisation procedure~\cite{liang_jiang_near_opt,fletcher2008channel}. At the same time, the Petz map provides a reliable bound on the QEC performance, allowing us to infer the expected behaviour of the corresponding optimal recovery scheme~\cite{liang_jiang_near_opt, hkn_pm2010,barnum2002}.

Furthermore, we note that under certain assumptions on the code and the noise model, the Petz recovery can coincide with the optimal recovery, as discussed in~\cite{liang_jiang_opt_code}. We note that the Petz recovery admits two different formulations: the state-specific Petz map~\cite{mm_wilde} and the code-specific Petz map~\cite{liang_jiang_near_opt,barnum2002,hkn_pm2010}. The code-specific map is particularly useful in the context of QEC~\cite{biswas2024noise}, and it has recently been implemented on superconducting platforms~\cite{biswas2025universalsyndromebasedrecoverynoiseadapted}. In contrast, a state-specific Petz map has been proposed for trapped-ion processors~\cite{trapped_ion_petz} and nuclear magnetic resonance processors~\cite{Singh_2026}.

We focus here on the code-specific Petz recovery map. For a code $\cC$ and a noise process $\cE$, such map is defined through the Kraus operators
\begin{align}
\cR_{P,\cE}(. ) = \sum\limits_{k} \hat P\hat E^{\dagger}_{k}\cE(\hat P)^{-1/2} (.) \cE(\hat P)^{-1/2} \hat E_k \hat P,
\end{align}
where the inverse of $\cE(\hat P)$ is taken on the support of $\cE(\hat P)$.

To quantify the performance of the QEC protocol, we use the channel fidelity, also referred to as entanglement fidelity~\cite{nielsen1996entanglementfidelityquantumerror,liang_jiang_near_opt}. For a quantum channel
\begin{align}
\Lambda(.) \overset{\rm def}{=} \sum_i \hat \Lambda_i (.) \hat \Lambda_i^{\dagger},
\end{align}
the entanglement fidelity is given by 
\begin{align} \label{eq:ent_fid_def}
F_{\rm ent} = \frac{1}{d^2} \sum_i|\tr (\hat \Lambda_i)|^2,
\end{align}
where $d$ denotes the dimension of the codespace.

Following Ref.~\cite{liang_jiang_near_opt}, the entanglement fidelity corresponding to the Petz recovery map $\cR_{P,\cE}$, which is also known as the transpose channel, can be expressed as
\begin{align}
F_{\rm ent} = \frac{1}{d^2} \bigg \vert\bigg\vert \sqrt{\hat M}\bigg \vert\bigg\vert_F^2,
\end{align}
where $||(.)||_F$ denotes the Frobenius norm and
\begin{align}
\hat M \equiv \{M_{[\mu k],[\nu\ell]}\}= \langle\mu|\hat E_k^{\dagg}\hat E_\ell|\nu\rangle.
\end{align}
Here, ${|\mu\rangle,|\nu\rangle}$ denote the code vectors spanning the codespace $\cC$, and ${\hat E_k}$ are the Kraus operators corresponding to the noise channel $\cE$.

In what follows, to assess the usefulness of the DRCC-1 relative to other multi-qubit two-component cat codes, we consider the three-mode and four-mode cat repetition codes and examine their performance under photon-loss error with the near-optimal recovery via the Petz map. 

\subsection{Comparison with three-qubit single-mode cat repetition code (three modes)}

We first compare the performance of DRCC-1 with the three-mode cat repetition code ${|C^{+}\rangle^{\otimes 3},|C^{-}\rangle^{\otimes 3}}$ against photon loss errors with the near-optimal recovery provided by the Petz map. Figure~\ref{fig:ent_plot_diff_rec} shows that, for DRCC-1, the infidelity $1-F_{\rm ent}$ remains below the break-even threshold over a wide range of noise strengths $\kappa t$. We observe a comparable behaviour for the cat-repetition code, particularly at low values of $\alpha$.

For both codes, the performance initially improves as $\alpha$ increases and reaches an optimal regime around a sweet-spot $\alpha=\alpha_{\text{opt}}$, where $\alpha_{\text{opt}}$ depends on the noise strength $\kappa t$. Beyond this regime, the infidelity increases again. 


For the DRCC-1, the optimal operating point can be understood from the leading-order contributions to the logical error probability discussed in Sec.~\ref{sec:two_rep_drcc}. In particular, the presence of $\hat Z_L$ errors induced by first-order dephasing in each cavity, together with the logical $\hat Y_L$ error induced by joint photon loss on the pair of modes, leads to a minimum in $p_{\rm logical}$ in between $1.0\leq \alpha \leq 1.75$; this optimal point also depends on the noise strength. 
Indeed, as $\alpha$ increases, the probabilities of logical phase-flip and $Y$-type errors decrease exponentially. At the same time, the probabilities of logical $\hat X$ errors and leakage events increase. Therefore, under recovery procedures such as the Petz map, the competition between suppressing $\hat Z$- and $\hat Y$-type errors and enhancing $\hat X$-type and leakage errors yields an optimal operating point.

Finally, we note that as $\alpha$ increases, the DRCC-1 performs better than the repetition code, with a more pronounced reduction in infidelity, especially in the higher-noise regime.

\begin{widetext}
\begin{center} 
\begin{figure}[t!] 
\centering 
\includegraphics[width=1.0\textwidth]{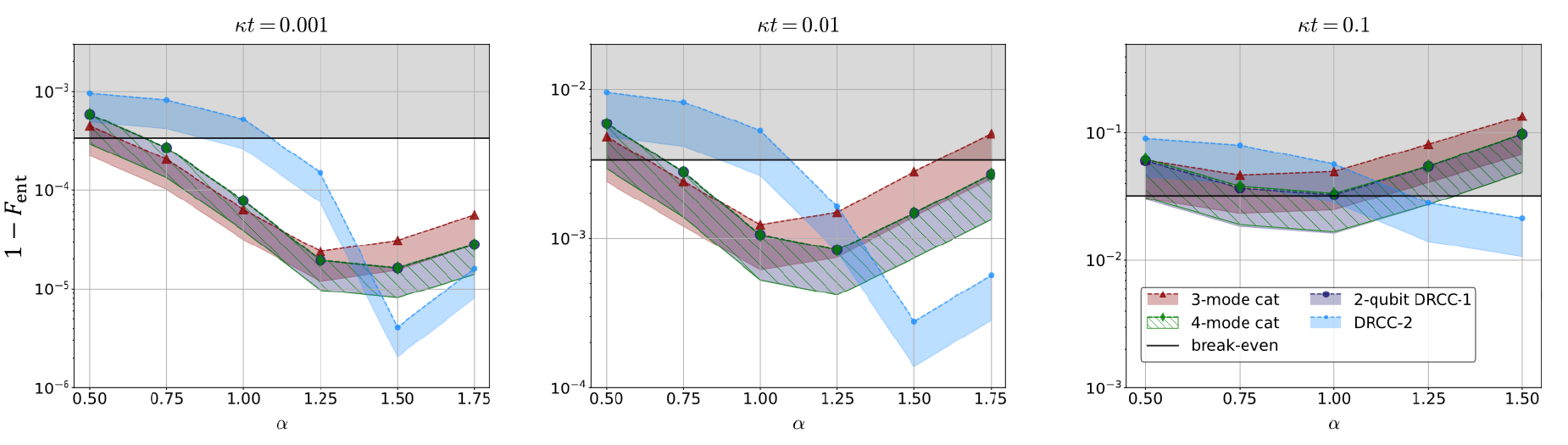}
\caption{Performance of the DRCC-1 and DRCC-2 under photon-loss noise with identical noise rates in all cavities, $\kappa_i t=\kappa t$. We compare the performance of four different encodings: the three-mode repetition cat code, the DRCC-1, the four-mode repetition cat code, and the DRCC-2, using the near-optimal Petz recovery map. The shaded regions for the DRCC-1 and the four-mode repetition cat code overlap almost perfectly in the parameter regime considered. For the DRCC-2, we do not consider mixing the two modes with beam splitter because photon-loss errors commute with the beam-splitter operation, as discussed in Ref.~\cite{multimode_RSB}; consequently, the inclusion of a beam splitter would not alter the performance of the code under photon-loss noise.}
\label{fig:ent_plot_diff_rec}
\end{figure}
\end{center} 
\end{widetext}

\subsection{Comparison with two-qubit two-mode cat repetition code (four modes)}

We now compare DRCC-1 with another two-mode cat-code construction with $N=1$, obtained by concatenating the even two-mode repetition cat code $|\Phi^{\pm}\rangle$ in Eq.\eqref{eq:cap_Phi} with an outer two-qubit repetition code:
\begin{align}\label{eq:another_two_rep_code}
\begin{split}
     |\bar{0}_c\rangle & = |\Phi^+\Phi^+\rangle,\\
    |\bar{1}_c\rangle & = |\Phi^-\Phi^-\rangle.
\end{split}   
\end{align}

This code satisfies the KL conditions for first-order photon-loss errors, as follows:
\begin{align}\label{eq:loss_rep_phi}
\begin{split}
    \hat a_1 |\bar{0}_c\rangle & \propto |\bar{1}\rangle|\Phi_+\rangle,\\
    \hat a_1 |\bar{1}_c\rangle & \propto|\bar{0}\rangle |\Phi_-\rangle,\\
    \hat a_2 |\bar{0}_c\rangle & \propto |\bar{0}\rangle|\Phi_+\rangle,\\
    \hat a_2 |\bar{1}_c\rangle & \propto |\bar{1}\rangle|\Phi_-\rangle.
\end{split}
\end{align}

\begin{table}[t]
    \centering
    \caption{Syndrome table for error correction with the two-qubit repetition code concatenated with the inner repetition cat code defined in Eq.~\eqref{eq:another_two_rep_code}. }
    \begin{tabular}{c c c | c | c}
    \midrule
       $C_1$  & $C_2$  & $C_3$ & Error & Correction\\
       \midrule
       $e$  & $g$      & $g$ & $\hat a_1$ & $\hat X_c$ on the first mode \\
        &     &   & &\\
       $e$ & $g$    & $e$ & $\hat a_2$ & $\hat X_c$ on the second mode\\
       \midrule
        $g$  & $e$      & $g$ & $\hat a_3$ & $\hat X_c$ on the third mode\\
        &     &   & &\\
       $g$ & $g$  &   $e$ & $\hat a_4$ & $\hat X_c$ on the fourth mode\\
       \midrule
        $g$ & $g$  &  $g$ & No errors & Identity \\
        \midrule
    \end{tabular}
    
    \label{tab:syn_tab_alt_rep}
\end{table}

The DRCC-1 logical vectors span the error space, or leakage space, associated with this code, see Fig.~\ref{fig:DRCC_code_pic}. Consequently, the syndrome values $C_1C_2C_3$ for the even-parity repetition code in Eq.~\eqref{eq:another_two_rep_code} and for DRCC-1 concatenated with the repetition code differ only by an addition modulo two, as seen from Tables~\ref{tab:syn_tab_alt} and \ref{tab:syn_tab_alt_2}.

The error-correction circuit shown in Fig.~\ref{fig:DRCC_alt} is also applicable to this code. As in the DRCC-1 with an outer repetition code, we can use the JPC unitary to measure the syndrome with the four-mode repetition cat codes. Owing to these similarities in code structure and error-correction procedure,  the performance of these two codes under photon-loss noise using the Petz recovery channel is nearly identical throughout the parameter regime considered (Fig.~\ref{fig:ent_plot_diff_rec}). 

A full numerical evaluation of the three-qubit repetition code with DRCC-1 as the inner code, using the Petz recovery map (which implies six cavity modes in the protocol), is computationally hard, and we leave such analysis to future work.



\section{QEC properties for Dual-rail four component cat code  (DRCC-2)}\label{sec:drcc2}

\begin{figure}[t!]
    \centering
    \includegraphics[width=0.8\columnwidth]{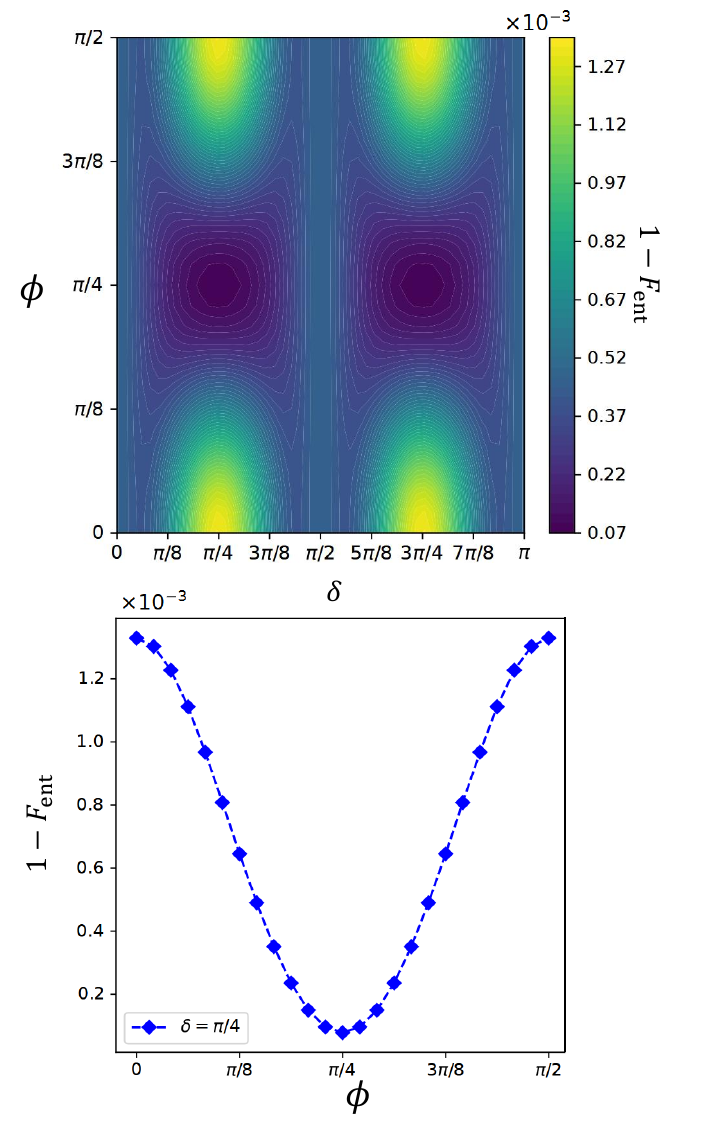}
    \caption{Dependence of the performance of the DRCC-2 with beam-splitted code vectors Eq.\eqref{eq:DRCC-20} under dephasing noise on the beam-splitter angles $\delta$ and  $\phi$. Here, we consider $\alpha=1.538$ and a dephasing strength $ \gamma_it =10^{-3}$ in both modes.}
    \label{fig:cat-4-deph}
\end{figure}

Dual-rail codes with an inner four-component cat code, which can intrinsically correct single-photon loss errors on the individual modes, were previously studied in Ref.~\cite{Plenio_dual_rail}. Here, we show that such codes can correct dephasing and photon-loss errors more effectively than their single-mode counterparts while also suppressing coherent errors~\cite{Constant_excitation_code_Ouyang_2021}, when we consider the logical states obtained by mixing the modes with a beam-splitter operation $\hat U_{\rm BS}$, yielding Eq.\eqref{eq:DRCC-20} (see also Ref.~\cite{multimode_RSB}).

We start by analysing the KL conditions for DRCC-2. The KL conditions can be studied using the same first-order photon-loss and first-order dephasing operators considered for the DRCC-1. With the first-order photon loss and dephasing operators and for the BS angle $\delta= \phi=\pi/4$, we obtain the following relations 
\begin{align}
\hat P_4\hat a_1^\dagger \hat a_1\hat P_4
&=
\hat{U}_{\rm BS}\hat{\bar P }_4(\hat U_{BS}^\dagger\hat a_1^\dagger \hat a_1\hat U_{BS} )\hat{\bar P}_4 \hat{U}_{\rm BS}^{\dagger} \nonumber\\
&=
\frac{1}{2}\hat U_{\rm BS}\hat {\bar P}_4 (\hat a_1^\dagger+\hat a_2^\dagger)(\hat a_1 +\hat a_2)\hat{\bar P}_4 \hat U_{\rm BS}^{\dagger} \nonumber\\
&=
\frac{1}{2}(\langle \hat n_1\rangle + \langle \hat n_2\rangle)\hat P_4, \label{eq:n_1_DRCC-2}\\
\hat P_4\hat a_2^\dagger \hat a_2\hat P_4
&=
\frac{1}{2}(\langle \hat n_1\rangle +\langle \hat n_2\rangle)\hat P_4, \label{eq:n_2_DRCC-2}\\
\hat P_4\hat a_1^\dagger \hat a_2\hat P_4
&=
\frac{1}{2}(\langle \hat n_1\rangle - \langle \hat n_2\rangle) \hat Z_L, \label{eq:a1a2_DRCC-2}\\
\hat P_4 \hat n_1\hat n_2 \hat P_4 & \propto \hat P_4 \label{eq:n1n2_DRCC-2}
\end{align}
where  $\bar P_4 = |C^+_4\rangle\langle C^+_4|+ |C^-_4\rangle\langle C^-_4|$ is the projector onto the DRCC-2 codespace, and $\hat Z_{L}$ denotes the logical Pauli-$\hat Z$ operator acting on the DRCC-2 subspace. To arrive at Eq.\eqref{eq:n_1_DRCC-2} we have also used that $\bar P_4 \hat a_i^{\dagger}\hat a_j\bar P_4=0$ for $i \neq j$.

The relation in Eq.\eqref{eq:a1a2_DRCC-2} shows that photon-loss events acting on different modes map the codespace to mutually orthogonal subspaces, and also to subspaces orthogonal to the original codespace, when
\begin{align}
\langle \hat n_1\rangle - \langle \hat n_2\rangle =0.
\end{align}
Equivalently, since $\langle \hat n_1\rangle - \langle \hat n_2\rangle$ is proportional to ${\rm Tr}(\hat Z_L \hat n_1)$, the above condition can be written as
\begin{align}\label{eq:sweet_spot}
{\rm Tr}(\hat Z_L \hat n_1) =0.
\end{align}
Furthermore, we note from Eqs. \eqref{eq:n_1_DRCC-2}, \eqref{eq:n_2_DRCC-2}, and \eqref{eq:n1n2_DRCC-2} that, for the beam-splitter parameters $\delta=\phi=\pi/4$, the DRCC-2 can satisfy the relevant QEC conditions for dephasing errors. 

To evaluate the performance of the DRCC-2 against photon-loss and dephasing noise, we use the Petz recovery map. We demonstrate the performance under photon-loss noise in Fig.~\ref{fig:ent_plot_diff_rec}. We observe that the infidelity reaches a minimum near $\alpha \approx 1.5$. A similar optimal value of $\alpha$ has also been observed for the single-mode four-component cat code in Refs.~\cite{VV_albert_bosonic_performance, Albert_pair_cat}. Around this value of $\alpha$, the condition ${\rm Tr}(\hat Z_L \hat n_1)=0$ is also satisfied. Therefore, near $\alpha \approx 1.5$, the DRCC-2 satisfies the QEC conditions for the photon-loss channel.

Similar to DRCC-1, a BS operation that mixes the two modes, as in Eq.\eqref{eq:m-RSB}, does not play any role in protecting against photon-loss errors. However, for dephasing noise occurring before the beam-splitter, in analogy with what was observed for dual-rail binomial codes ~\cite{multimode_RSB}, the beam-splitter operation plays an important role in improving the performance of the code. As shown in Fig.~\ref{fig:cat-4-deph}, the infidelity reaches a minimum for $
\delta=\phi=\pi/4.$

In summary, under the conditions $\alpha \approx 1.5$, $\delta=\phi=\pi/4$, the DRCC-2 achieves improved performance against both the loss and dephasing noise channels for sufficiently small noise strengths. A full numerical analysis with DRCC-2 concatenated with an outer code, e.g, the repetition code, is left for future work.

\section{Discussion and outlook}\label{sec:discussion}

In this work, we have introduced and analysed dual-rail cat codes (DRCCs), a family of bosonic encodings that combine the bias-preserving properties of cat codes with the leakage-detection capabilities of dual-rail architectures. We investigated DRCCs constructed from both two- and four-component cat states and characterised their error-correction properties under the dominant bosonic noise processes of photon loss and dephasing.

For the two-component construction, DRCC-1, we showed that single-photon-loss events are converted from logical errors into detectable leakage events, while dephasing errors become asymptotically correctable in the large-$\alpha$ limit. For the four-component construction, DRCC-2, we identified parameter regimes in which both single-photon loss and first-order dephasing errors satisfy the Knill--Laflamme conditions. We further developed a set of logical operations for DRCC-1, including single-qubit rotations, entangling gates, and controlled operations that preserve the noise bias.

A central result of this work is the demonstration that leakage can be treated deterministically within the DRCC framework. By concatenating DRCC-1 with an outer repetition code, detectable leakage events can be converted into erasure information, enabling efficient correction. This approach exploits a key distinction between DRCCs and conventional cat codes: single-photon loss  generates leakage rather than logical Pauli errors. As a consequence, DRCCs can benefit from lower-overhead repetition-code constructions while maintaining compatibility with dissipative stabilisation. In particular, the parity-check operation used for syndrome extraction commutes with the stabilising dynamics, thereby allowing continuous protection of the encoded state during error correction.

More broadly, we identified a structural property shared by dual-rail rotationally symmetric bosonic codes: the absence of intrinsic geometric phase accumulation during logical operations. In contrast to single-mode cat codes, repetition cat codes, pair-cat codes, and several other rotationally symmetric bosonic encodings, DRCCs do not acquire relative geometric phases that must subsequently be tracked or compensated. This feature simplifies the implementation of fault-tolerant logical gates.

These results establish DRCCs as a promising bosonic code for hardware-efficient quantum error correction. By simultaneously providing bias-preserving logical operations, deterministic leakage detection and correction that does not require entanglement with a two-level auxiliary system, compatibility with continuous stabilisation, and freedom from intrinsic geometric-phase accumulation, DRCCs offer a distinct route toward scalable fault-tolerant quantum computation in bosonic architectures.

An important open question arising from this work concerns the experimental realisation of the constraints discussed in Sec.~\ref{sec:dissipator} for stabilising the DRCC-1 using the dissipator $\hat \cD$ in Eq.~\eqref{eq:disp_dyn_2}. Ref.~\cite{maryam_diss} demonstrated how $\hat \cD$ can be engineered in a circuit quantum electrodynamics architecture. Extending that framework to incorporate the additional constraints required for DRCC-1 remains an important challenge.

We also found that the dissipator $\hat \cD$ in Eq.~\eqref{eq:disp_dyn_2} stabilises a broad class of codes, Eq.(\ref{eq:reccurence}), which exhibits features reminiscent of both the dual-rail cat code and the pair-cat code~\cite{Albert_pair_cat}, see also footnote [48]. A systematic characterisation of the error-correcting capabilities of such codes under photon-loss and dephasing noise would be of considerable interest. Such a study could include an analysis of its logical gate set, fault-tolerance properties, and performance when concatenated with outer codes. 

Although we have identified a universal set of logical gates compatible with continuous stabilisation, their gate times exhibit different dependences on the cat-state amplitude. For a fixed interaction strength, the gate time of the logical $\hat R_X$ rotation decreases with increasing $\alpha$, whereas those of the logical $\hat R_Y$ and $\hat R_Z$ rotations increase because the corresponding projected couplings are exponentially suppressed. For example, at $\alpha=1.5$, which lies within the optimal operating regime of the DRCC-1 under the Petz recovery map for the noise strengths considered, the gate times are $t_X\sim 0.35/K_{\rm BS}$ and $t_Y\sim t_Z\sim 15.7/K_{\rm BS}$ for the beam-splitter-based construction discussed in Sec.~\ref{sec:1q-gates}. As quantified in Appendix~\ref{sec:leak_app}, these gate Hamiltonians also contain off-codespace contributions. In particular, the fractional off-codespace Hamiltonian weight of the beam-splitter-based $\hat R_Z$ gate approaches unity in the large-$\alpha$ limit, implying that the dissipative stabilisation rate must be much larger than $K_{\rm BS}$ to confine the dynamics to the codespace. The alternative construction presented in Appendix~\ref{sec:proof_gate} provides a faster implementation of the logical Pauli $\hat Z_L$ gate without requiring the dissipator to remain active during the gate operation, but it does not allow the implementation of fast arbitrary $\hat R_Z(\theta)$ rotations. This limitation is addressed in Appendix~\ref{sec:fast_Rz}, which introduces an auxiliary-qubit-assisted implementation of fast arbitrary $\hat R_Z(\theta)$ rotations. In this approach, the dissipative stabilisation must be turned off during the controlled individual-parity operations, and errors in the auxiliary qubit may affect the gate performance. Achieving fast logical rotations while maintaining continuous stabilisation remains an open challenge.

As a further point for future extensions, in our construction of bias-preserving parity measurements for DRCC-1 we considered a two-level auxiliary system. A natural extension is to investigate whether replacing such an auxiliary qubit with a bosonic mode can further enhance noise-bias preservation by reducing the probability that auxiliary-system errors propagate to the encoded oscillators~\cite{Yale_dual,mehta2025bias}. More broadly, it would be worthwhile to explore alternative noise-protected auxiliary architectures, including pairs of giant atoms~\cite{Ariadna_Anton}, for syndrome extraction.

Also, while the Knill-Laflamme conditions have been analysed for general loss rates, it would be interesting to also perform a numerical analysis of the code's performance in the case of unequal loss rate in the two modes defining the codewords. 

Beyond repetition codes, an important avenue is concatenating DRCCs with more powerful outer codes, including low-density parity-check (LDPC) codes, surface codes, and high-rate constructions such as elevator codes~\cite{shanahan2026elevatorcodesconcatenationresourceefficient}. The logical gate and measurement protocols developed here may serve as useful primitives for constructing scalable, fault-tolerant architectures based on such concatenated schemes.

Finally, we analysed the quantum error-correcting properties of the DRCC-2, which can correct photon-loss errors without concatenation and can further suppress dephasing errors in the presence of a beam splitter. Several key aspects of this code remain unexplored, including stabilisation mechanisms, noise-bias preservation, and logical gate implementations. Assessing these properties, studying concatenated DRCC-2 architectures, and benchmarking their performance against related bosonic encodings, such as pair-cat codes and four-component single-mode cat codes~\cite{ruiz_ldpc_4cat}, constitute promising directions for future research.

\section{Acknowledgement}
We acknowledge Armelle Célarier, Axel Eriksson, Benjamin Huard, David DiVincenzo, Gerard Milburn, Ingrid Strandberg, Lukas Splitthoff, Martin Jirelow, Maryam Khanahmadi, Mazyar Mirrahimi, Niccolò Laurora, Rabsan Galib Ahmed, Steven Girvin and Timo Hillmann for useful discussions and comments on our manuscript. G.F.\ acknowledges funding from the European Union’s Horizon Europe Framework Programme (EIC Pathfinder Challenge project Veriqub) under Grant Agreement No.\ 101114899 and the Swedish Research Council through the project grant VR DAIQUIRI. G.F., A.U.,  R.W. and M.G acknowledge funding from the Knut and Alice Wallenberg Foundation through the Wallenberg Center for Quantum Technology (WACQT). G.F. and D.B. acknowledge funding from the Olle Engkvist Foundation. N. S. and A. S. acknowledge funding from Chalmers Area of Advance Nano, as well as resources at the Chalmers Centre for Computational Science and Engineering (C3SE). {We also acknowledge that we have used an AI tool such as ChatGPT-5.5  to correct language errors, improve the logical flow of some paragraphs and for some LaTeX commands, and also for the suggestion on the caption of some figures. We also declare that neither the calculations nor the plots were generated by any AI tools.}

\bibliography{refs,refs2}
\appendix

\section{DRCC-1 with beam-splitted modes against pure dephasing errors}\label{sec:DRCC-bs}
The BS operation $\hat U_{\text{BS}}$ commutes with photon loss errors, resulting in no improvement in the QEC  performance against photon losses. In this Appendix, we instead consider dephasing noise and analyse the QEC properties of the DRCC-1 when we allow a beam-splitter operation between the two modes. We consider, therefore, the following code vectors, which belong to the two-mode RSB code in Eq.\eqref{eq:m-RSB} and are obtained by fixing the $f_{mn}$ coefficients as in Eq.\eqref{eq:f_mn_cat} with $N=1$:
\begin{align}\label{eq:drcc-bs-1}
\begin{split}
|0_L\rangle & = \hat U_{\BS} |\bar{0}\rangle,\\
|1_L\rangle & = \hat U_{\BS} |\bar{1}\rangle.
\end{split}
\end{align}

The dephasing Kraus operators acting on the DRCC-1 qubits, up to the first order in the noise strength $\gamma_i t$, are given in Eqs.\eqref{eq:d0}, \eqref{eq:d1}, and \eqref{eq:d2}.

To check the KL condition for the first-order dephasing noise, we first study the transformation of the number operators $\hat{n}_i$ under $\hat U_{\BS}$. Using Eq.\eqref{eq:BS_transformation}, the number operators acting on different modes transform as
\begin{align}\label{eq:b_1_dag_b_1}
\hat b_1^{\dagger}\hat b_1 &= \hat n_1 \cos^2\delta + \hat n_2 \sin^2\delta\nonumber\\
& +(e^{-j \phi }\hat a_1^{\dagger}\hat a_2 + e^{j \phi} \hat a_2^{\dagger}\hat a_1 ) \sin{\delta}\cos\delta,\\
\label{eq:b_2_dag_b_2}
\hat b_2^{\dagger}\hat b_2 &=\hat  n_2 \cos^2\delta + \hat n_1 \sin^2\delta \nonumber\\
&  -(e^{-j \phi }\hat a_1^{\dagger}\hat a_2 + e^{j \phi} \hat a_2^{\dagger}\hat a_1 ) \sin{\delta}\cos\delta.
\end{align}

Therefore, by analysing the action of $\hat E_k^{D\dagger}\hat E_l^D$ up to the first order in $\gamma_i t$ on the codespace defined by the code vectors in Eq.\eqref{eq:drcc-bs-1}, we obtain
\begin{align}\label{eq:n_100}
\langle 0_L|\hat{n}_1|0_L\rangle
&= \left[\cos^2\delta \left(\frac{N_-}{N_+}\right)^2+
\sin^2\delta \left(\frac{N_+}{N_-}\right)^2\right] |\alpha|^2,\\
  \label{eq:n_111}
  \langle 1_L|\hat{n}_1|1_L\rangle
  &= \left[\cos^2\delta \left(\frac{N_+}{N_-}\right)^2 +\sin^2\delta \left(\frac{N_-}{N_+}\right)^2\right] |\alpha|^2,\\
  \label{eq:n_101}
  \langle 1_L|\hat{n}_1|0_L\rangle
  &= \frac{\sin 2\delta}{2}
  \left[e^{-j \phi}\left(\frac{N_+}{N_-}\right)^2 +e^{j \phi}\left(\frac{N_-}{N_+}\right)^2 \right]|\alpha|^2,\\
  \label{eq:n_110}
  \langle 0_L|\hat{n}_1|1_L\rangle
  &= \frac{\sin 2\delta}{2}
  \left[e^{-j \phi}\left(\frac{N_-}{N_+}\right)^2+e^{j \phi}\left(\frac{N_+}{N_-}\right)^2 \right]|\alpha|^2.
  \end{align}

To obtain the above expressions, we use the transformations in Eqs.\eqref{eq:b_1_dag_b_1} and \eqref{eq:b_2_dag_b_2}. We note that, for $\delta=\pi/4$, the diagonal matrix elements satisfy
\begin{align}
\langle0_L|\hat{n}_1|0_L\rangle = \langle1_L|\hat{n}_1|1_L\rangle.
\end{align}
To satisfy the KL condition, we further require
\begin{align}
\langle1_L|\hat{n}_1|0_L\rangle
=
\langle0_L|\hat{n}_1|1_L\rangle
=0.
\end{align}
For $\phi=\pi/2$, this off-diagonal term becomes exponentially suppressed with increasing $\alpha$, as $N_+ \to N_-$. Therefore, at finite $\alpha$, the DRCC-1, with or without the BS operation, does not exactly satisfy the KL condition and hence cannot exactly correct first-order dephasing errors.

Regarding error detectability, we note from Eqs.\eqref{eq:n_100}-\eqref{eq:n_110} that, in the small-$\alpha$ limit, the code in Eq.\eqref{eq:drcc-bs-1} cannot detect single-mode first-order dephasing noise either. 

\section{Effects of the higher-order photon loss and dephasing errors for DRCC-1}\label{sec:noise_bias}

In Sec.~\ref{sec:QEC_properties}, we showed that DRCC-1 does not satisfy the full KL conditions for photon loss. Nevertheless, the structure of the projected error operators reveals a strong effective noise bias, which is crucial for choosing an appropriate outer code.

Like the standard dual-rail code in Eq.\eqref{eq:dual_rail_stdrd}, DRCC-1 also converts the single-photon loss acting on the individual modes into erasures. However, the standard dual-rail code does not convert higher-order photon-loss events into detectable leakage. Here, we investigate the effects of higher-order photon loss and dephasing errors on the DRCC-1 qubits and show that odd-order photon loss or gain in a single mode leads to leakage. The joint photon loss events either act as a logical identity operation onto the codespace or cause a $\hat X_L$ in the large $\alpha$ limits. The higher-order dephasing errors induce an exponentially suppressed logical $\hat Z$ to the DRCC-1 codespace in the large $\alpha$ limits.

\subsection{Photon loss}

The Knill--Laflamme (KL) conditions for single-photon loss are given in Eqs.~\eqref{eq:Pa_iP}--\eqref{eq:n2_w/obs}. More generally, for an $\ell$-th order photon loss acting on the $i$-th mode, one finds
\begin{align}\label{eq:Pa^lP}
   \hat{ \bar{P}}\hat a_i^{\ell}\hat{\bar{P}}
    \propto
    \begin{cases}
        0,
        & \ell \ \text{odd},
        \\
        \hat{\bar{P}},
        & \ell \ \text{even}.
    \end{cases}
\end{align}

We now consider joint photon-loss processes acting on both cavities. For $\ell$-th order loss in mode 1 and $k$-th order loss in mode 2, we obtain
\begin{align}\label{eq:loss-gain}
    \hat{\bar{P}}
    \hat a_1^{\dagger\ell}
   \hat a_2^k
    \hat {\bar{P}}
    \propto
    \begin{cases}
       \hat{  \bar{P}},
        & \ell+k \ \text{even},
        \\       
        (A_X \hat X_L+iA_Y\hat Y_L),
        & \ell+k \ \text{odd} \ \&\, k \neq 0 ,
    \end{cases}\\
  \label{eq:loss-loss}   
    \bar{P}
    \hat a_1^{\ell}
   \hat  a_2^k
    \bar{P}
    =
    \begin{cases}
        |\alpha|^{\ell+k}\bar{P},
        & \ell \, \& \, k \ \text{even},
        \\
        |\alpha|^{\ell+k-2}
        (A_X\hat X_L+iA_Y\hat Y_L),
        & \ell \,\& \, k \ \text{odd}.
    \end{cases}
\end{align}

These relations show that, in the large-$\alpha$ regime, joint photon-loss events predominantly induce logical $\hat X$ errors, while logical $\hat Y$ errors are exponentially suppressed due to Eq.~\eqref{eq:A_Z}. 

Furthermore, Eq.~\eqref{eq:loss-gain} provides insight into photon-gain processes. In particular, for $k=0$, it describes the effect of an $\ell$-photon gain error on the first cavity. For even $\ell$, such gain errors preserve the logical subspace of the DRCC qubit. In contrast, odd-order photon gain on the $i$-th cavity leads to leakage, as can be verified from
\[
\bar{P}\hat a^{ \dagger\ell}_i\bar{P}
= \alpha^{\ell-1}(\bar{P}\hat a^{\dagger}\bar{P}) = 0.
\]

Overall, these results demonstrate that the DRCC-1 exhibits a strong noise bias towards logical $\hat X$ errors under higher-order photon-loss processes, while simultaneously suppressing logical $\hat Y$ errors in the large $\alpha$ limit. We note that the DRCC-1 also converts odd-order photon-gain events into leakage outside the logical subspace, and hence facilitates us to flag them as an erasure. 

\subsection{Dephasing}

We have seen, by analysing the KL condition for first-order dephasing in either cavity, that dephasing errors induce a logical $\hat Z$ error in the DRCC-1 qubit. 

To analyse higher-order dephasing processes, we use the following identity
\begin{align}\label{eq:n^l}
    \hat n^{\ell}
    =
    \sum_{p=1}^{\ell}
    S(\ell,p)
    \hat a^{\dagger p}\hat a^p,
\end{align}
where
\begin{align}
    S(\ell,p)
    =
    \frac{1}{p!}
    \sum_{k=0}^{p}
    (-1)^{p+k}
    \binom{p}{k}
    k^{\ell}
\end{align}
are the Stirling numbers of the second kind~\cite{normal_ordering_2003}. Separating even and odd powers gives
\begin{align}\label{eq:n^l_2}
   \hat  n_i^{\ell}
    &=
    \sum_m
    S(\ell,2m)
    \hat a^{\dagger 2m}\hat a^{2m}
   +
    \sum_m
    S(\ell,2m+1)
    \hat a^{\dagger(2m+1)}\hat a^{2m+1}.
\end{align}

Using Eq.~\eqref{eq:Pa^lP}, we obtain
\begin{align}
    \bar{P}\hat n_i^{\ell}\bar{P}
    &=
    \bar{P}
    \sum_m
    S(\ell,2m)
    |\alpha|^{4m}
    \nonumber\\
    &\quad+
    (A_0\bar{P}
    +
    (-1)^{i\oplus1}A_Z\hat Z_L)
    \sum_m
    S(\ell,2m+1)|\alpha|^{4m},
\end{align}
where $\oplus$ denotes modulo-two addition. Therefore, for small $\alpha$, higher-order dephasing processes induce residual logical $\hat Z$ noise. However, in the large limit $\alpha$, the logical contribution $\hat Z$ becomes exponentially suppressed, and the projected operators become approximately proportional to the identity due to Eq.\eqref{eq:A_Z}.

\section{Alternative dissipation stabilisation}
\label{sec:alt_diss}

In the main text, we discussed the dissipative stabilisation of the DRCC-1 using the dissipator in Eq.~\eqref{eq:disp_dyn_2}. In this appendix, we present two alternative dissipative-stabilisation schemes that can also stabilise the DRCC-1 manifold.

We first exploit the observation in Eq.~\eqref{eq:cat_eig_state}, namely that the logical states are simultaneous eigenstates of the operators $\hat a_i^2$. This property allows us to stabilise the DRCC-1 manifold through the dissipative dynamics
\begin{align}\label{eq:disp_dyn_0}
\dot{\hat{\rho}}
&=
-j[\hat{H}_{\rm sys},\hat{\rho}]
+d_1^2\,\hat{\cD}[\hat{a}_1^2-\alpha^2]\rho
+d_2^2\,\hat{\cD}[\hat{a}_2^2-\alpha^2]\rho,
\end{align}
where $d_i$ denotes the dissipation rate associated with the $i^{\rm th}$ mode. In this approach, the two dissipators independently stabilise the individual cavity modes into cat states.

We can also stabilise the same manifold using the single dissipator
\begin{align}\label{eq:disp_dyn_1}
    \mathcal{\hat{D}}[\hat{a}_1^2+\hat{a}_2^2-2\alpha^2].
\end{align}
Using Eq.~\eqref{eq:cat_eig_state}, we find that the dissipator in Eq.~\eqref{eq:disp_dyn_1} stabilises the two-mode cat manifold in Eq.~\eqref{eq:cat_manifold}, provided that
\begin{align}
    \alpha_1^2+\alpha_2^2
    =
    2\alpha^2.
\end{align}

The stabilisation mechanisms underlying Eqs.~\eqref{eq:disp_dyn_0} and \eqref{eq:disp_dyn_1} differ from each other. Equation~\eqref{eq:disp_dyn_0} employs two independent dissipators that separately stabilise each cavity mode, whereas Eq.~\eqref{eq:disp_dyn_1} uses a single collective dissipator acting on both modes simultaneously. Consequently, Eq.~\eqref{eq:disp_dyn_1} provides a single-dissipator route to generating the DRCC-1 manifold.

 \section{Alternative logical gates for DRCC}\label{sec:proof_gate}

In Sec.~\ref{sec:logical gates}, we introduced Hamiltonian-based logical gates for the DRCC-1 qubits. However, that construction requires the dissipator to remain active during the gate operation. Here, we consider an alternative approach for constructing logical single-qubit gates and propose explicit unitary operations that implement the logical $\hat Y_L$, $\hat Z_L$, $\hat S$, and $\hat T$ gates within the DRCC-1 codespace. Unlike Sec.\ref{sec:logical gates}, in this alternative construction, we do not need the dissipator to remain active during the gate operation.

\subsection{Logical $Y$}

We show that the operator $\hat G_{12}^{-}$ generates the logical gate $\hat X_L$ for even $N$ and the logical gate $i\hat Y_L$ for odd $N$ in the multimode RSB code. We begin by proving the following lemma.

\begin{lemma}
For the operator $\hat G_{12}^{-}$, the following relation holds:
\begin{align}
e^{ j\pi \hat G_{12}^{-}/2}\hat a_i^{\dagger}e^{ -j\pi \hat G_{12}^{-}/2}
=
(-1)^i \hat a_{i+1}^{\dagger}.
\end{align}
\end{lemma}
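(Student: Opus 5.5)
The plan is to compute the conjugation directly by solving the Heisenberg-type equations of motion. The indices are read cyclically, so $\hat a_3^{\dagger}\equiv \hat a_1^{\dagger}$. Define, for a real parameter $\theta$,
\begin{align*}
\hat A_i(\theta) = e^{j\theta \hat G_{12}^{-}}\hat a_i^{\dagger}e^{-j\theta \hat G_{12}^{-}},\qquad i\in\{1,2\}.
\end{align*}
The statement is then the claim that $\hat A_1(\pi/2)=-\hat a_2^{\dagger}$ and $\hat A_2(\pi/2)=+\hat a_1^{\dagger}$. Differentiating gives $\frac{d}{d\theta}\hat A_i(\theta)=j\,e^{j\theta \hat G_{12}^{-}}[\hat G_{12}^{-},\hat a_i^{\dagger}]e^{-j\theta \hat G_{12}^{-}}$. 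The whole problem therefore reduces to two elementary commutators.

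The first step is to evaluate these commutators using only $[\hat a_k,\hat a_l^{\dagger}]=\delta_{kl}$ and the definition $\hat G_{12}^{-}=j(\hat a_1\hat a_2^{\dagger}-\hat a_1^{\dagger}\hat a_2)$. This should give
\begin{align*}
[\hat G_{12}^{-},\hat a_1^{\dagger}] = j\,\hat a_2^{\dagger},\qquad [\hat G_{12}^{-},\hat a_2^{\dagger}] = -j\,\hat a_1^{\dagger}.
\end{align*}
Because these commutators close on the span of $\{\hat a_1^{\dagger},\hat a_2^{\dagger}\}$, the conjugated operators obey the linear system
\begin{align*}
\frac{d\hat A_1}{d\theta}=-\hat A_2,\qquad \frac{d\hat A_2}{d\theta}=\hat A_1,
\end{align*}
with initial conditions $\hat A_i(0)=\hat a_i^{\dagger}$.

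The second step is to solve this system, which is a rotation in the $(\hat a_1^{\dagger},\hat a_2^{\dagger})$ plane:
\begin{align*}
\hat A_1(\theta)&=\cos\theta\,\hat a_1^{\dagger}-\sin\theta\,\hat a_2^{\dagger},\\
\hat A_2(\theta)&=\cos\theta\,\hat a_2^{\dagger}+\sin\theta\,\hat a_1^{\dagger}.
\end{align*}
Uniqueness follows because the equations are linear with constant coefficients. As an alternative, one can sum the Baker--Campbell--Hausdorff series directly: the nested commutators alternate between $\hat a_1^{\dagger}$ and $\hat a_2^{\dagger}$, and their coefficients reproduce the sine and cosine series. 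Setting $\theta=\pi/2$ yields $\hat A_1=-\hat a_2^{\dagger}$ and $\hat A_2=\hat a_1^{\dagger}$, which is exactly $(-1)^i\hat a_{i+1}^{\dagger}$. As a consistency check, the result should agree with the beam-splitter transformation in Eq.~\eqref{eq:BS_transformation} at $\delta=\pi/2$, up to the sign convention of the exponent.

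The only real obstacle is sign bookkeeping. Three conventions interact: the factor $j$ inside $\hat G_{12}^{-}$, the orientation of the conjugation ($e^{+j\theta G}(\cdot)e^{-j\theta G}$ rather than the reverse), and the cyclic reading of the index $i+1$. To guard against errors, I would redo the commutators carefully term by term. I would also confirm the signs at first order in $\theta$, where $\hat A_1\approx \hat a_1^{\dagger}-\theta\hat a_2^{\dagger}$, before trusting the closed form.
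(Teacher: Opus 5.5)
Your proposal is correct and follows essentially the same route as the paper. The paper uses the same two commutators $[\hat G_{12}^{-},\hat a_1^{\dagger}]=j\hat a_2^{\dagger}$ and $[\hat G_{12}^{-},\hat a_2^{\dagger}]=-j\hat a_1^{\dagger}$, sums the Baker--Campbell--Hausdorff series (your stated alternative) to get the rotation $\cos\theta\,\hat a_1^{\dagger}-\sin\theta\,\hat a_2^{\dagger}$ and $\cos\theta\,\hat a_2^{\dagger}+\sin\theta\,\hat a_1^{\dagger}$, and then sets $\theta=\pi/2$ with cyclic indexing. Your differential-equation version is equivalent, and your sign bookkeeping is warranted: the double commutator is $[\hat G_{12}^{-},[\hat G_{12}^{-},\hat a_1^{\dagger}]]=+\hat a_1^{\dagger}$, whereas the paper prints $-\hat a_1^{\dagger}$, a typo that does not affect its final result.
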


\begin{proof}
Using the Baker--Campbell--Hausdorff (BCH) formula for two operators $(\hat U,\hat V)$,
\begin{align}
e^{\chi \hat U} \hat V e^{-\chi \hat U}
=
\hat V + \chi [\hat U,\hat V]
+\frac{\chi^2}{2!}[\hat U,[\hat U,\hat V]]
+ \cdots,
\end{align}
we choose
\begin{align}
\hat U = \hat G_{12}^{-}, \qquad
\hat V = \hat a_1^{\dagger}, \qquad
\chi = j\theta.
\end{align}

Using the bosonic commutation relation $[\hat a,\hat a^{\dagger}]=\hat I$, we obtain
\begin{align}
[\hat U,\hat V] &= j \hat a_2^{\dagger},\
[\hat U,[\hat U,\hat V]] = - \hat a_1^{\dagger}.
\end{align}
Substituting these relations into the BCH expansion gives
\begin{align}
e^{ j\theta \hat G_{12}^{-}}
a_1^{\dagger}
e^{ -j\theta \hat G_{12}^{-}}
=
\cos{\theta} \hat a_1^{\dagger}
-
\sin{\theta} \hat a_2^{\dagger}.
\end{align}
Similarly, choosing $\hat V=\hat a_2^{\dagger}$ gives
\begin{align}
e^{ j\theta \hat G_{12}^{-}}
\hat a_2^{\dagger}
e^{ -j\theta \hat G_{12}^{-}}
=
\cos{\theta} \hat a_2^{\dagger}
+
\sin{\theta} \hat a_1^{\dagger}.
\end{align}
Setting $\theta=\pi/2$ proves the lemma.
\end{proof}

Using this lemma, for odd $N$ we obtain
\begin{align}
e^{ j\pi \hat G_{12}^{-}/2}
\hat a_1^{\dagger 2mN}
\hat a_2^{\dagger (2n+1)N}
e^{ -j\pi \hat G_{12}^{-}/2}
&=
\hat a_2^{\dagger 2mN} \hat a_1^{\dagger (2n+1)N},\\
e^{ j\pi \hat G_{12}^{-}/2}\hat a_2^{\dagger 2mN}\hat a_1^{\dagger(2n+1)N}e^{ -j\pi \hat G_{12}^{-}/2}
&=
-\hat a_1^{\dagger 2mN} \hat a_2^{\dagger (2n+1)N}.
\end{align}
For even $N$, we instead obtain
\begin{align}
e^{ j\pi \hat G_{12}^{-}/2}
\hat a_1^{\dagger 2mN}
\hat a_2^{\dagger (2n+1)N}
e^{ -j\pi \hat G_{12}^{-}/2}
&=
\hat a_2^{\dagger 2mN} \hat a_1^{\dagger (2n+1)N},\\
e^{ j\pi \hat G_{12}^{-}/2}
\hat a_2^{\dagger 2mN}
\hat a_1^{\dagger (2n+1)N}
e^{ -j\pi \hat G_{12}^{-}/2}
&=
\hat a_1^{\dagger 2mN} \hat a_2^{\dagger (2n+1)N}.
\end{align}

Therefore, for odd $N$,
\begin{align}
\hat U_{\rm BS}
e^{ j\pi \hat G_{12}^{-}/2}
\hat U_{\rm BS}^{\dagger}
|i_L\rangle
=
(-1)^i |(i+1)_L\rangle.
\end{align}
Consequently, in the logical basis,
\begin{align}
\hat U_{\rm BS}
e^{ j\pi \hat G_{12}^{-}/2}
\hat U_{\rm BS}^{\dagger}
=
\begin{pmatrix}
0 & 1 \\
-1 & 0
\end{pmatrix}
=
j\hat Y_L.
\end{align}

For even $N$, we instead have
\begin{align}
\hat U_{\rm BS}
e^{ j\pi \hat G_{12}^{-}/2}
\hat U_{\rm BS}^{\dagger}
|i_L\rangle
=
|(i+1)_L\rangle,
\end{align}
which corresponds to the logical operator $\hat X_L$.

\subsection{Logical $\hat Z$}

We now discuss an alternative implementation of the logical $\hat Z_L$ gate and its extension to arbitrary rotational symmetry order $N$. In particular, we revisit the gate construction introduced in Ref.~\cite{multimode_RSB}, but with a code vector that excludes the beam splitter. 

\begin{lemma}\label{lem:ZL_alt}
For a two-mode RSB code with arbitrary rotational symmetry order $N$, the logical operator $\hat Z_L$ can be expressed as
\begin{align}\label{eq:Z_LN}
\hat Z_L^{(N)}
= e^{j\pi \hat n_1/N}.
\end{align}
\end{lemma}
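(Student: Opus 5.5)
The plan is to verify the claim directly on the Fock-space expansion of the code vectors without the beam splitter, i.e.\ on $|\bar 0_N\rangle=\sum_{mn}f_{mn}|2mN\rangle|(2n+1)N\rangle$ and $|\bar 1_N\rangle=\sum_{mn}f_{mn}|(2n+1)N\rangle|2mN\rangle$ from Eq.~\eqref{eq:m-RSB}. Since $e^{j\pi\hat n_1/N}$ is diagonal in the Fock basis, it acts on each term $|k\rangle|l\rangle$ by the phase $e^{j\pi k/N}$. It therefore suffices to evaluate this phase on the two photon-number sectors of mode 1 that appear in the code.

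First, every term of $|\bar 0_N\rangle$ has $k=2mN$ in mode 1. This gives the phase $e^{j\pi\cdot 2m}=1$, so $e^{j\pi\hat n_1/N}|\bar 0_N\rangle=|\bar 0_N\rangle$, independently of the coefficients $f_{mn}$. Second, every term of $|\bar 1_N\rangle$ has $k=(2n+1)N$ in mode 1. This gives the phase $e^{j\pi(2n+1)}=-1$, so $e^{j\pi\hat n_1/N}|\bar 1_N\rangle=-|\bar 1_N\rangle$. Restricted to the codespace, the operator is therefore $|\bar0_N\rangle\langle\bar0_N|-|\bar1_N\rangle\langle\bar1_N|=\hat Z_L^{(N)}$.

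Third, I would check that the operator maps the codespace to itself, which is immediate since both code vectors are eigenvectors. For DRCC-1 ($N=1$) the statement reduces to $e^{j\pi\hat n_1}$, i.e.\ mode-1 parity. This is consistent with $|C^{+}\rangle$ being even and $|C^{-}\rangle$ odd in Eqs.~\eqref{eq:DRCC-10}--\eqref{eq:DRCC-11}, so the odd-$N$ case needs no separate treatment, unlike in Ref.~\cite{multimode_RSB}.

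The only point needing care is the role of the beam splitter. With $\hat U_{\rm BS}$ included as in Eq.~\eqref{eq:m-RSB}, the logical operator would instead be the conjugate $\hat U_{\rm BS}e^{j\pi\hat n_1/N}\hat U_{\rm BS}^\dagger$. I would state explicitly that the lemma refers to the beam-splitter-free code vectors, as announced before the statement, and note that the conjugated form follows trivially. I expect no real obstacle beyond this bookkeeping. Optionally, one can also remark that $e^{j\pi\hat n_2/N}=-\hat Z_L^{(N)}$ on the codespace, by the same argument.
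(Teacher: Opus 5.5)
Your proposal is correct and follows essentially the same route as the paper: a direct Fock-basis check on the beam-splitter-free code vectors. There, $e^{j\pi\hat n_1/N}$ contributes the phase $e^{2jm\pi}=1$ on every term of $|\bar 0\rangle$ and $e^{j(2n+1)\pi}=-1$ on every term of $|\bar 1\rangle$, independently of $f_{mn}$. Your extra remarks are also correct and consistent with the paper: with $\hat U_{\rm BS}$ included, the logical operator becomes the conjugated one, and $e^{j\pi\hat n_2/N}$ acts as $-\hat Z_L^{(N)}$ on the codespace.
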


\begin{proof}
We examine the action of the operator in Eq.~\eqref{eq:Z_LN} on the two-mode RSB code vectors. Acting on $|0_L\rangle$, we obtain
\begin{align}
\hat Z_L^{(N)}|\bar 0\rangle
&=\sum_{m,n}
f_{mn}
e^{2jm\pi}
|2mN\rangle |(2n+1)N\rangle \nonumber\\
&=|\bar 0\rangle,
\end{align}
since $e^{2jm\pi}=1$. Similarly, for $|1_L\rangle$,
\begin{align}
\hat Z_L^{(N)}|\bar 1\rangle
&=\sum_{m,n}
f_{mn}
e^{j(2n+1)\pi}|(2n+1)N\rangle |2mN\rangle \nonumber\\
&=-|\bar 1\rangle,
\end{align}
because $e^{j(2n+1)\pi}=-1$. Thus, the operator $\hat Z_L^{(N)}$ acts as the logical Pauli-$Z$ operator for arbitrary $N$.
\end{proof}

\subsection{$\hat S$ and $\hat T$ gate}
Deviating from the gate teleportation protocol for generating the $\hat {S} $ and $\hat {T} $ gates discussed in~\cite{multimode_RSB}, in this Appendix we explore other possible methods to realize these gates. We investigate whether the unitary
\begin{align}
\hat U_G& = e^{j\theta \hat n_1} 
\end{align}
can generate other logical single-qubit gates. Previously, we showed that choosing $\theta=\pi$ implements the logical $\hat Z_L$ gate. Here, we further examine whether the choices $\theta=\pi/2$ and $\theta=\pi/4$ can yield the logical $\hat {S} $ and $\hat {T} $ gates, respectively.

To determine whether $\hat U_G$ acts as a logical gate within the DRCC-1 codespace, we evaluate its action on the logical basis states:
\begin{align}
 e^{j\theta \hat n_1} |\bar 0\rangle
&=
\frac{1}{N_+}\left(
|\alpha e^{j\theta}\rangle
+
|-\alpha e^{j\theta}\rangle
\right)
|C^{-}\rangle,\\
 e^{j\theta \hat n_1} |\bar 1\rangle
&=\frac{1}{N_-}\left(|\alpha e^{j\theta}\rangle-|-\alpha e^{j\theta}\rangle\right)|C^{+}\rangle.
\end{align}

From these expressions,
\begin{align}
\langle \bar \mu|
 e^{j\theta \hat n_1} |\bar \nu\rangle=0,\qquad\mu\neq\nu.
\end{align}
For $\mu=\nu$, we obtain

\begin{align}
\langle \bar \mu| e^{j\theta \hat n_1}|\bar{\mu}\rangle=
\begin{cases}
\frac{2e^{-|\alpha|^2}}{N_+^2}
\cosh\left(|\alpha|^2 e^{j\theta}\right),
& \mu=0,\\ \frac{2e^{-|\alpha|^2}}{N_-^2}
\sinh\left(|\alpha|^2 e^{j\theta}\right),
& \mu=1.
\end{cases}
\end{align}

In particular, for $\theta=\pi$,
\begin{align}
\langle \bar \mu|
 e^{j\pi \hat n_1} 
|\bar \mu\rangle
=\begin{cases}
1, & \mu=0,\\
-1, & \mu=1.
\end{cases}
\end{align}
Thus, for $\theta=\pi$, the operator $e^{j\theta \hat n_1} $
implements the logical $\hat Z$ gate.

More generally, to see whether the operator $e^{j \theta \hat n _1}$ produces a $\hat R_Z$ rotation onto the DRCC-1 subspace, we look into the projected action of it onto the codespace
\begin{align}
\hat {\bar P} e^{j\theta \hat n_1} \hat {\bar P}=f(\alpha,\theta)\begin{pmatrix}
1 & 0 \\
0 & \Phi(\alpha,\theta)
\end{pmatrix},
\end{align}
where $f(\alpha,\theta)$ is a global phase factor and
\begin{align}
\Phi(\alpha,\theta)
=\frac{\tanh\left(|\alpha|^2 e^{j\theta}\right)}{\tanh(|\alpha|^2)}.
\end{align}

We note that $|\Phi(\alpha,\theta)|\neq 1$ for arbitrary values of $\alpha$ and $\theta$. Therefore, $\hat U_G$ does not correspond to a logical gate for all choices of parameters. However, when $\theta=\pi/2$ and
\begin{align}
|\alpha|
\approx
{1.98, 2.656, 4.06,\ldots},
\end{align}
The operator realises the logical $\hat S$ gate. Furthermore, for small coherent-state amplitudes, namely $\alpha\lesssim 0.1$, choosing $\theta=\pi/4$ implements the logical $\hat T$ gate, also known as the $\pi/8$ gate.

\section{Leakage induced by  Hamiltonians implementing logical gates}\label{sec:leak_app}

In Sec.\ref{sec:logical gates}, we considered the beam-splitter Hamiltonian in Eq.\eqref{eq:gen_H} and constructed arbitrary single-qubit gates on the DRCC-1 codespace by applying an additional beam-splitter transformation to $\hat H_{\rm BS}$, as shown in Eq.\eqref{eq:BS-hamiltonian}. However, $\hat H_{\rm BS}$ does not act entirely within the DRCC-1 codespace and may therefore map the code vectors outside the codespace. For example, a 50:50 beam-splitter action on $|\bar 0\rangle= |C^+_{\alpha}C^-_{\alpha}\rangle$ maps it to a state $\sim |C^{+}_{\sqrt{2}\alpha}\rangle|0\rangle$, which does not belong to the codespace. Thus, to suppress this type of leakage, the dissipative stabilisation rate must be much larger than the strength of the beam-splitter interaction. In this Appendix, we quantify the weight of the beam-splitter Hamiltonian projected outside the codespace relative to the total Hamiltonian weight acting on the codespace through the following fraction: 
\begin{align}\label{eq:leak_frac}
W_{\text{leak}} = \frac{||\bar Q \hat{\tilde{ H}}_{\text{BS}}\bar  P||_{\text{HS}}^2}{||\hat{\tilde{ H}}_{\text{BS}} \bar P||_{\text{HS}}^2},
\end{align}
where $ \bar Q= \hat I - \bar{P}$, and $\hat I$ is the identity operator defined on the infinite-dimensional Hilbert space of the two encoding oscillators. We refer to $W_{\text{leak}}$ as the fractional off-codespace Hamiltonian weight. The norm $||(.)||_{\rm HS}$ denotes the Hilbert--Schmidt norm and is defined for an operator $\hat O$ as
\begin{align*}
||\hat O||_{\rm HS} & = \sqrt{\tr (\hat O^{\dagger}\hat O)}.
\end{align*}

Expanding $||\bar Q \hat{\tilde{ H}}_{\rm BS}\bar P||^2_{\rm HS}$, i.e., the numerator of Eq.\eqref{eq:leak_frac}, in terms of the projector $\bar{P}$, we obtain
\begin{align}
||\bar Q\hat{\tilde{ H}}_{\rm BS} \bar P||_{\rm HS}^2& = \tr(\bar P \hat{\tilde{ H}}_{\rm BS} \bar Q \hat{\tilde{ H}}_{\rm BS} \bar P )\nonumber\\
& = \tr (\bar P \hat{\tilde{ H}}_{\rm BS}^2\bar P)-\tr (\bar P\hat{\tilde{ H}}_{\rm BS}\bar P\hat{\tilde{ H}}_{\rm BS} \bar P) \nonumber\\
& = ||\hat{\tilde{ H}}_{\rm BS} \bar P ||_{\rm HS}^2 - ||\bar P\hat{\tilde{ H}}_{\rm BS}\bar P||_{\rm HS}^2.
\end{align}

To calculate the fractional off-codespace Hamiltonian weight, we first evaluate $\langle \bar \mu | \hat{\tilde{H}}^2(\delta,\phi) |\bar \mu \rangle$. For this purpose, we note that
\begin{align}\label{eq:crosst-1}
\langle \bar \mu |  (A_{\delta,\phi}\hat a_1^{\dagger}\hat a_2 +h.c) (\hat{n}_2-\hat{n}_1) |\bar \mu \rangle &=0 \\
\label{eq:cross-2}   \langle \bar \mu | (\hat{n}_2-\hat{n}_1) (A_{\delta,\phi}\hat a_1^{\dagger}\hat a_2 +h.c)  |\bar \mu \rangle &=0.
\end{align}
The number operators ($\hat {n} _ {i} $) preserve the parity sectors of the individual modes of the DRCC-1 code vectors, whereas the term $(A_{\delta,\phi}\hat {a} _ {1} ^ {\dagger}\hat {a} _ {2} +h.c)$ flips the parity of both modes. Hence, the expectation values in Eqs.\eqref{eq:crosst-1}-\eqref{eq:cross-2} vanish.

The expectation value $\langle \bar \mu | \hat{\tilde{H}}^2(\delta,\phi) |\bar \mu \rangle$ therefore takes the following simplified form:
\begin{align}
\langle \bar \mu | \hat{\tilde{H}}^2(\delta,\phi) |\bar \mu \rangle & = K_{\rm BS}^2\bigg(\langle\bar \mu | (A_{\delta,\phi} \hat a_1^{\dagger}\hat a_2 +h.c )^2 |\bar \mu \rangle \nonumber\\
& \qquad +\sin^2(2\delta)\cos^2\phi\langle \bar \mu |(\hat n_2 -\hat n_1)^2|\bar \mu \rangle\bigg).
\end{align}
For the first term, $\langle\bar \mu | (A_{\delta,\phi} \hat a_1^{\dagger}\hat a_2 +h.c )^2 |\bar \mu \rangle$, we obtain
\begin{align}
\langle\bar \mu | (A_{\delta,\phi} \hat a_1^{\dagger}\hat a_2 +h.c )^2 |\bar \mu \rangle 
& = 4|\alpha|^4{\rm Re}(A_{\delta,\phi})^2 \nonumber\\
&+ 2 |\alpha|^2|A_{\delta,\phi}|^2 \coth(2|\alpha|^2).
\end{align}
For $\langle\bar \mu |(\hat n_2 -\hat n_1)^2|\bar \mu\rangle$, we have
\begin{align}
\langle \bar \mu |(\hat n_2 -\hat n_1)^2|\bar \mu\rangle & = 2 |\alpha|^2 \coth(2 |\alpha|^2).
\end{align}
Therefore, using $|A_{\delta,\phi}|^2+ \sin^2 2\delta\cos^2\phi = 1$, we obtain 
\begin{align}
||\hat {\tilde H}(\delta,\phi)\bar P ||^2_{\rm HS} & = 2K_{\rm BS}^2 (4 |\alpha|^4{\rm Re} (A_{\delta,\phi})^2 +2 |\alpha|^2\coth(2|\alpha|^2)).
\end{align}

To evaluate $\tr [(\bar P\hat{ \tilde H}\bar P)^2]$, we first note that $\tr (\hat X_L)= \tr (\hat Y_L)= \tr (\hat Z_L)=0 $ and $\tr (\hat X_L^2)= \tr (\hat Y_L^2)= \tr (\hat Z_L^2)=\tr \bar P=2$. Using these identities, we obtain
\begin{align}
\tr [(\bar P\hat{ \tilde H}\bar P)^2] & =8 K_{\rm BS}^2 |\alpha|^4({\rm Re}(A_{\delta,\phi})^2+\csch^2 (2 |\alpha|^2)).
\end{align}
Therefore, the off-codespace Hamiltonian weight is
\begin{align}
||\bar Q \hat{\tilde{H}}(\delta,\phi)\bar P||^2_{\rm HS} & = 4K_{\rm BS}^2 (|\alpha|^2\coth(2|\alpha|^2)\nonumber\\
&- 2|\alpha|^4\csch^2 (2|\alpha|^2)).
\end{align}

The above expression shows that the off-codespace Hamiltonian weight is independent of the angles $(\delta,\phi)$. Consequently, $ ||\bar Q \hat{\tilde{H}}(\delta,\phi)\bar P||^2_{\rm HS}$ is independent of the particular gate operation generated by the choice of $(\delta,\phi)$. However, the fractional off-codespace Hamiltonian weight $W_{\rm leak}$ depends on the angles $\delta,\phi$, and therefore on the corresponding gate operation, as follows:
\begin{align}
W_{\rm leak}(\delta,\phi)& = \frac{|\alpha|^2\coth(2|\alpha|^2)- 2|\alpha|^4\csch^2 (2|\alpha|^2)}{2 |\alpha|^4{\rm Re} (A_{\delta,\phi})^2 + |\alpha|^2\coth(2|\alpha|^2)}.
\end{align}

For the $\hat R_X$ gate, obtained for $\delta= 0,\phi =\pi/2$, the fractional off-codespace Hamiltonian weight can be written as
\begin{align}
W_{\rm {leak}}(0,\pi/2) =1-  \frac{2|\alpha|^4\coth^2 (2|\alpha|^2)}{2|\alpha|^4 + |\alpha|^2\coth (2|\alpha|^2)}\overset{\alpha \rightarrow \infty}{\longrightarrow}\frac{1}{1+2|\alpha|^2}.
\end{align}
For the $\hat R_Z$ gate, which is obtained by setting $\delta=\pi/4$ and $\phi=0$ in $\hat{\tilde{ H}}$, the fractional off-codespace Hamiltonian weight is given by
\begin{align}
    W_{\rm {leak}}(\pi/4,0) &= 1- \frac{2|\alpha|^2\csch^2(2|\alpha|^2)}{\coth{(2|\alpha|^2)}}.
 \end{align}
From the above expression, we note that, in the large-$\alpha$ limit, $W_{\rm leak}(\pi/4,0) \rightarrow 1$, indicating that the dissipation rate should be much larger than $K_{\rm BS}$ in order to confine the dynamics to the codespace. 

\section{Faster $\hat R_{Z}(\theta )$ gates} \label{sec:fast_Rz}

\begin{figure}
    \centering
    \includegraphics[width=1\columnwidth]{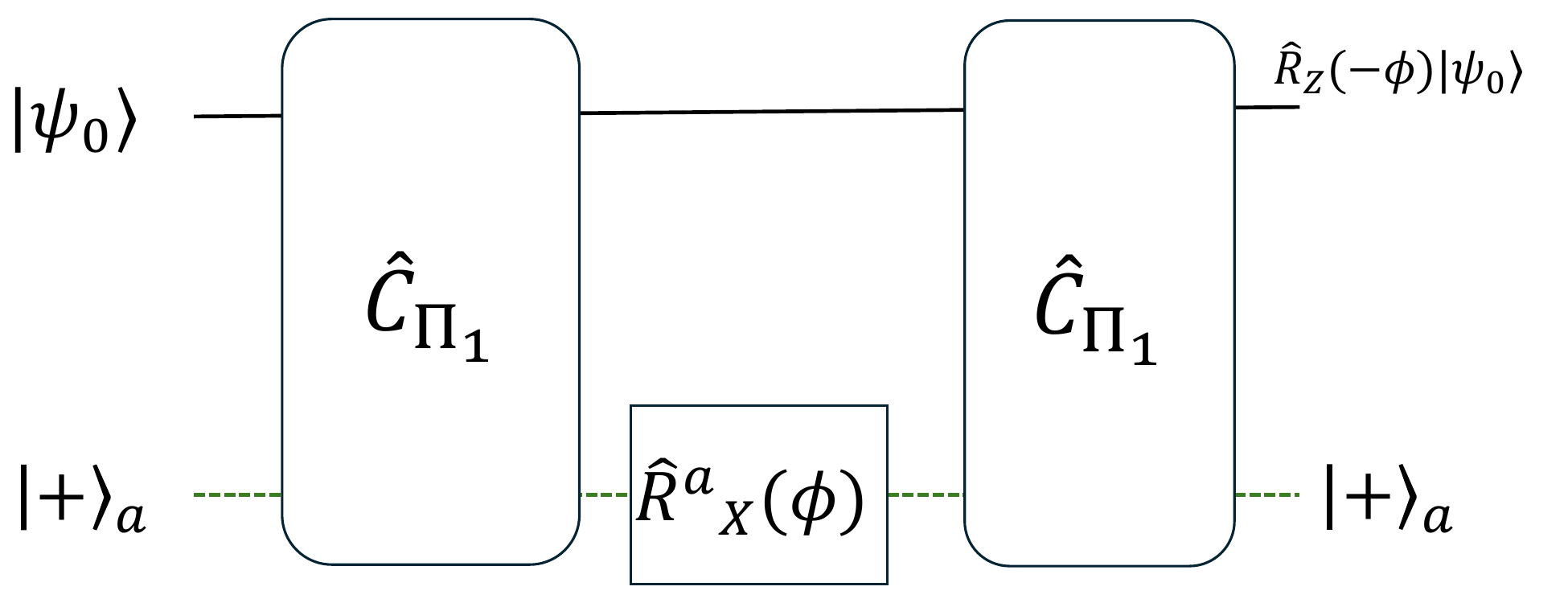}
    \caption{Auxiliary qubit-assisted implementation of the logical $\hat R_Z(\theta)$ gate on the DRCC-1 subspace. The auxiliary qubit is initialised in $|+\rangle_a$, and an auxiliary $\hat R_X^a(\theta)$ rotation is applied between two controlled individual-parity operations $\hat C_{\Pi_1}$. The sequence returns the auxiliary qubit to $|+\rangle_a$ while applying the required relative phase to the logical DRCC-1 state}
    \label{fig:fast_RZ}
\end{figure}
As noted in the main text and the preceding Appendices, implementing a faster $\hat R_Z(\theta)$ gate is challenging. In the Zeno-gate construction, the dissipation rate stabilising the DRCC-1 manifold must be much higher than $K_{\BS}$ because the $\hat R_Z(\theta)$ gate induces substantial leakage into the off-code space. Alternatively, the logical-$\hat Z$ gate can be implemented without dissipative stabilisation. However, this approach does not allow arbitrary rotations about the $Z_L$ axis. Here, we show that a fast $\hat R_Z(\theta )$ gate can be implemented on the DRCC-1 subspace using an auxiliary system. The implementation procedure is illustrated in Fig \ref{fig:fast_RZ}.

The gate sequence shown in Fig. \ref{fig:fast_RZ} can be understood as follows:

\begin{enumerate}
    \item First, we prepare the auxiliary system in the $|+\rangle$ state. The initial state of the joint system, consisting of the DRCC-1 system and the auxiliary system, is
    \begin{align}
        |\Psi_0\rangle & = |\psi\rangle|+\rangle_a,
    \end{align}
    where $|\psi\rangle = c_0 |\bar 0\rangle+ c_1 |\bar 1\rangle$. 
    \item We then apply the controlled operation 
    \begin{align}
        \hat C_{\Pi_1} & = \bar P\otimes |g\rangle\langle g| + e^{-j \pi \hat n_1} \otimes |e\rangle\langle e|.
    \end{align}
    
    \item Next we apply the $\hat R^a_X(\phi)$ gate -- an $\hat X$ rotation on the auxiliary qubit, and $\phi$ is the angle of rotation. The resulting state is then 
    \begin{align}
        |\Psi_1\rangle& =  c_0 |\bar 0\rangle e^{j \phi \sigma_x/2}|+\rangle_a+ c_1 |\bar 1\rangle e^{-j \phi \sigma_x/2} |-\rangle_a\nonumber\\
        & = e^{-j \phi /2}(c_0 |\bar 0\rangle|+\rangle_a+ c_1 |\bar 1\rangle e^{-j \phi} |-\rangle_a).
    \end{align}
    \item We then apply the operation $\hat C_{\Pi_1}$ once more, resulting in the state 
    \begin{align}
         |\Psi_2\rangle&=( c_0 |\bar 0\rangle+ c_1 e^{-j \phi}|\bar 1\rangle  )|+\rangle_a
    \end{align}
    \item Finally, we discard the auxiliary system. The total operation depicted in the Fig. \ref{fig:fast_RZ} therefore implements the $\hat R_{Z}(\theta)$ on the state $|\psi_0\rangle$ when the auxiliary rotation angle is fixed to $\phi = - 2\theta$.
\end{enumerate}

Thus, using an auxiliary qubit, we can implement a fast $\hat R_Z(\theta)$ gate without resorting to gate teleportation. We further note that, for the physical implementation considered here, the dissipative stabilisation of the DRCC-1 must be turned off during the application of the $\hat C_{\Pi_1}$ operation, as the number operator $\hat n_1$ does not commute with any of the dissipators used to stabilise the DRCC-1. Furthermore, a faulty auxiliary qubit can affect the gate performance.  

\section{Quantum error correction with a standard dual-rail code and an outer repetition code}
\label{sec:qec_drc}

The standard dual-rail code converts a single-photon-loss event into a detectable erasure by mapping the single-photon codespace to the vacuum state~$|00\rangle$~\cite{Yale_dual}. Knowledge of the erasure location does not change the distance~$d$ of the outer code, but allows a distance-$d$ code to correct up to $d-1$ located erasures, compared with $\lfloor(d-1)/2\rfloor$ arbitrary Pauli errors at unknown locations~\cite{S_puri_erasure_threshold,chadwick2025erasureminesweeperexploringhybriderasure}.

We consider two dual-rail blocks, $A=(1,2)$ and $B=(3,4)$, concatenated with an outer two-qubit repetition code:
\begin{align}
|0_{\rm rep}\rangle
&=|+^{\rm dr}\rangle_A|+^{\rm dr}\rangle_B,\\
|1_{\rm rep}\rangle
&=|-^{\rm dr}\rangle_A|-^{\rm dr}\rangle_B,
\label{eq:drc-rep-code}
\end{align}
where
\begin{align}
|\pm^{\rm dr}\rangle
=\frac{|01\rangle\pm|10\rangle}{\sqrt{2}}.
\end{align}
An arbitrary encoded state is
\begin{align}
|\psi_0^{\rm dr}\rangle
=c_0|0_{\rm rep}\rangle+c_1|1_{\rm rep}\rangle.
\end{align}

Suppose that a photon is lost from block $A$. Losses from its two modes produce
\begin{align}\label{eq:a_1drc}
\hat a_1|\psi_0^{\rm dr}\rangle
&=\frac{|00\rangle_A}{\sqrt{2}}
\left(c_0|+^{\rm dr}\rangle_B-c_1|-^{\rm dr}\rangle_B\right),\\
\hat a_2|\psi_0^{\rm dr}\rangle
&=\frac{|00\rangle_A}{\sqrt{2}}
\left(c_0|+^{\rm dr}\rangle_B+c_1|-^{\rm dr}\rangle_B\right).
\label{eq:drc-rail-dependent-loss}
\end{align}

Eqs.~\eqref{eq:a_1drc} and~\eqref{eq:drc-rail-dependent-loss}
show that $\hat a_1$ and $\hat a_2$ map the erased block to the same vacuum
state but produce opposite relative phases in the surviving block. The
corresponding expressions for losses from block $B$ are obtained by
exchanging blocks $A$ and $B$.

Let $\hat P_{\rm rep}$ denote the projector onto the repetition-code
subspace. The corresponding projected error products are
\begin{align}
\hat P_{\rm rep}\hat a_1^\dagger\hat a_1\hat P_{\rm rep}
&=
\hat P_{\rm rep}\hat a_2^\dagger\hat a_2\hat P_{\rm rep}
=\frac{1}{2}\hat P_{\rm rep},\\
\hat P_{\rm rep}\hat a_1^\dagger\hat a_2\hat P_{\rm rep}
&=
\hat P_{\rm rep}\hat a_2^\dagger\hat a_1\hat P_{\rm rep}
=\frac{1}{2}\hat Z_{\rm rep},
\end{align}
where
\begin{align}
\hat Z_{\rm rep}
=|0_{\rm rep}\rangle\langle0_{\rm rep}|
-|1_{\rm rep}\rangle\langle1_{\rm rep}|.
\end{align}
Since $\hat Z_{\rm rep}$ is not proportional to $\hat P_{\rm rep}$, the error set $\{\hat a_1,\hat a_2\}$ does not satisfy the KL conditions and therefore cannot be exactly corrected by a common recovery based only on the block-level syndrome. The following calculation illustrates the physical origin of this obstruction.

A block-level joint-parity check identifies block $A$ as erased but does not distinguish whether the loss was generated by $\hat a_1$ or $\hat a_2$. After syndrome detection, we reinitialise block $A$ in $|+^{\rm dr}\rangle_A$. The state conditioned on a particular loss process is
\begin{align}
|\psi^{\rm dr}_{1,\mu}\rangle
&=
|+^{\rm dr}\rangle_A
\left(
c_0|+^{\rm dr}\rangle_B
+s_\mu c_1|-^{\rm dr}\rangle_B
\right),\\
s_1&=-1,\qquad s_2=+1,
\nonumber
\end{align}
where $\mu=1,2$ labels the mode from which the photon was lost. In the ideal setting, $|+^{\rm dr}\rangle$ can be prepared deterministically by resetting the block to $|01\rangle$ and applying a suitably phased $50-50$ beam splitter.

One possible recovery operation is
\begin{align}
\hat U_{\rm dr}^{A\leftarrow B}
&=
\hat I_A\otimes|+^{\rm dr}\rangle_B\langle+^{\rm dr}|
+\hat Z_{\rm dr}^{(A)}
\otimes|-^{\rm dr}\rangle_B\langle-^{\rm dr}|,
\end{align}
where $\hat Z_{\rm dr}
=
|+^{\rm dr}\rangle\langle-^{\rm dr}|
+|-^{\rm dr}\rangle\langle+^{\rm dr}|.$
The action of the recovery unitary $\hat U^{A\leftarrow B}_{\rm dr}$ on the re-intialised state $|\psi_{1,\mu}^{\rm dr}\rangle$ gives
\begin{align}
\hat U_{\rm dr}^{A\leftarrow B}
|\psi_{1,\mu}^{\rm dr}\rangle
=
\begin{cases}
\hat Z_{\rm rep}|\psi_0^{\rm dr}\rangle, & \mu=1,\\
|\psi_0^{\rm dr}\rangle, & \mu=2.
\end{cases}
\end{align}

Thus, after the erased block is reinitialised, the same recovery operation recovers the original state following an $\hat a_2$ loss but leaves a logical phase error following an $\hat a_1$ loss. Because the block-level joint-parity syndrome does not distinguish the $\hat a_1$ and $\hat a_2$ loss processes, the required conditional phase correction cannot be applied. This explicitly illustrates the obstruction identified by the Knill--Laflamme conditions.

This limitation highlights a difference between the standard dual-rail code and DRCC-1 for the particular outer code and syndrome considered here. In the standard dual-rail code, losses from both modes map the erased block to the same vacuum state, while the resulting mode-dependent relative phase is not revealed by the block-level joint-parity check. In DRCC-1, by contrast, the error spaces generated by different single-mode losses remain distinguishable through the orthogonal leakage states $|\Phi^+\rangle$ and $|\Phi^-\rangle$. The protocol in
Sec.~\ref{sec:conc_code} exploits this leakage-space information, together
with the outer-code syndrome, to correct the dominant single-photon loss
events in the large-$\alpha$ regime.

 \section{Alternative QEC protocol with the DRCC qubits}\label{sec:alt_qec}

In Sec.\ref{sec:conc_code}, we discussed QEC with DRCC-1 qubits concatenated with the two-qubit and three-qubit repetition code. Here, we outline an alternative approach to executing the syndrome extraction and the subsequent error-correction protocol. 

To correct single-order photon-loss errors, we proceed with the following steps.
\begin{itemize}

\item[S1:] Suppose a single-photon loss  error occurs on the $i{-\rm th}$ mode. Thus, the states (in the large-$\alpha$ limit) after the noise are the following 
\begin{align}
    |\psi_1\rangle &\sim \begin{cases}
        (c_0|\Phi_-\rangle|\bar{0}\rangle+ c_1|\Phi_+\rangle|\bar{1}\rangle) & \mbox{if the error is } \, a_1\\
        (c_0|\Phi_+\rangle|\bar{0}\rangle+ c_1|\Phi_-\rangle|\bar{1}\rangle) & \mbox{if the error is } \, a_2\\
        (c_0|\bar{0}\rangle|\Phi_-\rangle+ c_1|\bar{1}\rangle|\Phi_+\rangle) & \mbox{if the error is } \, a_3\\
        (c_0|\bar{0}\rangle|\Phi_+\rangle+ c_1|\bar{1}\rangle|\Phi_-\rangle) & \mbox{if the error is } \, a_4
    \end{cases}
\end{align}
    \item[S2:] We perform the JPC to detect whether any single-photon loss error has occurred on the DRCC-1 qubits. If the auxiliary qubit stays is in the state $|g\rangle -|e\rangle$ after the JPC operation, it flags a photon loss error $\{\hat a_1,\hat a_2,\hat a_3,\hat a_4\}$ as these errors map the codespace to an even parity subspace (see  Fig. \ref{fig:DRCC_code_pic}).
    
    In the case of false flagging, the errors are second-order or third-order photon loss, leading to a logical error. On the other hand, with the simple dual-rail code, the JPC results in false negatives due to some errors that are not logical errors, as noted in Fig.~ \ref{fig:DRCC_code_pic}. For example, if  errors  $\hat a_1 \hat a_2$ and $\hat a_3 \hat a_4$ occur, due to the JPC, one obtains the following combined cavity-qubit state: 
    \begin{align}
        |\psi_2\rangle \sim \begin{cases}
           (\hat X_L \otimes \hat I) |\psi_{0}\rangle|g\rangle_{A_1}|g\rangle_{A_2} & \mbox{for} \quad \hat a_1\hat a_2\\
            (\hat I \otimes \hat X_L)|\psi_{0}\rangle|g\rangle_{A_1}|g\rangle_{A_2} & \mbox{for}  \quad \hat a_3 \hat a_4.
        \end{cases} 
    \end{align}
    Thus, these two errors share the same syndrome as the identity error and therefore remain undetected by the JPC. We note that a $ Z$-error on the qubit can also trigger false negatives. 
\begin{figure}
    \centering
    \includegraphics[width=1.0\columnwidth]{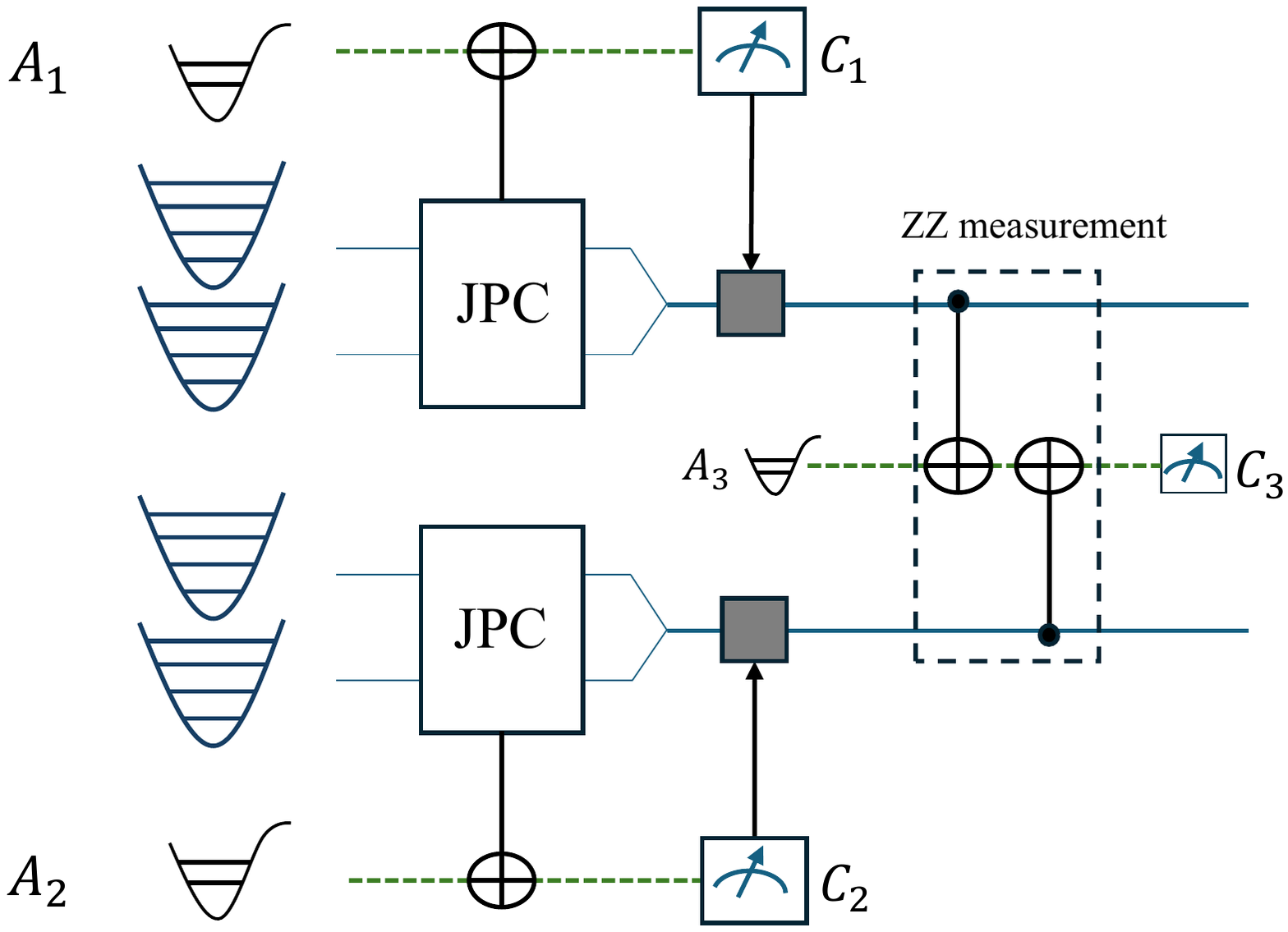}
    \caption{A schematic of the QEC protocol using the two-qubit repetition code with an inner DRCC-1 code without BS. The auxiliary systems $(A_1,A_2,A_3)$ are qubits  - all of them are initialized in the $|g\rangle - |e\rangle$ state. The JPC denotes the joint-parity-check module that acts on the DRCC-1 qubits. If any of the $(C_1,C_2)=0$, we apply the $\hat X_c$ operation defined in Eq.\eqref{eq:X_c} on the first or third cavity. The grey boxes represent these operations.} 
    \label{fig:two-qubit_rep_qec}
\end{figure}
    \item[S3:] Once we detect the loss on the $i^{\rm th}$ mode, we apply the $\hat  X_c = |C_+\rangle\langle C_-|+|C_-\rangle\langle C_+|$ to either the first mode  or the third mode. Note that even if the error occurs on the second or fourth mode, we apply the $\hat X_c$ correction on the first or third mode only. 

    After the $\hat X_c$ gate on the first or on the third mode, we have the following states
    \begin{align}
        |\psi_3\rangle & \sim \begin{cases}
            c_0|\bar{0}\rangle|
            \bar{0}\rangle + c_1 |\bar{1}\rangle|\bar{1}\rangle & \mbox{if the error is } \, \hat a_1\\
            c_0|\bar{1}\rangle|
            \bar{0}\rangle + c_1 |\bar{0}\rangle|\bar{1}\rangle & \mbox{if the error is } \, \hat a_2\\
             c_0|\bar{0}\rangle|
            \bar{0}\rangle + c_1 |\bar{1}\rangle|\bar{1}\rangle & \mbox{if the error is } \, \hat a_3\\
            c_0|\bar{0}\rangle|
            \bar{1}\rangle + c_1 |\bar{1}\rangle|\bar{0}\rangle & \mbox{if the error is } \, \hat a_4
        \end{cases}
    \end{align}

    \item[S4:] After applying  $\hat X_c$, we perform a parity check between the DRCC-1 qubits using a qubit auxiliary qubit (indicated as $A_3$ in  Fig. \ref{fig:two-qubit_rep_qec}), which is initialized in the $|g\rangle$ state. To measure the parity, we apply a pair of cavity-qubit $CX$ (see  Fig.\ref{fig:two-qubit_rep_qec}) defined in the Sec.\ref{sec:logical gates}. If, after applying the gate $\hat X_c$, the parity between the DRCC-1 qubits is odd, meaning that the state of the DRCC-1 qubits is either of the states $\{|\bar{0}\bar{1}\rangle,|\bar{1}\bar{0}\rangle\}$, the qubit flips to the state $|e\rangle$. If the DRCC-1 qubits have the same parity, then the auxiliary qubit $A_3$ remains in the state $|g\rangle$. Therefore, we have the following states after the parity measurement 
    \begin{align}
        |\psi_4\rangle &\sim \begin{cases}
           ( c_0|\bar{0}\rangle|
            \bar{0}\rangle + c_1 |\bar{1}\rangle|\bar{1}\rangle)|g\rangle & \mbox{for the error} \, \hat a_1\\
            (c_0|\bar{1}\rangle|
            \bar{0}\rangle + c_1 |\bar{0}\rangle|\bar{1}\rangle) |e\rangle & \mbox{for the error} \, \hat a_2\\
            ( c_0|\bar{0}\rangle|
            \bar{0}\rangle + c_1 |\bar{1}\rangle|\bar{1}\rangle)|g\rangle & \mbox{for the error } \, \hat a_3\\
           ( c_0|\bar{0}\rangle|
            \bar{1}\rangle + c_1 |\bar{1}\rangle|\bar{0}\rangle)|e\rangle & \mbox{for the error } \, \hat a_4.
        \end{cases}
    \end{align}
    Note that the parity check does not hamper the logical data. Thus, it behaves like a non-demolition measurement which is essential for a QEC protocol~\cite{chuang_nielsen}.

    \item[S5:] If the parity of both DRCC-1 qubits is the same, the qubit remains in the ground state $|g\rangle$. If the parity of the DRCC-1 qubits is odd, the qubit changes its state from $|g\rangle \rightarrow |e\rangle$. The measurement results on the qubit  $C_1,C_2,C_3$ give the syndromes to decode which error amongst $\{\hat a_1,\hat a_2,\hat a_3,\hat a_4\}$ has occurred. Table~\ref{tab:syn_tab} shows the complete syndrome used to correct errors.   
    \item[S6:] Once we know which error has occurred, we apply the logical $\hat X$ gate to either of the DRCC-1 qubits, which is nothing but a 50-50 beam-splitter, as shown in  Table \ref{tab:syn_tab}. 
\end{itemize}

A schematic of the QEC protocols with the two-qubit repetition code is shown in Fig.\ref{fig:two-qubit_rep_qec}. The grey boxes in Fig.\ref{fig:two-qubit_rep_qec} are the $\hat X_c$ gates applied on the first and the third mode depending on the values of the outcomes on $C_1, C_2$. 

\begin{table}[t!]
    \centering
    \caption{Syndrome table for the error correction with the two-qubit repetition code with an inner DRCC-1. }
    \begin{tabular}{c c | c |c c | c}
    \midrule
       $C_1$  & $C_2$ & Conditional operation & $C_3$ & Error & Correction\\
       \midrule
       0  & 1    &  & 0 & $\hat a_1$ & Identity \\
        &     & $\hat X_{c}$ on the first cavity &  & &\\
       0 & 1  &  & 1 & $\hat a_2$ & ${\rm BS}(\pi/4)_{12}$\\
       \midrule
        1  & 0    &  & 0 & $\hat a_3$ & Identity \\
        &     & $\hat X_{c}$ on the third cavity &  & &\\
       1 & 0  &  & 1 & $\hat a_4$ & ${\rm BS}(\pi/4)_{34}$\\
       \midrule
        1 & 1  & Nothing & 0 & No errors & Identity \\
        \midrule
    \end{tabular}
    
    \label{tab:syn_tab}
\end{table}

\section{QEC properties of the three-qubit repetition code with inner DRCC}\label{sec:three-qubit_rep_qec}

We know that the standard three-qubit repetition code is capable of correcting a single bit-flip or a single phase-flip error. In this Appendix, we show that the three-qubit repetition code with an inner DRCC-1 can additionally correct all photon-loss errors up to order $\mathcal{O}(\kappa_i^2 t^2)$, along with all first-order photon-loss errors.

As discussed in Sec.~\ref{sec:conc_code}, the two-qubit repetition code can correct all first-order photon-loss events in the large-$\alpha$ limit. Therefore, it follows that the three-qubit repetition code also corrects all first-order photon-loss events.

To analyse the QEC properties of the DRCC-1 against second-order photon-loss events, we first divide the possible second-order loss operators into the following two sets:
\begin{align}
   G_1
    &=
    \{\hat a_1^2,\hat a_2^2,\hat a_3^2,\hat a_4^2,\hat a_5^2,\hat a_6^2\},\\
   G_2
    &=
    \{\hat a_i \hat a_j\},
    \qquad
    i\neq j.
\end{align}

The operators in $G_1$ commute with the dissipator of the DRCC-1, and the code vectors $\{|\bar{0}\rangle,|\bar{1}\rangle\}$ are simultaneous eigenstates of these operators. Consequently, the errors in $G_1$ are not harmful to the encoded information.

Next, consider the operators of the form $\hat a_i \hat a_{i+1}$ belonging to $G_2$. These operators act as logical bit-flip errors on the DRCC-1 code vectors. Since the outer code is the three-qubit repetition code, such logical bit-flip errors are correctable.

For errors in $G_2$ involving the first and second DRCC-1 blocks, namely $ \{\hat a_1\hat a_3,\hat a_1\hat a_4,\hat a_2\hat a_3,\hat a_2\hat a_4\},$ we obtain the following action on the encoded states:
\begin{eqnarray}
\label{eq:a13,a14}
     \{\hat a_1\hat a_3,\hat a_1\hat a_4\}
     |\bar{0}_3/\bar{1}_3\rangle
     &\propto&
     \{|\Phi_{--}\rangle,|\Phi_{-+}\rangle\}
     \otimes
     |\bar{0}/\bar{1}\rangle,\\
\label{eq:a23,a24}
     \{\hat a_2\hat a_3,\hat a_2\hat a_4\}
     |\bar{0}_3/\bar{1}_3\rangle
     &\propto&
     \{|\Phi_{+-}\rangle,|\Phi_{++}\rangle\}
     \otimes
     |\bar{0}/\bar{1}\rangle,
\end{eqnarray}
where $ |\Phi_{ij}\rangle
    =
    |\Phi_i\rangle\otimes|\Phi_j\rangle.$ Similarly, for the operators $\{\hat a_5\hat a_3,\hat a_5\hat a_4,\hat a_6\hat a_3,\hat a_6\hat a_4\},$ we obtain
\begin{eqnarray}
\label{eq:a53,a63}
     \{\hat a_5\hat a_3,\hat a_6\hat a_3\}
     |\bar{0}_3/\bar{1}_3\rangle
     &\propto&
     |\bar{0}/\bar{1}\rangle
     \otimes
     \{|\Phi_{--}\rangle,|\Phi_{-+}\rangle\},\\
\label{eq:a54,a64}
     \{\hat a_5\hat a_4,\hat a_6\hat a_4\}
     |\bar{0}_3/\bar{1}_3\rangle
     &\propto&
     |\bar{0}/\bar{1}\rangle
     \otimes
     \{|\Phi_{+-}\rangle,|\Phi_{++}\rangle\}.
\end{eqnarray}

Finally, for the operators $ \{\hat a_1\hat a_5,\hat a_1\hat a_6,\hat a_2\hat a_5,\hat a_2\hat a_6\},$
the erroneous states are given by
\begin{align}
\label{eq:a15}
   \hat a_1\hat a_5
   |\bar{0}_3/\bar{1}_3\rangle
   &\propto
   |\Phi_-\rangle
   |\bar{0}/\bar{1}\rangle
   |\Phi_-\rangle,\\
\label{eq:a16}
   \hat a_1\hat a_6
   |\bar{0}_3/\bar{1}_3\rangle
   &\propto
   |\Phi_-\rangle
   |\bar{0}/\bar{1}\rangle
   |\Phi_+\rangle,\\
\label{eq:a25}
   \hat a_2\hat a_5
   |\bar{0}_3/\bar{1}_3\rangle
   &\propto
   |\Phi_+\rangle
   |\bar{0}/\bar{1}\rangle
   |\Phi_-\rangle,\\
\label{eq:a26}
   \hat a_2\hat a_6
   |\bar{0}_3/\bar{1}_3\rangle
   &\propto
   |\Phi_+\rangle
   |\bar{0}/\bar{1}\rangle
   |\Phi_+\rangle.
\end{align}

Using Eqs.~\eqref{eq:a13,a14}--\eqref{eq:a26}, we conclude that the three-qubit repetition code defined in Eq.~\eqref{eq:three-reps} corrects all second-order photon-loss errors.

\end{document}